\newtheorem{theorem}{Theorem}
\newtheorem{lemma}[theorem]{Lemma}
\newtheorem{definition}[theorem]{Definition}
\newenvironment{proof}{\textit{Proof:}~}{\hfill$\Box$\par\vskip1em}
\newcommand{\TOWERN}[1]{{ \hspace{-1.2mm} \begin{array}{c} {#1} \end{array} \hspace{-1.2mm}}}
\newcommand{\TOWERA}[2]{{ \hspace{-1.2mm} \begin{array}{c} {#1}\\ {#2} \end{array} \hspace{-1.2mm}}}
\newcommand{\TOWERB}[3]{{ \hspace{-1.2mm} \begin{array}{c} {#1}\\ {#2} \\ {#3} \end{array} \hspace{-1.2mm}}}
\newcommand{\TOWERC}[4]{{ \hspace{-1.2mm} \begin{array}{c} {#1}\\ {#2} \\ {#3} \\ {#4} \end{array} \hspace{-1.2mm}}}
\newcommand{\EMP}{\emptyset}
\newcommand{\RIGHT}{\rightarrow}
\newcommand{\LEFT}{\leftarrow}
\newcommand{\XRIGHT}[1]{\xrightarrow{#1}}
\newcommand{\BOTH}{\leftarrow \vee \rightarrow}
\newcommand{\CONF}{{\cal C}}
\newcommand{\RULE}{{\cal R}}
\newcommand{\ALG}{{\cal A}}
\newcommand{\PROB}{{\cal P}}
\newcommand{\TER}{{\cal T}}
\newcommand{\W}{{\textsf W}}
\newcommand{\G}{{\textsf G}}
\title{Ring Exploration with Myopic Luminous Robots}
\author{
Fukuhito Ooshita\\
Nara Institute of Science and Technology,\\
Graduate School of Science and Technology,\\
Takayama 8916-5, Ikoma, Nara, Japan\\
\texttt{f-oosita@is.naist.jp}\\
\and
S\'{e}bastien Tixeuil\\
Sorbonne Universit\'{e}, CNRS,\\
Laboratoire d'Informatique de Paris 6, LIP6,\\
FR-75005, Paris, France\\
\texttt{Sebastien.Tixeuil@lip6.fr}
}
\begin{document}

\maketitle

\begin{abstract}
We investigate exploration algorithms for autonomous mobile robots evolving in uniform ring-shaped networks. Different from the usual Look-Compute-Move (LCM) model, we consider two characteristics: myopia and luminosity. Myopia means each robot has a limited visibility. We consider the weakest assumption for myopia: each robot can only observe its neighboring nodes. Luminosity means each robot maintains a non-volatile visible light. We consider the weakest assumption for luminosity: each robot can use only two colors for its light. The main interest of this paper is to clarify the impact of luminosity on exploration with myopic robots.

As a main contribution, we prove that 1) in the fully synchronous model, two and three robots are necessary and sufficient to achieve perpetual and terminating exploration, respectively, and 2) in the semi-synchronous and asynchronous models, three and four robots are necessary and sufficient to achieve perpetual and terminating exploration, respectively. These results clarify the power of lights for myopic robots since, without lights, five robots are necessary and sufficient to achieve terminating exploration in the fully synchronous model, and no terminating exploration algorithm exists in the semi-synchronous and asynchronous models. 

We also show that, in the fully synchronous model (resp., the semi-synchronous and asynchronous models), the proposed perpetual exploration algorithm is universal, that is, the algorithm solves perpetual exploration from any solvable initial configuration with two (resp., three) robots and two colors. On the other hand, we show that, in the fully synchronous model (resp., the semi-synchronous and asynchronous models), no universal algorithm exists for terminating exploration, that is, no algorithm may solve terminating exploration from any solvable initial configuration with three (resp., four) robots and two colors.

\noindent {\bf Keywords:} autonomous mobile robots, deterministic exploration, discrete environments, limited visibility, visible light
\end{abstract}

\section{Introduction}

\subsection{Background and Motivation}
Studies about cooperation of autonomous mobile robots have attracted a lot of attention recently in the field of Distributed Computing. The main goal of those works is to characterize the minimum capabilities of robots that permit to achieve a given task. Since the pioneering work of Suzuki and Yamashita~\cite{Suzuki99:Distributed}, many results have been published in their Look-Compute-Move (LCM) model. In the LCM model, each robot repeats executing cycles of look, compute, and move phases. At the beginning of each cycle, the robot observes positions of other robots (look phase). According to its observation, the robot computes whether it moves somewhere or stays idle (compute phase). If the robot decides to move, it moves to the target position by the end of the cycle (move phase). To consider minimum capabilities, most studies assume that robots are identical (\emph{i.e.}, robots execute the same algorithm and cannot be distinguished), oblivious (\emph{i.e.}, robots have no memory of their past actions), and silent (\emph{i.e.}, robots cannot communicate with other robots explicitly). Indeed, communication among robots is done only in an implicit way by observing positions of other robots and moving to a new position. Previous works considered problem solvability of LCM robots in continuous environments (aka two- or three-dimensional Euclidean space) ~\cite{Flocchini05:Gathering,Fujinaga15:Pattern,Suzuki99:Distributed,Yamauchi17:Plane}, while others considered discrete environments (aka graph networks) ~\cite{DAngelo17:Unified,DAngelo14:Gathering,Flocchini13:Computing,Klasing10:Taking,Klasing08:Gathering}.

In this paper, we focus on robots evolving in graph networks. The most fundamental tasks in graph networks are gathering and exploration. The goal of gathering is to make all robots meet at a non-predetermined single node. Gathering has been studied for rings~\cite{DAngelo17:Unified,DAngelo14:Gathering,Klasing10:Taking,Klasing08:Gathering}, grids and trees~\cite{DAngelo16:Gathering}.
Two types of exploration tasks have been well studied: perpetual exploration requires robots to visit nodes so that every node is visited infinitely many times by a robot, and terminating exploration requires robots to terminate after every node is visited by a robot at least once. For example, perpetual exploration has been studied for rings~\cite{Blin10:Exclusive} and grids~\cite{Bonnet11:Asynchronous}, and terminating exploration has been studied for rings~\cite{Devismes13:Optimal,Flocchini13:Computing}, trees~\cite{Flocchini10:Remembering}, grids~\cite{Devismes12:Optimal}, tori~\cite{Devismes15:Optimal}, and arbitrary networks~\cite{Chalopin10:Network}. 

All aforementioned works in graph networks make the assumption that each robot observes all other robots in the networks. That is, each robot has a sensor that can obtain a global snapshot. However, this powerful ability somewhat contradicts the principle of very weak mobile entities. For this reason, recent studies consider the more realistic case of myopic robots~\cite{Datta13:Ring,Datta15:Enabling,Guilbault13:Gathering,Guilbault13:Gatheringline}. A myopic robot has limited visibility, \emph{i.e.}, it can see nodes (and robots on them) only within a certain fixed distance $\phi$. Datta et al.\ studied terminating exploration of rings for $\phi=1$~\cite{Datta13:Ring} and $\phi=2,3$~\cite{Datta13:Ring23}. Guilbault and Pelc studied gathering in bipartite graphs with  $\phi=1$~\cite{Guilbault13:Gathering}, and infinite lines with $\phi>1$~\cite{Guilbault13:Gatheringline}. Not surprisingly, in the weakest setting, \emph{i.e.}, $\phi=1$, robots can only achieve few tasks. It is shown~\cite{Datta13:Ring} that, when $\phi=1$ holds, five robots are necessary and sufficient to achieve terminating exploration in the fully synchronous (FSYNC) model. On the other hand, no terminating exploration algorithm exists in the semi-synchronous (SSYNC) and asynchronous (ASYNC) models. Also,  gathering~\cite{Guilbault13:Gathering} is possible when $\phi=1$ only if robots initially form a star.

Since most results for myopic robots with $\phi=1$ are negative, a natural question is which additional assumptions can improve task solvability. In this paper, we focus on a non-volatile visible light~\cite{Das16:Autonomous} as an additional assumption. A robot endowed with such a light is called a luminous robot. Each luminous robot is equipped with a light device that can emit a constant number of colors to other robots, a single color at a time. The light color is non-volatile, so it can be used as a constant-space memory. For non-myopic luminous robots, the power of lights is well understood~\cite{Das16:Autonomous,DEmidio16:Characterizing,Heriban18:Optimally}. For example, if each robot has a five colors light, the difference between the asynchronous model and the semi-synchronous model disappears~\cite{Das16:Autonomous}. However, to the best of our knowledge, the impact of lights on myopic robots has not been studied yet.

\subsection{Our Contributions}
We focus on ring exploration and  the impact of lights on myopic robots with $\phi=1$. We consider the weakest assumption for lights: each robot can use only two colors for its light. Table \ref{table:summary} summarizes our contributions and related works. Note that robots with no light are equivalent to robots with a single color light.

\begin{table}[t]
\centering
\caption{Ring exploration with myopic robots.}
\label{table:summary}
\begin{tabular}{|c|c|c|c|c|c|c|}
\hline
 & & & & & \multicolumn{2}{|c|}{\#robots} \\ \cline{6-7}
Reference & Exploration & Synchrony & $\phi$ & \#colors & necessary & sufficient \\ \hline
\cite{Datta13:Ring} & terminating & FSYNC & 1 & 1 & 5 & 5 \\ \hline
\cite{Datta13:Ring} & terminating & SSYNC \& ASYNC & 1 & 1 & \multicolumn{2}{|c|}{impossible} \\ \hline
\cite{Datta15:Enabling} & terminating & SSYNC \& ASYNC & 2 & 1 & 5 & 7 \\ \hline
\cite{Datta15:Enabling} & terminating & SSYNC \& ASYNC & 3 & 1 & 5 & 5 \\ \hline
This paper & perpetual   & FSYNC & 1 & 2 & 2 & 2 \\ \hline
This paper & terminating & FSYNC & 1 & 2 & 3 & 3 \\ \hline
This paper & perpetual   & SSYNC \& ASYNC & 1 & 2 & 3 & 3 \\ \hline
This paper & terminating & SSYNC \& ASYNC & 1 & 2 & 4 & 4 \\ \hline
\end{tabular}
\end{table}

As a main contribution, we prove that \emph{(i)} in the fully synchronous model, two and three robots are necessary and sufficient to achieve perpetual and terminating exploration, respectively, and \emph{(ii)} in the semi-synchronous and asynchronous models, three and four robots are necessary and sufficient to achieve perpetual and terminating exploration, respectively. These results clarify the power of lights for myopic robots since, without lights, five robots are necessary and sufficient to achieve terminating exploration in the fully synchronous model, and no terminating exploration algorithm exists in the semi-synchronous and asynchronous models. Interestingly, even if robots can observe nodes up to distance three (\emph{i.e.}, $\phi=3$), five robots are required to achieve terminating exploration without light. This means that there exist some tasks that myopic luminous robots with small visibility can achieve, but that non-luminous robots with larger visibility cannot.

Similarly to previous works for myopic robots, all algorithms proposed in this paper assume some specific initial configurations because most configurations are not solvable. For example, when myopic robots are deployed so that no robot can observe other robots, they cannot achieve exploration. However, our perpetual exploration algorithms achieve the best possible property, that is, they are universal. This means that, in the fully synchronous model (resp., the semi-synchronous and asynchronous models), the proposed algorithm solves perpetual exploration from any solvable initial configuration with two (resp., three) robots and two colors. As for terminating exploration, we show that no universal algorithm exists. That is, in the fully synchronous model (resp., the semi-synchronous and asynchronous models), no algorithm may solve terminating exploration from any solvable initial configuration with three (resp., four) robots and two colors. 

Due to space limitation, we omit some of proofs. The omitted proofs are given in the appendix.

\section{Preliminaries}

\subsection{System model}
The system consists of $n$ nodes and $k$ mobile robots. 
The nodes $v_0, v_1, \ldots, v_{n-1}$ form an undirected and unoriented ring-shaped graph, where a link exists between $v_i$ and $v_{i+1}$, for $i<n$, and between $v_{n-1}$ and $v_0$. 
For simplicity we consider mathematical operations on node indices as operations modulo $n$. Neither nodes nor links have identifiers or labels, and consequently robots cannot distinguish nodes and links. 
Robots do not know $n$, the size of the ring. 
Robots occupy some nodes of the ring. The distance between two nodes is the number of links in a shortest path between the nodes. The distance between two robots $a$ and $b$ is the distance between two nodes occupied by $a$ and $b$. Two robots $a$ and $b$ are neighbors if the distance between $a$ and $b$ is one. A set $S$ of robots is \emph{connected} if the induced subgraph of nodes occupied by the robots in $S$ is connected; otherwise, $S$ is disconnected.

Robots we consider have the following characteristics and capabilities. Robots are {\it identical}, that is, robots execute the same deterministic algorithm and cannot be distinguished based on their appearance (in particular, their do \emph{not} have unique identifiers). Robots are {\it luminous}, that is, each robot has a light (or state) that is visible to itself and other robots. A robot can choose the color of its lights from a discrete set $Col$. When the set $Col$ is finite, we denote by $\kappa$ the number of available colors (\emph{i.e.}, $\kappa=|Col|$). 
Robots have no other persistent memory and cannot remember the history of past actions. Robots cannot communicate with other robots explicitly, however they can communicate implicitly by observing positions and colors of other robots (for collecting information), and by changing their color and moving (for sending information). Each robot $r$ can observe positions and colors of robots within a fixed distance $\phi$ ($\phi>0$) from its current position. Since robots are identical, they share the same $\phi$. If $\phi=\infty$, robots can observe all other robots in the ring. If $\phi=1$, robots are \emph{myopic}, that is, they can only observe robots that are located at neighboring nodes.

Each robot executes an algorithm by repeating three-phases cycles: Look, Compute, and Move (L-C-M). During the \emph{Look} phase, the robot observes positions and colors of robots within distance $\phi$. During the \emph{Compute} phase, the robot computes its next color and movement according to the observation in the Look phase. The robot may change its color at the end of the Compute phase. If the robot decides to move, it moves to a neighboring node during the \emph{Move} phase. 
To model asynchrony of executions, we introduce the notion of {\it scheduler} that decides when each robot executes phases. When the scheduler makes robot $r$ execute some phase, we say the scheduler activates the phase of $r$ or simply activates $r$. We consider three types of synchronicity: the FSYNC (full-synchronous) model, the SSYNC (semi-synchronous) model, and the ASYNC (asynchronous) model. In the FSYNC model, the scheduler executes full cycles of all robots synchronously and concurrently. In the SSYNC model, the scheduler selects a non-empty subset of robots and executes full cycles of the selected robots synchronously and concurrently. In the ASYNC model, the scheduler executes cycles of robots asynchronously. Note that in the ASYNC model, a robot $r$ can move based on an outdated view observed previously by $r$. Throughout the paper we assume that the scheduler is {\it fair}, that is, each robot is activated infinitely often. We consider the scheduler as an adversary. That is, we assume that the scheduler is omniscient (it knows robot positions, colors, algorithms, etc.), and tries to activate robots in such a way that they fail executing the task. 

In the sequel, $M_i(t)$ denotes the multiset of colors of robots located in node $v_i$ at instant $t$. If $v_i$ is not occupied by any robot at $t$, then $M_i(t)=\emptyset$ holds, and $v_i$ is \emph{free} at instant $t$. Then, $v_i$ is a \emph{tower} at instant $t$ if $|M_i(t)|\ge 2$.
A {\it configuration} $C(t)$ of the system at instant $t$ is defined as $C(t)=(M_0(t),M_1(t),\ldots,M_{n-1}(t))$. If $t$ is clear from the context, we simply write $C=(M_0,M_1,\ldots,M_{n-1})$.
If there exists an index $x$ such that $M_{x+i}=M_{x-i}$ holds for any $i$, or if $M_{x+i}=M_{x-(i+1)}$ holds for any $i$ (\emph{i.e.}, there exists at least one axis of symmetry in the configuration), configuration $C$ is called {\it symmetric}. 

When a robot observes its environment, it gets a {\it view} up to distance $\phi$. Consider a robot $r$ on node $v_i$; then, $r$ obtains two views: the forward view and the backward view. The forward and backward views of $r$ are defined as $V_f=(c_{r},M_{i-\phi},\ldots,M_{i-1},M_i,M_{i+1},\ldots,M_{i+\phi})$, and $V_b=(c_{r},M_{i+\phi},\ldots,M_{i+1},M_i,M_{i-1},\ldots,M_{i-\phi})$, respectively, where $c_{r}$ denotes $r$'s color.
Since we assume unoriented rings (where robots may not share the same notion of left and right), each robot cannot distinguish its forward view from its backward view.
If the forward view and the backward view of $r$ are identical, then $r$'s view is \emph{symmetric}. In this case, $r$ cannot distinguish between the two directions when it moves, and the scheduler decides which direction $r$ moves to. If $r$ observes no other robot in its view, $r$ is \emph{isolated}.

\subsection{Algorithm, execution, and problem}

An algorithm is described as a set of rules. Each rule is represented in the following manner $<Label>:<Guard>::<Action>$. The guard $<Guard>$ is a possible view obtained by a robot. If a forward or backward view of robot $r$ matches a guard in an algorithm, we say $r$ is enabled. We also say the corresponding rule $<Label>$ is enabled. If a robot is enabled, the robot may change its color and move based on the corresponding action $<Action>$ during the Compute and Move phases. 

For an infinite sequence of configurations $E=C_0,C_1,\ldots,C_t,\ldots$, we say $E$ is an execution from initial configuration $C_0$ if, for every instant $t$, $C_{t+1}$ is obtained from $C_t$ after some robots execute phases. We say $C_i$ is the $i$-th configuration of execution $E$.

A problem $\PROB$ is defined as a set of executions: An execution $E$ solves problem $\PROB$ if $E\in \PROB$ holds. An algorithm $\ALG$ solves problem $\PROB$ from initial configuration $C_0$ if any execution from initial configuration $C_0$ solves problem $\PROB$. We simply say an algorithm $\ALG$ solves problem $\PROB$ if there exists an initial configuration $C_0$ such that $\ALG$ solves $\PROB$ from $C_0$. For configuration $C$ and problem $\PROB$, $C$ is solvable for $\PROB$ if there exists an algorithm (specific to $C$) that solves $\PROB$ from initial configuration $C$. Let $C_s(\PROB)$ be a set of all configurations solvable for $\PROB$. We say algorithm $\ALG$ is universal with respect to problem $\PROB$ if $\ALG$ solves $\PROB$ from any initial configuration in $C_s(\PROB)$.

\subsection{Exploration problems}

In this paper, we consider perpetual exploration problem and terminating exploration problem in case of $\phi=1$.

\begin{definition}[Perpetual exploration problem]

{\it Perpetual exploration} is defined as a set of executions $E$
such that every node is infinitely many times visited by some robot in $E$.
\end{definition}

\begin{definition}[Terminating exploration problem]
{\it Terminating exploration} is defined as a set of executions $E$ such that 1) every node is visited by at least one robot in $E$ and 2) there exists a suffix of $E$ such that no robots are enabled.
\end{definition}

\subsection{Descriptions}
Let $C=(M_0,\ldots,M_{n-1})$ be a configuration. We say $C'=(M'_0,\ldots,M'_{n'-1})$ is a sub-configuration of $C$ if there exists $x$ such that $M_{x+i}=M'_i$ holds for any $i$ ($0\le i\le n'-1$). In this case, we say $n'$ is the length of sub-configuration $C'$. We sometimes describe a sub-configuration $C'=(M'_0,\ldots,M'_{n'-1})$ by listing all colors in $M'_i$ as the $i$-th column. That is, when $M'_i=\{c_1^i,\ldots,c_{|M'_i|}^i\}$ holds for each $i$ ($0\le i\le n'-1$), we describe $C'$ as follows:
\[
\TOWERC{c_1^0}{c_2^0}{\vdots}{c_{|M'_0|}^0}
\TOWERC{c_1^1}{c_2^1}{\vdots}{c_{|M'_1|}^1}
\cdots
\TOWERC{c_1^{n'-1}}{c_2^{n'-1}}{\vdots}{c_{|M'_{n'-1}|}^{n'-1}}
\]
When $M'_i=\emptyset$ holds, we write $\EMP$ as the $i$-th column. If $h$ free nodes exist successively, we sometimes write $\EMP^h$ instead of writing $h$ columns with $\EMP$. For simplicity, when $C'$ is a sub-configuration of $C$ and all robots appear in $C'$, we use $C'$ instead of $C$ to represent configuration $C$. We also use this description to represent views of robots.

Throughout the paper, we consider the case of $\phi=1$. We describe a rule in an algorithm in the following manner:
\[
\RULE_{rule}:\ \ 
\TOWERC{c_{-1,1}}{c_{-1,2}}{\vdots}{c_{-1,m_{-1}}}
\TOWERC{c_{0,1}}{c_{0,2}}{\vdots}{(c_{0,m_{0}})}
\TOWERC{c_{1,1}}{c_{1,2}}{\vdots}{c_{1,m_{1}}}
::\ c_{new},Movement
\]
Notation $\RULE_{rule}$ is a label of the rule. The middle part represents a guard. This represents a view $V=(c_{0,m_{0}},M_{-1},M_0,M_1)$, where $M_i=\{c_{i,1},\ldots,c_{i,m_i}\}$ holds for $i\in\{-1,0,1\}$. Intuitively, each column represents colors of robots on a single node and a color within parentheses represents its current color. If a forward or backward view of robot $r$ is equal to $V$, $r$ is enabled. In this case, $r$ can execute an action represented by $c_{new},Movement$. Notation $c_{new}$ represents a new color of the robot, and $Movement$ represents the movement. Notation $Movement$ can be $\bot$, $\LEFT$, $\RIGHT$, or $\BOTH$: 1) $\bot$ implies a robot does not move, 2) $\LEFT$ (resp., $\RIGHT$) implies a robot moves toward the node such that a set of robot colors is $M_{-1}$ (resp., $M_{1}$), and 3) $\BOTH$ implies a robot moves toward one of two directions (the scheduler decides the direction). When the view $V$ described in a guard is symmetric, $Movement$ should be either $\bot$ or $\BOTH$. As an example, consider the following rule.
\[
\RULE_{ex}:\ \ 
\TOWERA{}{\EMP} \TOWERA{\G}{(\W)} \TOWERA{}{\G} ::\ \G,\RIGHT
\]
Robot $r$ is enabled by $\RULE_{ex}$ if 1) the color of $r$ is $\W$, 2) the current node is occupied by two robots with colors $\G$ and $\W$, 3) one neighboring node is occupied by no robot, and 4) another neighboring node is occupied by a robot with color $\G$. If $r$ is enabled by $\RULE_{ex}$, $r$ changes its color to $\G$ and moves toward the node occupied by a robot with color $\G$.

\section{Full-synchronous Robots}

\subsection{A universal perpetual exploration algorithm for two robots with two colors}

In this subsection, we provide a perpetual exploration algorithm for two robots with two colors. Note that, since one robot cannot achieve perpetual exploration clearly because the direction of its movement is decided by the scheduler, two robots are necessary to achieve perpetual exploration. A set of colors is $Col=\{\G,\W\}$. The algorithm is given in Algorithm \ref{alg:fp2}. In the initial configuration, two robots with colors $\G$ and $\W$ stay at neighboring nodes. In this algorithm, the robot with color $\G$ moves against the other robot, and the robot with color $\W$ moves toward the other robot. This implies two robots move in the same direction. Since they move synchronously, the views of the two robots are not changed. Hence, the two robots continue to move and achieve perpetual exploration. Clearly we have the following theorem.

\begin{theorem}
\label{thm:fp-alg-cor}
In case of $\phi=1$ and $k=2$, Algorithm \ref{alg:fp2} solves perpetual exploration from initial configurations $\G\W$ and $\W\G$ for $n\ge 2$ in the FSYNC model.
\end{theorem}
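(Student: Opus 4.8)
The plan is to show that Algorithm~\ref{alg:fp2} forces the two robots to move in lockstep forever around the ring, so that every node is visited infinitely often. First I would pin down exactly what the two rules of the algorithm say: the robot colored $\G$ is enabled only when it sees the other robot in one neighboring node, and it moves away from that robot (keeping color $\G$); the robot colored $\W$ is enabled only when it sees the other robot in one neighboring node, and it moves toward that robot (keeping color $\W$). In the initial configuration $\G\W$ (or $\W\G$), the two robots are neighbors, each sees the other, and both are enabled. The key geometric observation is that ``$\G$ moves away from $\W$'' and ``$\W$ moves toward $\G$'' designate the \emph{same} direction along the ring: the direction pointing from $\G$ to $\W$. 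Hence both robots step once in that direction.

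The main step is an invariant argument. I would prove by induction on $t$ that the configuration $C_t$ is again of the form $\G\W$ (i.e.\ two neighboring robots with these colors, in the same cyclic orientation as $C_0$), just shifted by one node. Because the scheduler is FSYNC, both robots execute a full Look-Compute-Move cycle at every step simultaneously: they look at $C_t$, both find themselves enabled with the symmetric-looking but consistently-oriented rules above, and both move one node in the common direction. Since they move the same amount in the same direction, their mutual distance stays $1$ and their relative position is preserved, so $C_{t+1}$ is the translate of $C_t$ by one node and is again of type $\G\W$. (One should note that the guard view here is \emph{not} symmetric — each robot sees the other on exactly one side and a free node on the other — so there is no scheduler nondeterminism in the direction; this is what makes the FSYNC argument clean.) A minor degenerate case to dispatch is $n=2$: then the two neighboring nodes are also the only two nodes, the ``away'' and ``toward'' neighbors coincide, and one checks directly that the robots still swap/rotate so that both nodes keep being visited; this is why the statement allows $n\ge 2$ rather than $n\ge 3$.

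Finally, from the invariant I conclude perpetual exploration: after $n$ steps the pair has rotated completely around the ring, so each of the $n$ nodes has been occupied by a robot at least once; iterating, each node is visited infinitely often, which is exactly the definition of perpetual exploration. I do not expect any real obstacle here — the only thing requiring care is verifying that the two rules of Algorithm~\ref{alg:fp2} indeed match the forward/backward views arising in every $C_t$ and in no other way (so that no unintended rule fires and the execution is uniquely determined), plus the $n=2$ boundary check. The theorem is labeled ``clearly'' in the text precisely because once the lockstep invariant is stated, the rest is immediate.
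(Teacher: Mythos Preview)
Your proposal is correct and follows exactly the lockstep-invariant argument the paper sketches (the paper itself merely says ``clearly''). Two small slips worth fixing: first, the common direction of motion is from $\W$ toward $\G$, not from $\G$ toward $\W$; second, your treatment of $n=2$ is off --- in a $2$-node ring each robot sees the other on \emph{both} sides, so neither view matches the guard $\EMP\,(\cdot)\,X$, no rule is enabled, and the robots never move. Perpetual exploration still holds in that case, but for the trivial reason that both nodes remain occupied at every instant, not because the robots ``swap/rotate''.
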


\begin{algorithm}[t]
\caption{Fully-Synchronous Perpetual Exploration for $k=2$}
\label{alg:fp2}
\begin{algorithmic}
\renewcommand{\algorithmicrequire}{\textbf{Initial configurations}}
\REQUIRE
\STATE $\G\W$ and $\W\G$
\renewcommand{\algorithmicrequire}{\textbf{Rules}}
\REQUIRE
\STATE $0\G\W:\ \ \TOWERN{\EMP}\TOWERN{(\G)}\TOWERN{\W} ::\ \G,\LEFT$
\STATE $0\W\G:\ \ \TOWERN{\EMP}\TOWERN{(\W)}\TOWERN{\G} ::\ \W,\RIGHT$
\end{algorithmic}
\end{algorithm}

In the following, we show that other initial configurations are unsolvable for $n\ge 6$. This implies Algorithm \ref{alg:fp2} is universal with respect to perpetual exploration for $n\ge 6$ in case of $\phi=1$ and $k=2$. To prove the impossibilities, we first prove Lemma \ref{lem:imp-territory}. This lemma will be used for many impossibility proofs not only in the FSYNC model but also in the SSYNC and ASYNC models. For configuration $C$, we define $V_r(C)$ as a set of nodes occupied by at least one robot. We say a set of two neighboring nodes $T=\{v_i,v_{i+1}\}$ is a territory of robots on node $v$ if $v\in T$ holds. We say a territory set $\TER$ is independent if, for every pair of territories $T_1, T_2\in \TER$, the distance between any node in $T_1$ and any node in $T_2$ is at least two (See Fig.\,\ref{fig:territoryset}).

\begin{figure}[t]
    \centering
    \includegraphics[scale=0.35]{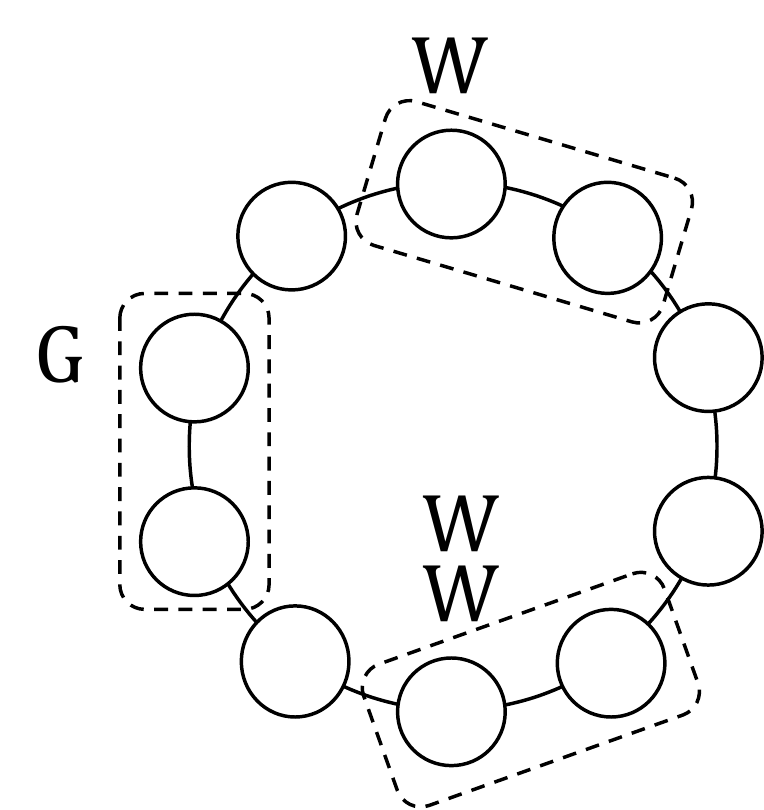}
    \caption{An example of an independent territory set. Notations $\W$ and $\G$ represent robots with colors $\W$ and $\G$, respectively.}
    \label{fig:territoryset}
\end{figure}

\begin{lemma}
\label{lem:imp-territory}
Consider a configuration $C$ such that no two nodes in $V_r(C)$ are neighbors and, for every node $v\in V_r(C)$, robots on $v$ have the same color. If there exists a territory set $\TER$ such that $\TER$ is independent and every node in $V_r(C)$ belongs to some territory in $\TER$, robots on $v\in V_r(C)$ cannot go out of their territory in $\TER$ after configuration $C$ in the FSYNC, SSYNC, and ASYNC models.
\end{lemma}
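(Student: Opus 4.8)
The plan is to argue by induction on time, showing that the stated independence property of the territory set is preserved under any single synchronous step (for FSYNC) or any single partial step (for SSYNC/ASYNC), and hence no robot can ever leave its territory. First I would observe what a robot on a node $v\in V_r(C)$ can see in configuration $C$: by hypothesis no two occupied nodes are neighbors, so every such robot is \emph{isolated} — its view up to distance $1$ contains only empty columns on both sides, i.e.\ its view is $\TOWERN{\EMP}\TOWERN{(c)}\TOWERN{\EMP}$ where $c$ is the common color on $v$ (all robots on $v$ share color $c$ by hypothesis). In particular the view is symmetric, so the only movements the algorithm may prescribe are $\bot$ or $\BOTH$; if it prescribes $\BOTH$, the scheduler chooses the direction adversarially.

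Now I would analyze one step. Fix a robot (or a co-located group of robots) on node $v=v_i$, whose territory is $T=\{v_{i-1},v_i\}$ or $\{v_i,v_{i+1}\}$ in $\TER$. Whatever the algorithm does, an enabled isolated robot either stays on $v_i$ or moves to $v_{i-1}$ or $v_{i+1}$; in every case its new node lies in $T$ \emph{provided} we can guarantee it moves into the correct endpoint — but here is the subtlety: if $v_i$'s territory is, say, $\{v_i,v_{i+1}\}$ and the robot is told to move $\BOTH$, the scheduler could send it to $v_{i-1}$, which is outside $T$. The resolution is that this cannot actually happen while the hypotheses hold, because \emph{all} robots on $v_i$ have the same color and hence (being identical and running a deterministic algorithm on the same symmetric view) the same behavior, and since the configuration stays of the required form we may re-choose the territory; more carefully, the right statement to carry through the induction is: after each step, the configuration still has no two occupied neighbors, all co-located robots share a color, and there is an independent territory set covering the occupied nodes with the additional property that each new occupied node lies in the territory assigned to the node it came from. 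I would make the territory ``follow'' the robots: when the group on $v_i$ moves to an endpoint $w$ of its territory $T$, reassign that territory to be $T$ still (it already contains $w$); independence is automatic since $T$ itself did not change and the other territories did not change either.

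The one remaining thing to check for the inductive step is that no \emph{collision} or \emph{merging} can break the structure: two groups in two distinct territories $T_1,T_2\in\TER$ are at distance $\ge 2$ from each other, so after one step (each moving by at most one node within its own territory) they are still at distance $\ge 0$ — wait, distance could drop to $0$ only if a node of $T_1$ coincides with a node of $T_2$, which is impossible since territories are sets of nodes and independence says every node of $T_1$ is at distance $\ge 2$ from every node of $T_2$. Hence distinct groups never meet, no new neighbor-pairs of occupied nodes are created, and each group continues to consist of robots that all share the (unchanged) common color of their origin node. So every hypothesis of the lemma is restored in the next configuration, the induction goes through, and by a straightforward union over all steps robots on $v\in V_r(C)$ never leave their territory in any of the three schedulers.

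The main obstacle I anticipate is precisely the $\BOTH$/symmetry bookkeeping described above: one must be careful that the inductive invariant is strong enough to pin down \emph{which} endpoint of a territory a robot can occupy, rather than merely asserting ``the robot stays in some territory'', since otherwise an adversarial scheduler resolving a $\BOTH$ move could push a robot across the $\ge 2$ gap into a neighboring territory. Keeping the territory attached to the node-of-origin (so that a territory is a $2$-node window that the robot's occupied node slides within) is the clean way to handle this; the rest is routine.
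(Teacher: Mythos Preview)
Your proposal contains a genuine confusion about the role of the scheduler, and this confusion is precisely what generates the ``subtlety'' you spend most of the sketch wrestling with.

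Recall the setting: the scheduler is the \emph{adversary} that tries to prevent exploration. In particular, when a robot's view is symmetric and the algorithm prescribes $\BOTH$, it is the scheduler that picks the direction. The lemma is a tool for impossibility proofs, so its content is: \emph{there exists} a schedule (choice of activations and direction resolutions) under which robots never leave their assigned territories in the fixed set $\TER$. You instead treat the scheduler as if it were adversarial \emph{against the lemma}, i.e., trying to push robots out of their territories. That is why you worry that ``the scheduler could send it to $v_{i-1}$, which is outside $T$'' and then reach for a workaround of re-choosing territories. But $\TER$ is fixed in the statement; you are not free to let territories slide, and you do not need to.

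The correct one-line resolution, which is what the paper does, is simply: since every occupied node has both neighbors free (independence of $\TER$ guarantees this even after robots have moved within their territories), every robot's view is symmetric; hence the scheduler controls the direction and \emph{chooses} to send each moving robot to the other node of its own territory $T$. Because all robots on a node share the same color and hence compute identically, the scheduler can also activate them together so they move as a block. This keeps each territory's robots on a single node of that territory, with a common color, and distinct territories remain at distance $\geq 2$; the inductive invariant is thus restored and the lemma follows. Once you flip the quantifier on the scheduler, all of your ``bookkeeping'' obstacles disappear.
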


\begin{proof}
Assume that such a territory set $\TER$ exists. We prove the lemma by induction. At configuration $C$, every robot stays at its territory. Consider configuration $C'$ such that every robot stays at its territory. Since no two nodes in different territories share the same neighbor, each robot observes no robots on its neighbor nodes. This means the view of the robot is symmetric and consequently the scheduler can decide the direction of its movement. In addition, all robots on a single node have the same color, they make the same behaviors if the scheduler activates them at the same time. Hence, each robot moves to another node in its territory if it decides to move. That is, every robot stays at its territory again. Therefore, the lemma holds.
\end{proof}

Now we go back to the FSYNC model, and prove that initial configurations other than $\G\W$ and $\W\G$ are unsolvable.

\begin{lemma}
\label{lem:fp-imp-isolated}
Assume $n\ge 6$. Let $C$ be a configuration such that two robots are disconnected. In this case, $C$ is unsolvable.
\end{lemma}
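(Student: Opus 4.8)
The plan is to show that if two robots are disconnected in configuration $C$ (with $n \ge 6$), then no robot can ever observe the other, so exploration is doomed. First I would note that with $k=2$ robots and $\phi=1$, being disconnected means the two robots sit on non-neighboring nodes; hence each robot's view contains no other robot, i.e., each robot is isolated and its view is symmetric. This is precisely the hypothesis pattern of Lemma \ref{lem:imp-territory}: take $V_r(C)$ to be the two occupied nodes (vacuously each such node has robots of one color since it carries a single robot), and we need to exhibit an independent territory set $\TER$ covering both occupied nodes.

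The key step is constructing the territory set. For each occupied node $v$, I would pick as its territory $T = \{v, v'\}$ for one of its neighbors $v'$ — chosen so that the two territories end up at pairwise distance at least two. Since the two robots are at distance $\ge 2$ in a ring of size $n \ge 6$, there is enough slack: if the distance is exactly $2$, the two occupied nodes are $v_i$ and $v_{i+2}$, and I extend them away from each other (territories $\{v_{i-1},v_i\}$ and $\{v_{i+2},v_{i+3}\}$), which are at distance $2$ apart on one side and, because $n \ge 6$, at distance $\ge 2$ on the other side as well; if the distance is $\ge 3$, almost any choice of neighbor works, again using $n \ge 6$ to guarantee the wrap-around gap is also at least $2$. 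Then Lemma \ref{lem:imp-territory} applies directly and tells us that forever after $C$, each robot stays confined to its two-node territory.

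Finally I would conclude: the union of the two territories has at most $4$ nodes, but $n \ge 6$, so at least two nodes of the ring are never visited after $C$. In particular, in any execution starting from $C$, these nodes are visited at most finitely often (only possibly before $C$, but since $C$ is the initial configuration here, never), so perpetual exploration fails; and likewise terminating exploration fails since those nodes are never visited at all. Hence no algorithm solves either exploration problem from $C$, i.e., $C$ is unsolvable.

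I expect the main obstacle to be the bookkeeping in the territory construction — specifically verifying independence on \emph{both} arcs of the ring between the two robots simultaneously, which is exactly where the bound $n \ge 6$ is needed (a smaller ring could force the two $2$-node territories to touch or overlap on the short side). Handling the boundary case where the robots are at distance exactly $2$ requires the explicit "extend outward" choice described above; all other distances are easier. Once the territory set is in hand, the rest is an immediate invocation of Lemma \ref{lem:imp-territory} plus a counting argument.
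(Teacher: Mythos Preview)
Your proposal is correct and follows essentially the same approach as the paper: construct an independent territory set of two two-node territories (extending outward when the robots are at distance exactly $2$, and in any consistent direction otherwise, using $n\ge 6$ to ensure independence on both arcs), then invoke Lemma~\ref{lem:imp-territory} to confine the robots to at most four nodes. The paper's proof is slightly terser---it fixes the specific choice $T_1=\{v_0,v_{n-1}\},\,T_2=\{v_d,v_{d+1}\}$ for $d=2$ and $T_1=\{v_0,v_1\},\,T_2=\{v_d,v_{d+1}\}$ for $d>2$---but the argument is the same.
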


\begin{proof}
Let $d$ be the distance between two robots in $C$. Without loss of generality, we assume that two robots $r_1$ and $r_2$ occupy nodes $v_0$ and $v_d$ at $C$, respectively. We define territories $T_1$ and $T_2$ as follows: If $d=2$ holds, we define $T_1=\{v_0,v_{n-1}\}$ and $T_2=\{v_d,v_{d+1}\}$, and if $d>2$ holds, we define $T_1=\{v_0,v_1\}$ and $T_2=\{v_d,v_{d+1}\}$. Since $\TER=\{T_1,T_2\}$ is independent, two robots visit only nodes in $T_1\cup T_2$ from Lemma \ref{lem:imp-territory}. Therefore, they cannot achieve perpetual exploration.
\end{proof}

\begin{lemma}
\label{lem:fp-imp-gg}
Assume $n\ge 6$. Configurations $\G\G$ and $\W\W$ are unsolvable.
\end{lemma}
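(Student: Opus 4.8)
The plan is to show that the two symmetric configurations $\G\G$ and $\W\W$ (two robots with identical color on neighboring nodes) cannot solve perpetual exploration, by exhibiting a fully-synchronous execution in which the two robots never escape a bounded region. First I would observe that in such a configuration each robot's view is symmetric: robot $r_1$ on $v_0$ sees a free node on one side and a robot of its own color on the other, and symmetrically for $r_2$ on $v_1$; moreover the two robots have the same color, so under the FSYNC scheduler they are activated together and must execute the same rule (if any rule is enabled). The key point is that whatever movement the common rule prescribes, it is either $\bot$ or $\BOTH$ (since the guard view is symmetric), so the adversarial scheduler gets to pick each robot's direction.

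The main step is a small case analysis on the enabled rule, if any. If no rule is enabled in $\G\G$ (resp.\ $\W\W$), the robots never move and exploration fails immediately. Otherwise the enabled rule changes the common color to some $c_{new}\in\{\G,\W\}$ and specifies $\bot$ or $\BOTH$. If the movement is $\bot$, the robots only recolor and stay put, so after one step we are in $c_{new}c_{new}$ on the same two nodes; iterating, the configuration stays forever on $\{v_0,v_1\}$ with at most an alternation of colors, and no third node is ever visited. If the movement is $\BOTH$, I let the scheduler send $r_1$ toward $v_{n-1}$ and $r_2$ toward $v_2$ — i.e., the two robots move apart. After this step the robots occupy $v_{n-1}$ and $v_2$, which for $n\ge 6$ are at distance $n-3\ge 3$, hence disconnected, and they carry a common color $c_{new}$. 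Now I invoke Lemma~\ref{lem:imp-territory} with the independent territory set $\TER=\{\{v_{n-1},v_{n-2}\},\{v_2,v_3\}\}$ (these are at distance at least two from each other when $n\ge 6$): the robots are forever confined to $\{v_{n-2},v_{n-1},v_2,v_3\}$, so nodes such as $v_5$ are never visited and perpetual exploration is impossible. (Alternatively, once the robots are disconnected one could cite Lemma~\ref{lem:fp-imp-isolated} directly, but it is cleaner to name the territory set explicitly since the reached configuration need not be among the hypotheses treated there — actually it is, so citing Lemma~\ref{lem:fp-imp-isolated} also works.)

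The only subtlety — and the step I expect to require the most care — is arguing that the FSYNC scheduler really may choose the two directions independently and "outward". Lemma~\ref{lem:imp-territory}'s proof already records that a robot with a symmetric view has its direction chosen by the scheduler; I would reuse exactly that reasoning for the single transition out of $\G\G$, noting that the two robots, being on distinct nodes, are distinct entities whose (symmetric) views are each resolved by the adversary, so "one left, one right" is a legal FSYNC step. After that single step we are squarely in the territory-confinement regime and Lemma~\ref{lem:imp-territory} finishes the argument. I would present the proof as: fix the enabled rule (handling "no rule" and "$\bot$" trivially), perform the outward $\BOTH$ move, identify the resulting independent territory set, and conclude by Lemma~\ref{lem:imp-territory}; the same argument applies verbatim to $\W\W$.
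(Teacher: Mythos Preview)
Your proof contains a genuine error at the very first step: the individual views of the robots in $\G\G$ are \emph{not} symmetric. Robot $r_1$ on $v_0$ has forward view $(\G,\EMP,\{\G\},\{\G\})$ and backward view $(\G,\{\G\},\{\G\},\EMP)$; these differ because one neighbor is free and the other is occupied. (You even describe this asymmetry yourself and then call the view symmetric.) Consequently the enabled rule may perfectly well prescribe $\LEFT$ or $\RIGHT$, not only $\bot$ or $\BOTH$, and the scheduler does \emph{not} get to choose the direction of each robot. Your case analysis therefore omits the case where the common rule is of the form $\EMP(\G)\G :: c,\RIGHT$, i.e.\ ``move toward the other robot''.

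What actually happens is that the \emph{configuration} is symmetric (axis through the edge $v_0v_1$), so the two robots' views are mirror images and in FSYNC they execute the same rule in mirror fashion. If the rule says ``toward the other robot'', both move toward each other and simply swap positions, yielding $cc$ on the same two nodes; if it says ``away'', both move outward and the distance becomes three. The paper's argument is exactly this: swapping (and recoloring without moving) visits no new nodes, so any algorithm that explores must eventually apply an ``away'' rule from some $XX$ configuration; after that single step the robots are at distance three and Lemma~\ref{lem:fp-imp-isolated} (or equivalently Lemma~\ref{lem:imp-territory} with the territory set you wrote) finishes. To repair your proof you need to replace the incorrect ``view symmetric $\Rightarrow$ $\BOTH$'' claim by this mirror-symmetry reasoning, and add the swap case to your analysis.
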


\begin{proof}
Since two robots have the same view, they move in a symmetric manner. If each robot moves toward the other robot, the robots just swap their positions. Hence, to achieve exploration, eventually each robot moves against the other robot. After the movement, the distance between them is three. Similarly to Lemma \ref{lem:fp-imp-isolated}, robots cannot achieve perpetual exploration from the configuration.
\end{proof}

From Theorem \ref{thm:fp-alg-cor} and Lemmas \ref{lem:fp-imp-isolated} and \ref{lem:fp-imp-gg}, a set of solvable configurations is $C_s=\{\G\W,\W\G\}$. Therefore, we have the following theorem.

\begin{theorem}
In case of $\phi=1$, $k=2$, and $Col=\{\G,\W\}$, Algorithm \ref{alg:fp2} is universal with respect to perpetual exploration for $n\ge 6$ in the FSYNC model.
\end{theorem}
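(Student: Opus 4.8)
The plan is to assemble the universality result directly from the pieces already established in the excerpt, since all the real work has been done in Theorem~\ref{thm:fp-alg-cor} and Lemmas~\ref{lem:fp-imp-isolated} and~\ref{lem:fp-imp-gg}. The only nontrivial content of the final theorem is the claim that $C_s(\text{perpetual exploration}) = \{\G\W, \W\G\}$ for $n \ge 6$ and $k = 2$; once that is known, universality is immediate because Theorem~\ref{thm:fp-alg-cor} states that Algorithm~\ref{alg:fp2} solves perpetual exploration from both $\G\W$ and $\W\G$.

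First I would enumerate all configurations with $k = 2$ robots and two colors up to the indistinguishability relations (rotation, reflection, and the robots being identical). Every such configuration is determined by (a) whether the two robots occupy the same node (forming a tower) or distinct nodes, (b) in the distinct-node case, the distance $d$ between them with $1 \le d \le \lfloor n/2 \rfloor$, and (c) the multiset of colors. The case $d = 1$ (robots are neighbors on distinct nodes) with distinct colors $\{\G,\W\}$ gives exactly $\G\W$ (equivalently $\W\G$), which is solvable by Theorem~\ref{thm:fp-alg-cor}. I would then dispatch every remaining case: if $d \ge 2$, the robots are disconnected, so Lemma~\ref{lem:fp-imp-isolated} (applicable since $n \ge 6$) shows the configuration is unsolvable; if $d = 1$ but both robots have the same color, this is the configuration $\G\G$ or $\W\W$ (when they are on distinct neighboring nodes) handled by Lemma~\ref{lem:fp-imp-gg}; and the tower case (both robots on one node) leaves the ring with every other node free, so from any such configuration no node except the occupied one can ever be reached unless a robot leaves, and once robots separate one reduces to a disconnected configuration at distance $1$ or $2$ — I would note that a tower of two robots of the same color behaves like a single robot whose direction the scheduler controls, so it cannot explore, and a tower of two robots of different colors still presents each robot with a symmetric view (no robot on a neighbor), hence Lemma~\ref{lem:imp-territory} applies with a single territory $\{v_i, v_{i+1}\}$ and the robots never leave it.

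With the case analysis complete, $C_s = \{\G\W, \W\G\}$, and combining this with Theorem~\ref{thm:fp-alg-cor} yields that Algorithm~\ref{alg:fp2} solves perpetual exploration from every configuration in $C_s$, i.e., it is universal. I would conclude by restating this as the theorem.

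The main obstacle I anticipate is making sure the tower configurations are handled rigorously, since they are not explicitly covered by the stated lemmas: Lemma~\ref{lem:fp-imp-isolated} assumes the two robots are on distinct nodes, and Lemma~\ref{lem:fp-imp-gg} names the configurations $\G\G$ and $\W\W$ which (under the paper's column notation) denote robots on two adjacent nodes, not a single tower. So I would need a short separate argument for the tower case — most cleanly by invoking Lemma~\ref{lem:imp-territory} with the singleton territory set $\TER = \{\{v_i, v_{i+1}\}\}$ when the tower's two robots share a color, and, when they have different colors, observing that if the scheduler never separates them the view stays symmetric and Lemma~\ref{lem:imp-territory}'s argument still confines them, while if they do separate, the resulting configuration has the robots at distance $1$ on distinct nodes (a same-or-different-color pair) or further, reducing to the already-settled cases. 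Everything else is bookkeeping over a small finite list of configuration types.
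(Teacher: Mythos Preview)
Your core plan --- cite Theorem~\ref{thm:fp-alg-cor} and Lemmas~\ref{lem:fp-imp-isolated} and~\ref{lem:fp-imp-gg}, conclude $C_s = \{\G\W,\W\G\}$, and invoke the definition of universality --- is exactly what the paper does; its proof is literally that one sentence. You go further than the paper by noticing that the tower configurations are not covered by those two lemmas and attempting to handle them separately.

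Your treatment of the same-color towers via Lemma~\ref{lem:imp-territory} with a singleton territory set is fine. Your treatment of the mixed-color tower $\TOWERA{\G}{\W}$, however, does not work. Lemma~\ref{lem:imp-territory} explicitly requires all robots on a node to share a color, so neither the lemma nor ``its argument'' applies here: the $\G$-robot and the $\W$-robot have different views, and an algorithm is free to instruct only one of them to move. Your fallback ``if they separate, reduce to already-settled cases'' also fails, because the distance-$1$ distinct-color case to which you would reduce is $\G\W$ itself --- a \emph{solvable} configuration, not an unsolvable one. In fact the mixed tower \emph{is} solvable: add to Algorithm~\ref{alg:fp2} the single rule $\TOWERN{\EMP}\TOWERA{(\G)}{\W}\TOWERN{\EMP}::\G,\BOTH$. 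From the tower only the $\G$-robot is enabled; whichever neighbor the scheduler sends it to, the resulting configuration is $\G\W$ or $\W\G$, and the original rules of Algorithm~\ref{alg:fp2} then achieve perpetual exploration. Hence $\TOWERA{\G}{\W}\in C_s$, yet Algorithm~\ref{alg:fp2} has no rule enabled in that configuration, so the universality claim is false as literally stated unless tower initial configurations are tacitly excluded. The paper's one-line proof shares this oversight; your instinct to examine towers was correct, but the mixed-tower case cannot be closed the way you propose.
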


\subsection{Impossibility of terminating exploration with two robots}

In this subsection, we prove that no algorithm solves terminating exploration for $k=2$.

\begin{theorem}
\label{thm:no-two-fsync}
In case of $\phi=1$ and $k=2$, no algorithm solves terminating exploration in the FSYNC model. This holds even if robots can use an infinite number of colors.
\end{theorem}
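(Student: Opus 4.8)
The plan is to show that two robots, even with unboundedly many colors, cannot achieve terminating exploration in the FSYNC model, by exploiting the fact that in FSYNC both robots are activated simultaneously at every step, so their relative configuration evolves deterministically and the adversary only intervenes to resolve symmetric views. I would split the argument according to whether the two robots ever become disconnected or stay connected forever.

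First I would handle the \emph{disconnected} case. If at some configuration $C_t$ the two robots occupy nodes at distance $d \ge 2$, then (for $n$ large enough) each robot sees no robot in its neighborhood, so its view is symmetric, and by Lemma~\ref{lem:imp-territory} — applied with the territory set $\{T_1,T_2\}$ built around the two occupied nodes exactly as in Lemma~\ref{lem:fp-imp-isolated}, noting that a single robot per node trivially satisfies the ``same color'' hypothesis — both robots are confined to a bounded region of at most four nodes forever after. Hence for $n$ sufficiently large some node is never visited after $C_t$; this would already contradict terminating exploration if exploration had not yet completed, but more importantly it shows that to explore, the robots can never afford to separate, so any successful execution keeps the two robots connected (distance $\le 1$) at every step.

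Next I would analyze the \emph{connected} case. At every step the two robots are either on the same node (a tower) or on neighboring nodes. If they are on neighboring nodes, each robot sees the other, so (depending on whether their ordered views coincide) either each robot has a determined direction and they move deterministically in FSYNC, or the views are symmetric and the adversary picks directions; if they are on a tower, each robot's view is symmetric (it only sees the colors on its own node, and the neighbors are empty unless $n$ is tiny), so again the adversary picks directions. The key observation is that because both robots act simultaneously and start from a connected configuration, the pair must, in order to explore every node, eventually execute a step in which at least one robot moves onto a previously unvisited node, and one of the two robots must at some point be the ``leading'' robot making the ring grow. I would argue that in every connected configuration the colors seen by the two robots are among a finite set of ``types'' (the multiset of colors on the current node(s) and the neighboring node(s)), and that termination requires reaching a configuration in which no robot is enabled — but from such a configuration one shows, using the symmetry of views and the adversary's control of directions, that the robots cannot have visited all $n$ nodes when $n$ is large: the standard technique is to run the algorithm from the initial configuration, observe that after finitely many steps the pair either terminates (having visited only boundedly many nodes for large $n$, contradiction) or loops through a bounded set of connected ``shapes'', and in the latter case the adversary can steer the loop so that the robots oscillate within a bounded region, again leaving nodes unvisited for large $n$.

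The main obstacle I anticipate is making the ``bounded oscillation'' argument airtight with an \emph{infinite} color palette: one cannot simply invoke finiteness of the configuration space. The right move is to note that the relevant data for each robot's decision is purely local — the colors present on its own node and its (at most one) occupied neighbor — and that this local picture, although it may cycle through infinitely many colors, is governed by a deterministic finite-state dynamics on the \emph{shape} (relative positions) once one quotients by the specific color values, because the algorithm is a fixed function and FSYNC synchronizes the two robots. So I would track the ``displacement pattern'' of the pair (for connected configurations, the sequence of relative moves: stay/swap/both-left/both-right, and transitions between tower and adjacent states) and show this pattern, being produced by a deterministic map composed with adversarial symmetry-breaking, must eventually make the center of mass of the two robots either bounded (failure to explore) or drift consistently in one direction — and in the drift case I would invoke the adversary's freedom, at the first moment the pair is a tower or has symmetric views (which must occur, since the initial configuration $C_0$ is one of finitely many shapes and a tower or symmetric adjacency arises within a bounded number of steps), to reverse the drift, trapping the robots. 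Pinning down precisely why a tower or symmetric adjacency is forced to recur, independently of the colors, is the delicate point, and I would expect the author's proof to finesse it by a short case analysis on the behavior from a tower configuration $\TOWERA{c}{c'}$ showing the robots either never leave a bounded neighborhood or return to a tower infinitely often.
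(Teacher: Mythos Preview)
Your proof has a genuine gap in the connected case. You claim the adversary can reverse the drift ``at the first moment the pair is a tower or has symmetric views (which must occur)'', but this need not occur at all: Algorithm~\ref{alg:fp2} already exhibits two robots with distinct colors $\G,\W$ on adjacent nodes that drift forever in one direction without ever forming a tower and without ever having symmetric views (each robot always sees the other on exactly one side, so its direction is determined). With an infinite palette the robots can do even more --- e.g.\ encode an unbounded step counter in their colors --- so your attempt to quotient by color values and obtain a finite-state dynamics on shapes is unjustified, and your ``bounded oscillation versus reversible drift'' dichotomy collapses. In short, connected robots \emph{can} drift indefinitely with no adversarial intervention available; what prevents terminating exploration is not that they get stuck, but that they cannot know when to stop.

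The paper's proof sidesteps this entirely with a two-ring argument. Fix a ring $R_1$ of size $n_1\ge 6$ on which the purported algorithm succeeds, and let $t$ be the first instant at which the two robots either terminate or become disconnected in the execution on $R_1$; such $t$ exists precisely because terminating exploration requires eventual termination. Now run the same algorithm from the same initial configuration on a ring $R_2$ of size $n_2>2(t+1)$. While connected, the robots cannot distinguish $R_1$ from $R_2$, so they behave identically up to instant $t$. If they terminate at $t$ they have visited at most $2(t+1)<n_2$ nodes; if they disconnect at $t$, your territory argument via Lemma~\ref{lem:imp-territory} applies and they are trapped. Either way exploration fails on $R_2$. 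This argument never analyzes drift or recurrence, and the infinite palette is harmless because the bound $t$ is extracted from a \emph{successful} run on the small ring rather than from any finiteness of the state space.
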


\begin{proof}
Assume that such algorithm $\ALG$ exists. 
Consider an execution $E=C_0,C_1,\ldots$ of $\ALG$ in a $n_1$-node ring $R_1$ ($n_1\ge 6$). Let $t$ be the minimum instant such that two robots terminate or become disconnected at $C_t$. Next, for some $n_2> 2(t+1)$, let us consider an execution $E'=C'_0,C'_1,\ldots$ of $\ALG$ in a $n_2$-node ring $R_2$. Clearly, as long as two robots keep connected, they do not recognize the difference between $R_1$ and $R_2$. Hence, in $E'$, two robots move similarly to $E$ until $C'_t$. If two robots terminate at $C'_t$, they have visited at most $2(t+1)$ nodes and thus they do not achieve exploration. If two robots become disconnected at $C'_t$, we can define an independent territory set at $C'_t$. From Lemma \ref{lem:imp-territory}, two robots cannot visit the remaining nodes and thus they cannot achieve exploration. This is a contradiction.
\end{proof}

\subsection{A terminating exploration algorithm for three robots with two colors}

In this subsection, we give a terminating exploration algorithm for three robots with two colors in case of $n\ge 3$. A set of colors is $Col=\{\G,\W\}$. The algorithm is given in Algorithm \ref{alg:ft3}.

\begin{algorithm}[t]
\caption{Fully-Synchronous Terminating Exploration for $k=3$}
\label{alg:ft3}
\begin{algorithmic}
\renewcommand{\algorithmicrequire}{\textbf{Initial configurations}}
\REQUIRE
\STATE $\W\W\W$, $\G\W\W$, $\W\W\G$, and $\G\W\G$
\renewcommand{\algorithmicrequire}{\textbf{Rules}}
\REQUIRE
\STATE $0\G\W:\ \ \TOWERN{\EMP}\TOWERN{(\G)}\TOWERN{\W} ::\ \G,\LEFT$
\STATE $0\W\G:\ \ \TOWERN{\EMP}\TOWERN{(\W)}\TOWERN{\G} ::\ \W,\RIGHT$
\STATE $0\W\W:\ \ \TOWERN{\EMP}\TOWERN{(\W)}\TOWERN{\W} ::\ \G,\bot$
\STATE $\G\W\W:\ \ \TOWERN{\G}\TOWERN{(\W)}\TOWERN{\W} ::\ \W,\LEFT$
\STATE $\G\W\G:\ \ \TOWERN{\G}\TOWERN{(\W)}\TOWERN{\G} ::\ \W,\BOTH$
\end{algorithmic}
\end{algorithm}

Executions of Algorithm \ref{alg:ft3} for $n\ge 5$ are given in Fig.\,\ref{fig:alg-ft}. We consider three robots $r_1$, $r_2$, and $r_3$. In the figure, $\W_i$ (resp., $\G_i$) represents robot $r_i$ with color $\W$ (resp., $\G$). Arrows represent that indicated robots are enabled. At configuration $\W\W\W$, $r_1$ and $r_3$ are enabled by rule $0\W\W$ (Fig.\,\ref{fig:alg-ft}(a)) and change their colors to $\G$. At configuration $\G\W\G$ (Fig.\,\ref{fig:alg-ft}(b)), robots $r_1$ and $r_2$ (\emph{i.e.}, a pair of robots $\G\W$) and $r_3$ (\emph{i.e.}, another robot $\G$) move to the opposite directions by rules $0\G\W$ and $\G\W\G$. Note that, since the view of $r_2$ is symmetric at configuration $\G\W\G$, the scheduler decides the direction of $r_2$. This implies that the next configuration is $\G\W\EMP\EMP \G$ (Fig.\,\ref{fig:alg-ft}(c)) or $\G\EMP\EMP \W\G$. However, since the two configurations are symmetric to each other, robots move in the same manner after the configuration. At configuration $\G\W\W$ (Fig.\,\ref{fig:alg-ft}(d)), robots $r_1$ and $r_2$ move to the opposite direction of $r_3$ by rules $0\G\W$ and $\G\W\W$, and then the configuration becomes one in Fig.\,\ref{fig:alg-ft}(e). Note that, since configuration $\W\W\G$ is symmetric to $\G\W\W$, robots move in the same manner from configuration $\W\W\G$. After configurations in Fig.\,\ref{fig:alg-ft}(c)(e), robots $r_1$ and $r_2$ continue to move to the same direction by rules $0\G\W$ and $0\W\G$. After $r_1$ and $r_2$ explore the ring, they reach $r_3$ (Fig.\,\ref{fig:alg-ft}(f)). After robot $r_2$ moves, they terminate at a configuration in Fig.\,\ref{fig:alg-ft}(g).

\begin{figure}[t]
\begin{center}
\includegraphics[scale=0.35]{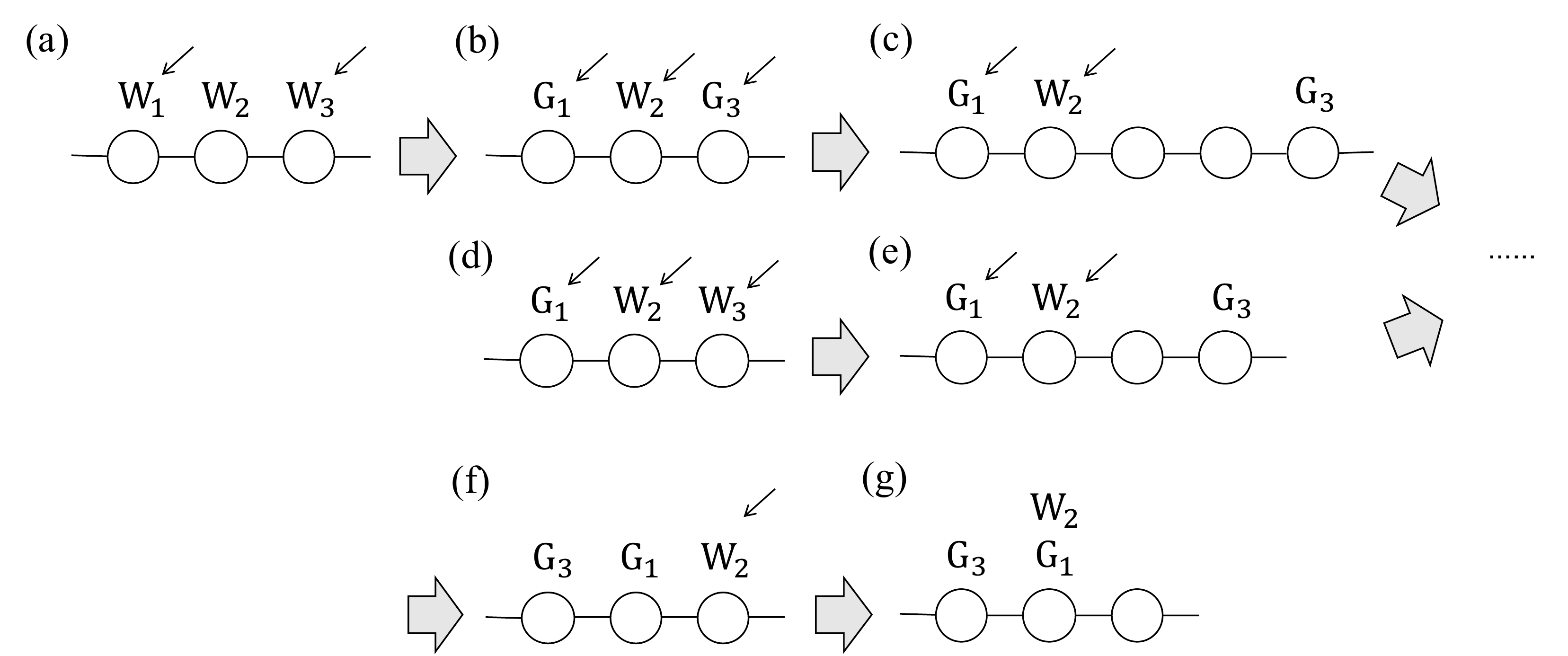}
\caption{Executions of Algorithm \ref{alg:ft3}}
\label{fig:alg-ft}
\end{center}
\end{figure}

We can easily verify that Algorithm \ref{alg:ft3} works for $n=3$ or $n=4$. Hence we have the following theorem.

\begin{theorem}
In case of $\phi=1$ and $k=3$, Algorithm \ref{alg:ft3} solves terminating exploration from initial configurations $\W\W\W$, $\G\W\W$, $\W\W\G$, and $\G\W\G$ for $n\ge 3$ in the FSYNC model.
\end{theorem}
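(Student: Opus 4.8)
The plan is to verify that Algorithm~\ref{alg:ft3} solves terminating exploration from each of the four listed initial configurations, for every ring size $n \ge 3$. I would organize the argument around two phases: a \emph{startup phase}, in which the initial configuration is transformed into one of a few canonical ``travelling'' configurations, and a \emph{travel-and-halt phase}, in which a pair of robots $\G\W$ sweeps the ring while the third robot $\G$ stays put and acts as a landmark, until the sweepers reach the landmark and everyone becomes disabled. Since the FSYNC scheduler is deterministic except when a robot's view is symmetric, the only branching comes from rules whose guard is symmetric (here $\G\W\G$, which uses $\BOTH$), and I would handle both branches, observing that they lead to configurations that are mirror images of each other and hence evolve identically up to reflection.

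The key steps, in order, are as follows. First, I would trace the startup for $\W\W\W$: by rule $0\W\W$ both end robots recolor to $\G$ (the middle robot has a symmetric view $\W\W\W$ with no free neighbor, so no rule applies to it), reaching $\G\W\G$; then rules $0\G\W$ (applied to the pair on each side) together with $\G\W\G$ fire, and because the middle robot moves under $\BOTH$, the configuration becomes $\G\W\EMP\EMP\G$ (or its mirror). Second, I would trace the startup for $\G\W\W$ and $\W\W\G$: rule $\G\W\W$ moves the rightmost $\W$ toward the others while rule $0\G\W$ moves the leftmost $\G$ the other way, yielding $\G\W\EMP\W$-type configurations that merge into a $\G\W$ sweeping pair plus an isolated $\G$; and for $\G\W\G$, it reduces immediately to the same case as in the $\W\W\W$ analysis. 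Third, I would establish the invariant for the travel phase: once the configuration is $\G\W\EMP^{h}\G$ for $h \ge 2$ (indices read cyclically), rule $0\W\G$ moves the $\W$ toward the trailing $\G$ and rule $0\G\W$ moves the leading $\G$ forward, so the pair $\G\W$ advances one node per round with $h$ decreasing, and no other robot is enabled; this continues until $h = 1$, i.e.\ configuration $\G\W\W\G$-ish actually $\G\W\G$ with the sweeper $\W$ now adjacent to the landmark — here I would check the exact adjacency configuration that arises and show that the final rule application (the $\W$ moving onto or next to $\G$, via $\G\W\W$ then termination, as in Fig.~\ref{fig:alg-ft}(f)--(g)) produces a configuration matching none of the five guards, so all robots are disabled, while every node has been visited. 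Fourth, I would separately and explicitly check the small cases $n = 3$ and $n = 4$, where the ``$\EMP^h$'' gaps are short or absent and the sweepers meet the landmark almost immediately; this is just a finite case analysis over the handful of reachable configurations.

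The main obstacle I anticipate is bookkeeping around symmetry and the cyclic wrap-around, rather than anything conceptually deep. Specifically: (i) making sure that when the $\BOTH$ rule fires at $\G\W\G$ the two resulting configurations really are reflections of one another and that the subsequent deterministic evolution respects this reflection, so that it suffices to follow one branch; (ii) pinning down the precise configuration at the moment the sweeping pair closes the loop and re-encounters the landmark $\G$ — one must confirm that at that instant the enabled rule is the ``right'' one (leading to the halting configuration of Fig.~\ref{fig:alg-ft}(g)) and not, say, a rule that would restart a sweep or create a tower; and (iii) confirming termination in the strict sense of the definition, namely that in the final configuration \emph{no} robot's forward or backward view matches any guard of the algorithm. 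I would finish by noting that visiting is guaranteed because the $\G$ landmark sits on the start node, the pair $\G\W$ departs from it and traverses every other node exactly once before returning, so coverage holds for all $n \ge 3$; combined with the small-case check, this gives the theorem.
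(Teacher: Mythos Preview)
Your approach is essentially the paper's: an explicit execution trace from each initial configuration, showing that the system reaches a $\G\W$ sweeping pair plus a stationary $\G$ landmark, that the pair traverses the ring, and that the system halts when they reunite, followed by a separate check of $n=3,4$. The paper does exactly this, informally, via Fig.~\ref{fig:alg-ft} and a short narrative; your plan to write out the invariants explicitly is if anything more thorough than the paper.

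However, several of the traces you sketch are incorrect and would fail as written. In the startup from $\G\W\W$ it is the \emph{middle} $\W$ that moves via rule $\G\W\W$ (its view is $\G(\W)\W$), while the rightmost $\W$ has view $\EMP(\W)\W$, so it fires rule $0\W\W$, recolors to $\G$, and stays put; after one synchronous round the configuration is $\G\W\EMP\G$, not a ``$\G\W\EMP\W$-type'' configuration. In the travel phase, rule $0\G\W$ pushes the pair's $\G$ \emph{away} from its $\W$ neighbor, so in $\G\W\EMP^{h}\G$ the written gap $h$ \emph{increases} each round rather than decreases; the pair goes around the ring the long way and, when it meets the landmark, it is the pair's $\G$ (not its $\W$) that becomes adjacent to the landmark, yielding the sub-configuration $\G\G\W$. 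The last enabled rule is then $0\W\G$: the $\W$ moves onto the pair's $\G$, creating a tower $\{\G,\W\}$ adjacent to the landmark, after which no guard matches. Rule $\G\W\W$ plays no role at termination. These are precisely the bookkeeping hazards you anticipated; once you redo the traces with the correct directions and rule applications the argument goes through, but as stated the identity of the moving robot in $\G\W\W$, the gap dynamics, the side from which the pair meets the landmark, and the terminal rule are all misidentified.
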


Note that we can construct another algorithm by swapping the roles of colors $\G$ and $\W$ in Algorithm \ref{alg:ft3}. Clearly this algorithm solves terminating exploration from configurations such that colors $\G$ and $\W$ are swapped from solvable configurations for Algorithm \ref{alg:ft3}. This implies configurations $\G\G\G$, $\W\G\G$, $\G\G\W$, and $\W\G\W$ are also solvable. Hence, we have the following lemma.

\begin{lemma}
\label{lem:ft3-solvable}
If $k=3$ holds and a set of colors is $\{\G,\W\}$, configurations $\W\W\W$, $\W\W\G$, $\W\G\W$, $\W\G\G$, $\G\W\W$, $\G\W\G$, $\G\G\W$, and $\G\G\G$ are solvable for terminating exploration in the FSYNC model.
\end{lemma}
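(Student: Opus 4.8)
The plan is to combine the previously established positive result for Algorithm \ref{alg:ft3} with a colour-symmetry argument. By the theorem immediately preceding this lemma, Algorithm \ref{alg:ft3} solves terminating exploration in the FSYNC model from the four initial configurations $\W\W\W$, $\G\W\W$, $\W\W\G$, and $\G\W\G$ (for $n\ge 3$). Since a configuration being solvable only requires the existence of \emph{some} algorithm specific to that configuration that solves terminating exploration from it, these four configurations are solvable, and that already covers four of the eight configurations listed in the lemma.

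Next I would invoke the colour-swap construction. Define $\ALG'$ to be the algorithm obtained from Algorithm \ref{alg:ft3} by exchanging the roles of $\G$ and $\W$ everywhere: in every rule, replace each occurrence of $\G$ by $\W$ and each occurrence of $\W$ by $\G$, in guards, current-colour markers, and new colours alike, leaving the movement parts untouched. Because the set $Col=\{\G,\W\}$ is symmetric under this swap and the robots' behaviour depends only on the combinatorial pattern of colours (not on which symbol is called $\G$ or $\W$), there is a colour-relabelling bijection between executions of Algorithm \ref{alg:ft3} from a configuration $C$ and executions of $\ALG'$ from the configuration $\bar C$ obtained by swapping all colours in $C$. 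Movements, positions visited, and enabledness are preserved under this bijection, so $\ALG'$ solves terminating exploration from $\bar C$ whenever Algorithm \ref{alg:ft3} solves it from $C$. Applying this to the four configurations above, $\ALG'$ solves terminating exploration from $\G\G\G$, $\W\G\G$, $\G\G\W$, and $\W\G\W$, hence these four are also solvable.

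Taking the union of the two lists of four configurations yields exactly the eight configurations $\W\W\W$, $\W\W\G$, $\W\G\W$, $\W\G\G$, $\G\W\W$, $\G\W\G$, $\G\G\W$, $\G\G\G$ claimed in the lemma, completing the argument.

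There is no genuine obstacle here; the only point requiring a little care is to state precisely why the colour-swap preserves executions — namely that the model is invariant under a global permutation of the colour set, so that a run of $\ALG'$ is literally a run of Algorithm \ref{alg:ft3} with colours renamed. Once this observation is made, the lemma follows immediately, and indeed the excerpt has already sketched exactly this reasoning in the paragraph preceding the lemma statement.
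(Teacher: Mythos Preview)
Your proposal is correct and follows essentially the same approach as the paper: invoke the preceding theorem to get solvability of $\W\W\W$, $\G\W\W$, $\W\W\G$, $\G\W\G$ via Algorithm~\ref{alg:ft3}, then apply the global colour-swap to obtain a second algorithm handling $\G\G\G$, $\W\G\G$, $\G\G\W$, $\W\G\W$. The paper's justification is in fact precisely the paragraph you quote before the lemma, so there is nothing to add.
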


We also prove that there exists no universal algorithm with respect to terminating exploration for three robots with two colors. This validates the assumption that Algorithm \ref{alg:ft3} starts from some designated initial configuration.

\begin{theorem}
\label{thm:no-universal-fsync}
In case of $\phi=1$, $k=3$, and $\kappa=2$, no universal algorithm exists with respect to terminating exploration in the FSYNC model.
\end{theorem}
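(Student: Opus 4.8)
The plan is to exhibit a single configuration $C^\star$ that is solvable for terminating exploration (by some algorithm tailored to $C^\star$), and then to argue that no algorithm can solve terminating exploration from \emph{both} $C^\star$ and one of the designated configurations of Lemma~\ref{lem:ft3-solvable} simultaneously, so that in particular no universal algorithm exists. A natural candidate for $C^\star$ is a configuration where the three robots form a single tower, e.g.\ a node carrying three robots of colors that force a specific local behaviour, or a configuration such as $\G\G\G$ together with a symmetric placement. First I would pin down $C^\star$ precisely: it should be a symmetric configuration (so that scheduler choices are available) from which exploration is still achievable with a dedicated algorithm, but whose local view coincides with the local view appearing in one of the solvable designated configurations. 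The point is that an algorithm is a \emph{fixed} set of rules keyed only on views up to distance $\phi=1$; if $C^\star$ and a designated configuration $C^{\mathrm{des}}$ present the same multiset of views to the robots, then any algorithm is forced to act identically in both, and we can play the two executions against each other.

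The core of the argument is an indistinguishability/adversary construction. I would take an algorithm $\ALG$ assumed to be universal, hence in particular to terminate correctly from $C^\star$ and from a suitable designated configuration $C^{\mathrm{des}}$ of the same ring size $n$. Running $\ALG$ from $C^\star$, let $t^\star$ be the first instant at which some robot is disabled forever (termination) or the robots become disconnected; as in Theorem~\ref{thm:no-two-fsync} and Theorem~\ref{thm:no-universal-fsync}'s companion lemmas, I would then transplant the prefix of this execution into a larger ring, or alternatively choose $n$ large relative to the behaviour, to derive that either $\ALG$ terminates before exploring all nodes, or it reaches a disconnected configuration to which Lemma~\ref{lem:imp-territory} applies, preventing the remaining nodes from ever being visited. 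The crucial extra ingredient compared with the $k=2$ impossibility is handling the third robot: with three robots one must rule out the possibility that the configuration splits into a ``two exploring plus one anchor'' pattern that nonetheless covers the ring, which is exactly what Algorithm~\ref{alg:ft3} does from its designated starts. So I would choose $C^\star$ so that the symmetry of $C^\star$ forces, under $\ALG$, a split into an independent territory set — two robots confined to one territory and the third to another — after finitely many steps, regardless of how $\ALG$ is defined, by exploiting that all robots see symmetric views and move in lockstep in FSYNC.

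Concretely, the steps in order: (1) fix $\kappa=2$, $k=3$, $\phi=1$ and a large $n$; (2) define $C^\star$ explicitly and exhibit a dedicated algorithm $\ALG^\star$ solving terminating exploration from $C^\star$, so that $C^\star\in C_s(\PROB)$; (3) assume toward a contradiction that a universal $\ALG$ exists; (4) analyse the FSYNC execution of $\ALG$ from $C^\star$, using the enforced symmetry to show the robots either terminate early (contradiction with exploration) or reach a configuration with no two occupied nodes adjacent and monochromatic occupied nodes, and build an independent territory set at that point; (5) invoke Lemma~\ref{lem:imp-territory} to conclude that the remaining nodes are never visited, contradicting that $\ALG$ solves terminating exploration from $C^\star$; (6) conclude no universal $\ALG$ exists.

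The main obstacle I anticipate is step~(4): making the ``enforced symmetry'' argument fully rigorous. Unlike the $k=2$ case, where two robots with equal views trivially move in a symmetric way, with three robots one must carefully track which robot sees whom, rule out that $\ALG$ breaks symmetry by having the tower split asymmetrically (impossible in FSYNC since co-located robots of equal colour behave identically, but one must also control colour changes), and ensure the resulting configuration is genuinely covered by an independent territory set — in particular that the two robots that end up in one territory do not drift together toward the lone robot's territory. I expect the cleanest route is to choose $C^\star$ so that after one or two synchronous steps the three robots are already pairwise at distance $\ge 2$ and each node is monochromatic, at which point Lemma~\ref{lem:imp-territory} is directly applicable; verifying that \emph{every} algorithm is driven into such a state from $C^\star$ (using only the available views and the FSYNC scheduler's freedom over symmetric views) is the delicate part, and may require a small case analysis on $\ALG$'s action on the handful of symmetric views that arise.
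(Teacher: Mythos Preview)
Your plan has a structural gap that would prevent it from going through. In step~(2) you require $C^\star$ to be solvable, meaning some algorithm succeeds from $C^\star$. But in step~(4) you claim that from $C^\star$, symmetry alone forces \emph{any} algorithm into early termination or an independent territory set. These two claims contradict each other: if step~(4) held for every $\ALG$, then $C^\star$ would be unsolvable. You gesture in the preamble at playing $C^\star$ against a designated $C^{\mathrm{des}}$ via indistinguishability, but the concrete steps never use any constraint derived from $\ALG$ succeeding on $C^{\mathrm{des}}$; step~(4) as written is a pure symmetry argument on $C^\star$ alone. Your suggested candidates ($\G\G\G$ or a tower) illustrate the problem: $\G\G\G$ is solvable by the colour-swapped Algorithm~\ref{alg:ft3}, so no symmetry argument from $\G\G\G$ alone can trap every algorithm.

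The missing idea is that you must first extract, from universality, a concrete \emph{rule} that $\ALG$ is forced to contain, and then show that rule breaks some other solvable configuration. The paper does exactly this: it argues that in a sufficiently large ring the three robots must eventually separate (otherwise they cannot distinguish ring sizes and either loop or terminate prematurely), and that after separation two of them must sit adjacently as $\G\W$ while the third is far away. For that pair to make progress toward the isolated robot, $\ALG$ must contain a rule of the form $\EMP(\W)\G::X,\RIGHT$ or $\EMP(\G)\W::X,\RIGHT$. But either such rule, applied from the solvable configuration $\W\G\W$ (respectively $\G\W\G$), collapses the two outer robots onto a common node with the same colour, after which they behave as a single robot and the system reduces to the two-robot impossibility. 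Your indistinguishability intuition is not wrong in spirit, but the leverage comes from constraining $\ALG$'s rule set via its behaviour on \emph{all} large rings, not from a single paired configuration.
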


\begin{proof}
(Sketch) To prove the lemma by contradiction, we assume universal algorithm $\ALG$ exists. We first prove that, by $\ALG$, the distance between some pair of two robots becomes at least five in some large ring. Intuitively this holds because, if the three robots always stay within distance four, they cannot distinguish small rings and large rings, and terminate without exploration in large rings.

When the distance becomes at least five, some two robots should be connected. Otherwise, we can define an independent territory set and exploration is impossible from Lemma \ref{lem:imp-territory}. Intuitively the two connected robots should explore the ring because another robot cannot go out of its territory. This implies the two robots should form sub-configuration $\G\W$. In addition, since one of the two robots should move toward another robot to explore the ring, $\ALG$ should include rule $0\W\G X: \EMP (\W) \G:: X,\RIGHT$ or $0\G\W X: \EMP (\G) \W:: X,\RIGHT$ for some $X\in Col$.

Assume $\ALG$ includes rule $0\W\G X$ (we can prove another case similarly). Consider a solvable configuration $\W\G\W$. From this configuration, two robots with color $\W$ move toward $\G$ at the same time by rule $0\W\G X$. This implies two robots with the same color make a tower in the next configuration. After the configuration, since the two robots move in the same manner, three robots behave as if there were only two robots. Similarly to Theorem \ref{thm:no-two-fsync}, we can prove that they cannot achieve terminating exploration. This is a contradiction because $\ALG$ cannot achieve terminating exploration from a solvable initial configuration.
\end{proof}

\section{Semi-synchronous and Asynchronous Robots}
\label{sec:async-perp}

\subsection{Impossibility of perpetual exploration with two robots}

In this subsection, we prove that two robots are not sufficient to achieve perpetual exploration in the SSYNC model. Clearly this impossibility result holds in the ASYNC model.

\begin{theorem}
\label{thm:sp-imp-two}
In case of $\phi=1$ and $k=2$, no algorithm solves perpetual exploration in the SSYNC model. This holds even if robots can use an infinite number of colors.
\end{theorem}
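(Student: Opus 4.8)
<br />

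The plan is to argue by contradiction, exploiting the fundamental weakness of two myopic robots: the only way they can act nontrivially is to be connected (otherwise each is isolated with a symmetric view, and Lemma~\ref{lem:imp-territory} pins them into independent territories). So I would first dispose of disconnected configurations as unsolvable, then concentrate on configurations where the two robots are connected. In that case I want to show the SSYNC scheduler can always force the robots into a situation where perpetual exploration fails — either by halting all progress, or by driving them apart so Lemma~\ref{lem:imp-territory} applies.

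First I would set up the contradiction: suppose $\ALG$ solves perpetual exploration with $k=2$ (even with infinitely many colors) from some initial configuration $C_0$ in the SSYNC model, in a ring of size $n \ge 6$ (small rings can be handled separately or subsumed). By the territory argument (Lemma~\ref{lem:imp-territory}), if the two robots ever become disconnected with distance $\ge 2$, they are confined to an independent territory set and cannot visit all nodes infinitely often; hence in any successful execution the two robots must remain connected forever, i.e., at every instant they are either neighbors or on the same node (a tower). Now the key move: in SSYNC the scheduler may activate just one robot at a time. So I would study what one robot can do when its neighbor (the other robot) is held still. If a robot $r$ is enabled and the scheduler activates only $r$, then $r$ either stays on its node, moves onto the other robot's node (forming a tower), or moves away from the other robot (becoming disconnected). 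The last option is fatal by the territory argument, so effectively the scheduler can forbid it: whenever a robot would move away from its stationary partner, the scheduler lets it, and exploration fails. Hence in a successful execution, whenever the scheduler activates a single robot, that robot must move either toward the partner or not at all.

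The heart of the argument is then to show the scheduler can freeze all motion. Consider the configuration where the two robots sit on neighboring nodes, say in states $(c_1, c_2)$, with each robot's view symmetric only if $c_1 = c_2$; in general each robot sees the other as its unique neighbor. If at least one robot is enabled to move toward the other (producing a tower), the scheduler activates just that robot, reaching a tower configuration. Once the two robots are on the same node, each observes an isolated tower; its view no longer depends on $n$ at all. From a tower, if a robot ever moves off, the partner can be held still, disconnecting them — fatal. So from a tower the robots must stay put or... they cannot make progress without separating, contradiction with perpetual exploration (the nodes away from the tower are never visited again). The remaining case is that from the neighboring configuration no robot is ever enabled to form a tower and no robot is ever enabled to move away (else fatal); then the only permitted action is a robot changing color in place. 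Since colors come from a fixed set in the $\kappa=2$ case (and the statement also claims it for infinitely many colors, so I must handle that too), I would argue that color changes alone, with both robots pinned to two adjacent nodes, can never move either robot — so no new node is ever visited, contradicting perpetual exploration once $n \ge 6 > 2$.

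The main obstacle I anticipate is the infinite-color case coupled with the possibility of \emph{simultaneous} movement: the adversary does not have to activate one robot at a time, and when both robots move in the same direction synchronously (as in Algorithm~\ref{alg:fp2}), the configuration can genuinely advance. So the real content is: the \emph{perpetual} exploration algorithm needs the robots to translate together forever, but the SSYNC scheduler can desynchronize them. Concretely, I would argue that if the intended behavior is "both robots shift one step in the same direction," the scheduler activates only the trailing robot first; it moves onto the leader's node, forming a tower; now they are stuck as above. If instead the intended behavior from the $(c_1,c_2)$ neighbor-configuration is that the leader moves away and the follower follows, the scheduler activates only the leader, disconnecting the pair — fatal. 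Either way, any rule that produces progress from a connected two-robot configuration is sabotaged by single-robot activation. Making this exhaustive — showing that \emph{every} possible rule set either makes no progress or is defeated by a one-robot activation leading to a tower or a disconnection — is the delicate bookkeeping step, but it is essentially a finite case analysis on "what can one robot do relative to its partner: toward / away / stay / change color," each branch ending in a contradiction via Lemma~\ref{lem:imp-territory} or via a frozen tower. I would organize it as a short lemma ("in any successful SSYNC execution with $k=2$ the robots stay connected, and the scheduler can force a tower, and from a tower exploration fails") and conclude.
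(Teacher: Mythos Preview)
Your final paragraph contains the correct and complete argument, and it is precisely the paper's proof: to visit any new node, some robot must at some point be enabled to move to a node not currently occupied by its partner (i.e., move ``away''); the SSYNC scheduler activates only that robot, the pair becomes disconnected (distance $\ge 2$), and Lemma~\ref{lem:imp-territory} finishes. The paper's entire proof is those two sentences.

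Everything before that final paragraph, however, is unnecessary and contains a genuine error. You write: ``From a tower, if a robot ever moves off, the partner can be held still, disconnecting them --- fatal.'' This is false. If both robots occupy $v$ and one moves to a neighbor while the other is held, the robots are now at distance~$1$, hence still connected in the paper's sense; Lemma~\ref{lem:imp-territory} does not apply, and nothing fatal has happened. You are then back in the adjacent-nodes case, and your tower detour has gained nothing. The same issue undermines the surrounding case analysis (``toward / away / stay / change color''): forcing a tower is not a dead end, since leaving a tower just returns you to the neighbor configuration.

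So drop the tower analysis entirely. The clean argument is: in any execution achieving perpetual exploration on a ring with $n\ge 6$, the two robots cannot remain forever on the same two adjacent nodes (or a single node), so at some instant some robot $r_i$ is enabled with an action that moves it to a node not occupied by the other robot. At that instant the scheduler activates only $r_i$; the distance becomes at least~$2$; an independent territory set exists; contradiction via Lemma~\ref{lem:imp-territory}. This works verbatim for infinitely many colors --- no finiteness of the color set is ever used.
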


\begin{proof}
Assume that such algorithm $\ALG$ exists for $n\ge 6$. First consider the case that initially two robots $r_1$ and $r_2$ are connected. Clearly some robot $r_i$ ($i\in\{1,2\}$) eventually moves against another robot. At that time, we assume that the scheduler activates only $r_i$. Since another robot does not move, the two robots become disconnected. 

From the above fact, two robots are initially disconnected or eventually become disconnected. Let us consider the configuration such that two robots are disconnected. In this case, we can define an independent territory set, and thus robots cannot achieve exploration from Lemma \ref{lem:imp-territory}. This is a contradiction.
\end{proof}

\subsection{A universal perpetual exploration algorithm for three robots with two colors}

In this subsection, we give a universal perpetual exploration algorithm for three robots with two colors in the SSYNC and ASYNC models. We first give a perpetual exploration algorithm by three robots with two colors in the ASYNC model, and after that we prove the algorithm is universal in the SSYNC and ASYNC models. A set of colors is $Col=\{\G,\W\}$. The algorithm is given in Algorithm \ref{alg:ap3}.

\begin{algorithm}[t]
\caption{Asynchronous Perpetual Exploration for $k=3$}
\label{alg:ap3}
\begin{algorithmic}
\renewcommand{\algorithmicrequire}{\textbf{Initial configurations}}
\REQUIRE
\STATE $\W\W\G$, $\W\G\G$, $\G\W\W$, $\G\G\W$,
 $\TOWERA{}{\W}\TOWERA{\G}{\W}$, $\TOWERA{\G}{\W}\TOWERA{}{\W}$, $\TOWERA{}{\G}\TOWERA{\W}{\G}$, and $\TOWERA{\W}{\G}\TOWERA{}{\G}$.
\renewcommand{\algorithmicrequire}{\textbf{Rules}}
\REQUIRE
\STATE $0\G\W:\ \ \TOWERN{\EMP}\TOWERN{(\G)}\TOWERN{\W} ::\ \G,\RIGHT$ \vspace{3mm}
\STATE $0T\W:\ \ \TOWERA{}{\EMP}\TOWERA{\G}{(\W)}\TOWERA{}{\W} ::\ \G,\RIGHT$ \vspace{3mm}
\STATE $0T\G:\ \ \TOWERA{}{\EMP}\TOWERA{\W}{(\G)}\TOWERA{}{\G} ::\ \W,\LEFT$ \vspace{3mm}
\STATE $0\W\G:\ \ \TOWERN{\EMP}\TOWERN{(\W)}\TOWERN{\G} ::\ \W,\RIGHT$
\end{algorithmic}
\end{algorithm}

Executions of Algorithm \ref{alg:ap3} for $n\ge 4$ are given in Fig.\,\ref{fig:alg-ap}. Let us consider configuration $\W\W\G$, and assume that $r_1$, $r_2$, and $r_3$ compose the configuration in this order (Fig.\,\ref{fig:alg-ap}(a)). Here only $r_3$ is enabled with rule $0\G\W$, and $r_3$ moves toward $r_2$. In a configuration in Fig.\,\ref{fig:alg-ap}(b), only $r_2$ is enabled with rule $0T\W$. If $r_2$ is activated, $r_2$ changes its color to $\G$ and moves toward $r_1$ (Fig.\,\ref{fig:alg-ap}(c)). Note that, in the ASYNC model, after $r_2$ changes its color, some robots may observe the intermediate configuration before $r_2$ moves toward $r_1$. However, since no rule matches the intermediate configuration, robots do not move based on the configuration. After $r_2$ moves from Fig.\,\ref{fig:alg-ap}(c) by rule $0T\G$, the sub-configuration becomes $\W\W\G$ (Fig.\,\ref{fig:alg-ap}(e)) but the robots change their positions from Fig.\,\ref{fig:alg-ap}(a) to Fig.\,\ref{fig:alg-ap}(e). Similarly, robots repeat the behavior from Fig.\,\ref{fig:alg-ap}(a) to Fig.\,\ref{fig:alg-ap}(e), and they achieve perpetual exploration. From configuration $\W\G\G$ in Fig.\,\ref{fig:alg-ap}(d), $r_1$ moves by rule $0\W\G$ and becomes a configuration in Fig.\,\ref{fig:alg-ap}(c). After that, they move similarly to the case from a configuration in Fig.\,\ref{fig:alg-ap}(a). These executions include configurations
\[
\W\W\G, \W\G\G, \TOWERA{}{\W}\TOWERA{\G}{\W}, \ and\  \TOWERA{\W}{\G}\TOWERA{}{\G},
\]
and consequently from these configurations robots can achieve perpetual exploration. Since remaining configurations
\[
\G\W\W, \G\G\W, \TOWERA{\G}{\W}\TOWERA{}{\W}, \ and\  \TOWERA{}{\G}\TOWERA{\W}{\G}
\]
are symmetric to the above configurations, robots can also achieve perpetual exploration from the configurations. Therefore, we have the following theorem.

\begin{figure}[t]
\begin{center}
\includegraphics[scale=0.35]{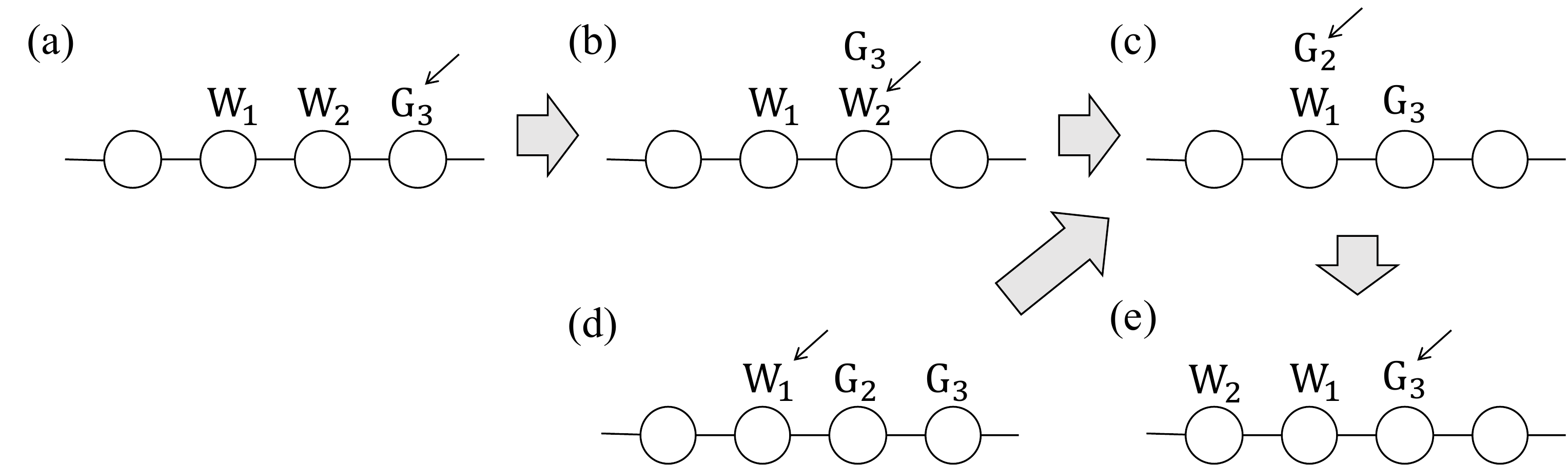}
\caption{Executions of Algorithm \ref{alg:ap3}}
\label{fig:alg-ap}
\end{center}
\end{figure}

\begin{theorem}
\label{thm:ap-alg-cor}
In case of $\phi=1$ and $k=3$, Algorithm \ref{alg:ap3} solves perpetual exploration from initial configurations
\[
\W\W\G, \W\G\G, \G\W\W, \G\G\W, 
\TOWERA{}{\W}\TOWERA{\G}{\W}, \TOWERA{\G}{\W}\TOWERA{}{\W}, 
\TOWERA{}{\G}\TOWERA{\W}{\G}, \ \mbox{and}\  \TOWERA{\W}{\G}\TOWERA{}{\G}
\]
for $n\ge 3$ in the ASYNC model.
\end{theorem}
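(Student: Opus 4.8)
The plan is to isolate a finite family $\mathcal{L}$ of legitimate cluster-configurations, to prove that $\mathcal{L}$ is closed under the transitions of Algorithm~\ref{alg:ap3}, and to show that every execution that stays inside $\mathcal{L}$ drives the three robots around the ring so that each node is occupied infinitely often. Concretely, $\mathcal{L}$ consists of the four shapes appearing in Fig.~\ref{fig:alg-ap} --- $\W\W\G$ (shape (a)), the tower shape $\TOWERA{}{\W}\TOWERA{\G}{\W}$ (shape (b)), the tower shape $\TOWERA{\W}{\G}\TOWERA{}{\G}$ (shape (c)), and $\W\G\G$ (shape (d)) --- together with their mirror images and all rotations inside the otherwise empty ring; the eight initial configurations of the statement are exactly these four shapes and their four mirrors. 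I also record the two transient configurations $\TOWERA{}{\W}\TOWERA{\G}{\G}$ and $\TOWERA{\W}{\W}\TOWERA{}{\G}$ (remaining nodes empty), which occur while a robot recolours inside shapes (b) and (c).

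The technical core is a purely local lemma, which I would prove first by a finite enumeration and for $n\ge 4$: in every configuration of $\mathcal{L}$ exactly one robot is enabled; that robot's view is asymmetric, so the direction of its move is fixed by the matching rule and not left to the scheduler; and the intermediate configuration obtained once it changes colour --- when the matching rule does change colour --- has no enabled robot at all. Checking this amounts to writing down the forward and backward views of all three robots in each of shapes (a)--(d), matching them against the four guards $0\G\W$, $0T\W$, $0T\G$, $0\W\G$, and doing the same for the two transient configurations; the mirror images cost nothing since the ring is unoriented, and for $n\ge 4$ the occupied nodes always form a block of at most three consecutive nodes with a nonempty empty arc, so that a guard demanding an empty neighbour genuinely finds one. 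This is the only point where $n=3$ behaves differently, since there a block of three occupied nodes is the whole ring.

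Armed with this lemma I would eliminate asynchrony. Because at every configuration of $\mathcal{L}$ only one robot $r^{*}$ is enabled and the algorithm has no default rule, activating any other robot is a no-op; hence between $r^{*}$'s Look and $r^{*}$'s Move the configuration cannot change, and at the intermediate configuration reached when $r^{*}$ recolours no robot is enabled either. Consequently at most one robot is ever mid-cycle, no robot ever acts on an outdated view, and every step of an execution is a complete Look--Compute--Move cycle of the unique enabled robot (or a no-op); fairness guarantees that robot eventually performs all three phases. It therefore suffices to follow the deterministic orbit of $\mathcal{L}$ under single cycles, which is exactly the loop of Fig.~\ref{fig:alg-ap}, $\W\W\G \to \TOWERA{}{\W}\TOWERA{\G}{\W} \to \TOWERA{\W}{\G}\TOWERA{}{\G} \to \W\W\G$ (the last $\W\W\G$ being the initial cluster translated by one node in a fixed direction), with $\W\G\G$ feeding into this loop after a single move. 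Since every enabled view is asymmetric, the cluster advances by one node in the same direction on each pass through the three-step loop, so it sweeps the whole ring and every node is occupied infinitely often, i.e.\ perpetual exploration holds. Finally $n=3$ is dispatched by hand: a non-tower initial configuration already occupies all three nodes and is terminal, and from a tower initial configuration two moves place the three robots on distinct nodes after which no rule is enabled; in either case every node carries a robot at every instant from some point on, so perpetual exploration holds trivially.

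The step I expect to be the main obstacle is the local lemma: one must be meticulous when listing views --- recalling that a guard column such as $\W$ denotes the singleton multiset and therefore does not match a tower, and that a robot cannot distinguish its forward view from its backward view --- because a single overlooked guard match, in one of shapes (a)--(d) or in one of the two transient configurations, would destroy either the ``exactly one enabled robot'' property or the closure of $\mathcal{L}$, and with it the whole reduction from the ASYNC model to a deterministic orbit.
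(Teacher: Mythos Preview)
Your proposal is correct and follows essentially the same approach as the paper: the paper's argument (given in the paragraph preceding the theorem and in Fig.~\ref{fig:alg-ap}) walks through shapes (a)--(e), observes that only one robot is enabled at each step, notes explicitly that the intermediate post-recolour configuration enables no rule, and points out that the orbit closes up with a one-node shift; your write-up formalizes exactly this by naming the closed set $\mathcal{L}$, stating the ``exactly one enabled robot, asymmetric view, transient enables nobody'' lemma, and then collapsing ASYNC to a deterministic orbit. The only substantive addition on your side is the separate treatment of $n=3$, which the paper's text glosses over (it analyses Fig.~\ref{fig:alg-ap} only for $n\ge 4$); your observation that for $n=3$ every rule's $\EMP$-guard fails once all three nodes are occupied, so the system freezes with every node perpetually occupied, is the right way to close that gap.
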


We can also show that other initial configurations are unsolvable for $n\ge 9$ in the SSYNC model. This implies Algorithm \ref{alg:ap3} is universal with respect to perpetual exploration for $n\ge 9$ in the SSYNC and ASYNC models.

\begin{theorem}
\label{thm:universal-async}
In case of $\phi=1$, $k=3$, and $Col=\{\G,\W\}$, Algorithm \ref{alg:ap3} is universal with respect to perpetual exploration for $n\ge 9$ in the SSYNC and ASYNC models.
\end{theorem}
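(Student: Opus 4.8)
To prove that Algorithm~\ref{alg:ap3} is universal with respect to perpetual exploration for $n\ge 9$ in the SSYNC and ASYNC models, the plan is to combine Theorem~\ref{thm:ap-alg-cor} with a complete characterization of the set $C_s(\PROB)$ of solvable configurations, then check that every configuration in $C_s(\PROB)$ is one of the eight initial configurations handled by Theorem~\ref{thm:ap-alg-cor}. So the real work is an \emph{impossibility} argument: show that every configuration of three robots with colors in $\{\G,\W\}$ that is not (up to symmetry and relabeling of which robot is which) one of $\W\W\G$, $\W\G\G$, $\TOWERA{}{\W}\TOWERA{\G}{\W}$, $\TOWERA{\W}{\G}\TOWERA{}{\G}$ is unsolvable for $n\ge 9$.

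First I would classify the configurations of three robots by their ``shape'': (i) all three robots isolated from one another (pairwise distance $\ge 2$); (ii) exactly two robots connected (at distance one) and the third at distance $\ge 2$ from both; (iii) all three robots mutually within distance one, i.e. a line of three occupied nodes, a tower of two plus an adjacent single robot, or a tower of three. For shape (i), since $n\ge 9$ every robot can be assigned a private territory of two nodes (its own node plus one free neighbor) and one checks these territories form an independent territory set; by Lemma~\ref{lem:imp-territory} the robots never leave their territories, so at most six nodes are visited and perpetual exploration fails. For shape (ii), I would argue as in the proof of Theorem~\ref{thm:sp-imp-two}: the lone robot is isolated, and by the same territory argument it is confined to a two-node territory forever, so the two connected robots must explore the ring on their own; but two robots cannot achieve perpetual exploration (this is exactly the content of Theorem~\ref{thm:sp-imp-two}, whose proof works for any algorithm and any colors), so these configurations are unsolvable. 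Strictly, one must first note that once the lone robot is boxed in, the remaining two robots of shape (ii) face a subproblem indistinguishable from the two-robot problem, and invoke Theorem~\ref{thm:sp-imp-two}'s argument (a robot eventually moves against its partner while the scheduler activates only it, disconnecting them, after which Lemma~\ref{lem:imp-territory} applies).

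For shape (iii) I would enumerate the remaining connected three-robot configurations not already covered by Theorem~\ref{thm:ap-alg-cor}: the line configurations $\W\W\W$, $\G\W\G$, $\G\G\G$ and their symmetric/relabeled variants, and the tower-plus-single and tower-of-three configurations such as $\TOWERA{\W}{\W}\TOWERA{}{\W}$, $\TOWERB{\W}{\W}{\W}$, etc. For each such configuration I would exhibit an SSYNC schedule that drives the system into shape~(i) or shape~(ii) — typically by showing that whatever the algorithm's rules prescribe (and using symmetry of the view so the scheduler picks the direction, plus the freedom to activate only a subset of robots), some robot can be split off to distance $\ge 2$ and then, with $n\ge 9$, the independent-territory or two-robot-impossibility argument finishes the job. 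The configurations $\W\W\G$-type and $\TOWERA{}{\W}\TOWERA{\G}{\W}$-type (and their mirrors) are precisely the ones that survive, because there the ``obvious'' confining moves are blocked by the presence of two colors; these are exactly the ones Theorem~\ref{thm:ap-alg-cor} solves.

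The main obstacle is the shape-(iii) case analysis: unlike shapes (i) and (ii), there one cannot immediately apply Lemma~\ref{lem:imp-territory}, and the impossibility must hold \emph{against every algorithm}, so one has to reason about all possible rule sets simultaneously. The key lever is that in a symmetric three-robot configuration a robot with a symmetric view cannot control its own direction, and the SSYNC scheduler can activate robots one at a time; I expect that a short sequence of such adversarial activations always reaches a disconnected configuration from which $n\ge 9$ makes exploration impossible, but making this rigorous for the tower configurations (where a robot can ``leave a copy behind'' is not possible — towers of identical-colored robots behave as one) requires care about how many distinct colors actually appear and whether the configuration can become a non-symmetric one before disconnection. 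Once that enumeration is complete, the theorem follows immediately: $C_s(\PROB)$ equals the eight-configuration list of Theorem~\ref{thm:ap-alg-cor}, and Algorithm~\ref{alg:ap3} solves perpetual exploration from all of them.
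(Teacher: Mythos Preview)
Your overall strategy---show that every configuration outside the eight listed in Theorem~\ref{thm:ap-alg-cor} is unsolvable---matches the paper's, but your decomposition by ``shape'' has a genuine gap in case~(i). You claim that when all three robots are pairwise at distance $\ge 2$ one can always assign each robot a two-node territory so that the resulting territory set is independent, and then invoke Lemma~\ref{lem:imp-territory}. This is false: take robots at $v_0, v_2, v_4$ (pairwise distances $2,2,4$ for $n\ge 9$). The robot at $v_2$ must use territory $\{v_1,v_2\}$ or $\{v_2,v_3\}$; either choice places a territory node at distance~$1$ from $v_0$ or from $v_4$, so no independent territory set exists. Your shape~(ii) argument inherits the same problem, since after the scheduler disconnects the two adjacent robots you land back in a shape-(i) configuration that may well be of this tight $0$--$2$--$4$ type.

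The paper avoids this by organizing the case analysis not by connectivity but by the \emph{longest} pairwise distance (Lemmas~\ref{lem:sp-imp-dist4}, \ref{lem:sp-imp-dist3}, \ref{lem:sp-imp-dist2-tower}, \ref{lem:sp-imp-dist2-sync}, together with Lemma~\ref{lem:sp-imp-tower} for same-color towers). In the delicate cases where the territory lemma does not apply directly---precisely configurations like $v_0,v_2,v_4$ or $v_0,v_1,v_3$---the paper uses the SSYNC scheduler to activate one robot at a time and exploit the symmetry of each robot's view to force its direction, showing either that the robots are trapped in a bounded set of nodes or that the configuration eventually reaches one where the territory argument \emph{does} apply. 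Ironically, the case you flag as the ``main obstacle'' (your shape~(iii), the fully connected configurations) is comparatively routine: it is handled by Lemmas~\ref{lem:sp-imp-tower} and~\ref{lem:sp-imp-dist2-sync}, while the real subtlety lies in the loosely-spaced configurations you thought were easy.
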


\subsection{Impossibility of terminating exploration with three robots}

In this subsection, we prove that three robots are not sufficient to achieve terminating exploration in the SSYNC model. Clearly this impossibility result holds in the ASYNC model.

\begin{theorem}
\label{thm:no-three-ssync}
In case of $\phi=1$ and $k=3$, no algorithm solves terminating exploration in the SSYNC model. This holds even if robots can use an infinite number of colors.
\end{theorem}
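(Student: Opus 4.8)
The plan is to argue by contradiction, in the same spirit as Theorems~\ref{thm:no-two-fsync} and~\ref{thm:sp-imp-two}: combine an indistinguishability argument over rings of different sizes with the confinement Lemma~\ref{lem:imp-territory}. Assume some algorithm $\ALG$ solves terminating exploration with $k=3$ from an initial configuration $C_0$ in the SSYNC model, allowing infinitely many colors.

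The first step is to show that in every sufficiently large ring the three robots cannot stay \emph{connected} (occupying a set of at most three consecutive nodes) forever: if they did, then since $\phi=1$ every view a robot ever obtains involves only a bounded arc around the window and is therefore independent of the ring size, so the finite scheduler/move sequence leading to termination in a ring $R_1$ could be replayed verbatim in a much larger ring $R_2$, making the robots halt after visiting only a bounded arc of $R_2$ --- contradicting exploration of $R_2$. Since $\ALG$ is assumed to work in rings of all sizes, it follows that in \emph{every} sufficiently large ring the robots become disconnected at some instant. Fix such a ring $R$, with $n$ as large as required below, and let $t_0$ be the first instant at which the configuration in $R$ is disconnected. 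The robots were within a window of $3$ nodes at $t_0-1$, so after one SSYNC step they lie within a window of at most $5$ nodes, and the configuration at $t_0$ is either (a) three distinct singleton-occupied nodes, or (b) a \emph{pair} --- a tower of two robots, or two adjacent robots --- together with one lone robot.

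Case (a) is the easy one. Each occupied node carries a single robot, hence a uniform color, so the only missing hypothesis of Lemma~\ref{lem:imp-territory} is an independent territory set. If the occupied nodes are not of the pathological form $v_x, v_{x+2}, v_{x+4}$, such a set exists at once and the robots are confined to at most six nodes forever, contradicting exploration of the huge ring $R$. In the pathological case every robot is \emph{isolated} and thus has a symmetric view, so the scheduler controls its direction; either some robot is never enabled to move (or cycles colors at one node) --- in which case the ring is never fully explored --- or the scheduler uses a few activations to push two of the robots apart into a non-pathological three-singleton configuration, for which an independent territory set exists, and Lemma~\ref{lem:imp-territory} again applies. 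Similarly, if the pair in case (b) is a tower of two equal-colored robots and the lone robot is at distance at least two, the tower node counts as a single uniformly-colored occupied node, an independent territory set for two such nodes in a large ring always exists, and Lemma~\ref{lem:imp-territory} finishes the case.

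The hard part is the remaining subcase of (b): a pair that is not a uniform tower, plus a lone robot $L$. Here the plan is for the scheduler to exploit that an isolated robot has a symmetric view, so $L$ can always be kept isolated --- either bounced inside a $2$-node window, or kept a few nodes ahead of the advancing pair --- while the pair is driven according to whatever its rule dictates: if the pair is static, swaps positions, or merely recolors, it visits at most two nodes and $L$ alone (now confinable by Lemma~\ref{lem:imp-territory}) cannot explore $R$; if the pair forms a tower with distinct colors it either gets stuck (no exploration) or re-separates into two nearly coincident robots, reducing to case (a); and if the pair stays adjacent and translates (the ``circling'' behaviour used by the perpetual exploration algorithm), the scheduler maintains a constant gap so the pair never sees $L$ and keeps circling, producing an infinite execution with no suffix in which no robot is enabled, contradicting the termination requirement of terminating exploration. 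Making this last subcase fully rigorous --- precisely tracking the instant the circling pair approaches $L$ and verifying that the scheduler can perpetually maintain the gap while still activating every robot infinitely often --- is where the real work lies; everything else reduces cleanly to Lemma~\ref{lem:imp-territory} and to the two-robot arguments already established for Theorems~\ref{thm:no-two-fsync} and~\ref{thm:sp-imp-two}.
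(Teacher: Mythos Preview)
Your high-level plan --- force disconnection by an indistinguishability argument across ring sizes, then argue that any disconnected three-robot configuration is confinable --- matches the paper exactly. The paper's proof is brief only because it offloads the second step to the proof of Theorem~\ref{thm:universal-async}: the lemmas proved there (which nowhere use the two-color hypothesis) establish that from \emph{every} three-robot configuration outside the eight listed in Theorem~\ref{thm:ap-alg-cor}, the SSYNC scheduler can trap the robots in a bounded window. So the paper simply says: once disconnected in the large ring, invoke that analysis.

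Your own confinement argument has a genuine gap in the ``circling pair'' subcase. The plan you propose --- let the pair translate while the scheduler advances $L$ to keep the gap constant, yielding a non-terminating run --- breaks when the algorithm leaves an isolated robot disabled (a perfectly legal design choice). Then $L$ is immovable, the pair eventually reaches it, and you have no argument for what happens next. The repair is not to let the pair run but to \emph{confine} it: activate the two pair robots one at a time. If at some point one of them is enabled to move away from its partner, activate only that robot; the pair splits, and (possibly after one more such split to escape the $v_x,v_{x+2},v_{x+4}$ pattern) you reach a configuration admitting an independent territory set. If neither ever moves away under solo activation, then each step can only form a tower, dissolve a tower back to the adjacent pair on the same two nodes, or recolor in place, so the pair never translates. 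Either way the pair is pinned to a bounded arc, $L$ is pinned by Lemma~\ref{lem:imp-territory}, and exploration of the large ring fails. This is the mechanism of Theorem~\ref{thm:sp-imp-two} replayed with a bystander, and it is exactly what the lemmas behind Theorem~\ref{thm:universal-async} make precise; the boundary cases (pair at distance two with the lone robot at distance three, a split landing adjacent to $L$, the distinct-color tower re-separating into an \emph{adjacent} pair rather than into your case~(a), etc.) require the kind of careful bookkeeping those lemmas perform, which your sketch does not yet supply.
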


\begin{proof}
For contradiction, assume that such algorithm $\ALG$ exists. Consider an execution $E=C_0,C_1,\ldots$ of $\ALG$ in a $n_1$-node ring $R_1$ ($n_1\ge 9$). Let $t$ be the minimum instant such that three robots terminate or become disconnected at $C_t$. Next, for $n_2=4(t+1)$, let us consider an execution $E'=C'_0,C'_1,\ldots$ of $\ALG$ in a $n_2$-node ring $R_2$ such that the scheduler activates robots similarly to $E$. Clearly, as long as three robots keep connected, they do not recognize the difference between $R_1$ and $R_2$. Hence, in $E'$, three robots move similarly to $E$ until $C'_t$. Note that, since robots have visited at most $3(t+1)$ until $C'_t$, they must explore the remaining $t+1$ nodes. If three robots terminate at $C'_t$, they do not achieve exploration. If three robots become disconnected at $C'_t$, robots cannot achieve exploration from the proof of Theorem \ref{thm:universal-async}.
\end{proof}

\subsection{A terminating exploration algorithm for four robots with two colors}

In this subsection, we give a terminating exploration algorithm for four robots with two colors in case of $n\ge 4$. A set of colors is $Col=\{\G,\W\}$. The algorithm is given in Algorithm \ref{alg:at4}. Note that rules $0\G\W$, $0T\W$, $0T\G$ are identical to Algorithm \ref{alg:ap3}. Hence, once three robots construct sub-configurations
\[
\CONF_{pe}=\left\{\W\W\G, \TOWERA{}{\W}\TOWERA{\G}{\W}, \TOWERA{\W}{\G}\TOWERA{}{\G}\right\},
\]
they explore the ring similarly to Algorithm \ref{alg:ap3}.

\begin{algorithm}[t]
\caption{Asynchronous Terminating Exploration for $k=4$}
\label{alg:at4}
\begin{algorithmic}
\renewcommand{\algorithmicrequire}{\textbf{Initial configurations}}
\REQUIRE
\STATE $\W\W\G\G$, $\W\W\W\G$, $\W\W\G\W$, $\G\G\W\W$, $\G\W\W\W$, $\W\G\W\W$.
\renewcommand{\algorithmicrequire}{\textbf{Rules}}
\REQUIRE
\STATE $0\G\W:\ \ \TOWERN{\EMP}\TOWERN{(\G)}\TOWERN{\W} ::\ \G,\RIGHT$\vspace{3mm}
\STATE $\G\G\W:\ \ \TOWERN{\G}\TOWERN{(\G)}\TOWERN{\W} ::\ \G,\RIGHT$\vspace{3mm}
\STATE $0T\W:\ \ \TOWERA{}{\EMP}\TOWERA{\G}{(\W)}\TOWERA{}{\W} ::\ \G,\RIGHT$\vspace{3mm}
\STATE $\G T\W:\ \ \TOWERA{}{\G}\TOWERA{\G}{(\W)}\TOWERA{}{\W} ::\ \G,\RIGHT$\vspace{3mm}
\STATE $0T\G:\ \ \TOWERA{}{\EMP}\TOWERA{\W}{(\G)}\TOWERA{}{\G} ::\ \W,\LEFT$\vspace{3mm}
\STATE $0\W\G:\ \ \TOWERN{\EMP}\TOWERN{(\W)}\TOWERN{\G} ::\ \W,\RIGHT$
\end{algorithmic}
\end{algorithm}

Executions of Algorithm \ref{alg:at4} for $n\ge 5$ are given in Fig. \ref{fig:alg-at-wwgg}. At configuration $\W\W\G\G$ (Fig.\,\ref{fig:alg-at-wwgg}(a)) only $r_3$ can move by rule $\G\G\W$, and the configuration becomes one in Fig.\,\ref{fig:alg-at-wwgg}(b) after it moves. Since $r_1$, $r_2$, and $r_3$ form a sub-configuration in $\CONF_{pe}$, they explore the ring. Since $r_4$ does not move, three robots eventually join $r_4$ from the opposite direction (Fig.\,\ref{fig:alg-at-wwgg}(c)).

Figure\,\ref{fig:alg-at-last} shows executions after a configuration in Fig.\,\ref{fig:alg-at-wwgg}(c). In this figure, we reassign indices to robots: $r_1$, $r_2$, $r_3$, and $r_4$ form sub-configuration $\G\W\W\G$ in this order. Since $r_1$ and $r_4$ are enabled, robots can make several behaviors depending on activation by the scheduler. Notation LC-$i$ (resp., M-$i$) means the scheduler activates Look and Compute phases (resp., Move phase) of $r_i$. Although the scheduler can activate Look and Compute phases separately, we combine the two phases because a Look phase does not change a configuration. At configuration $\G\W\W\G$, the scheduler can activate $r_1$ and $r_4$ to change the configuration. If the scheduler activates exactly one robot, we only show the case of $r_1$ because the case of $r_4$ is symmetric to $r_1$. An arrow below a robot means that the robot decides to move to the direction. In every execution from configuration $\G\W\W\G$, robots eventually terminate. That is, Algorithm \ref{alg:at4} solves terminating exploration from initial configuration $\W\W\G\G$.

\begin{figure}[t]
\begin{center}
\includegraphics[scale=0.35]{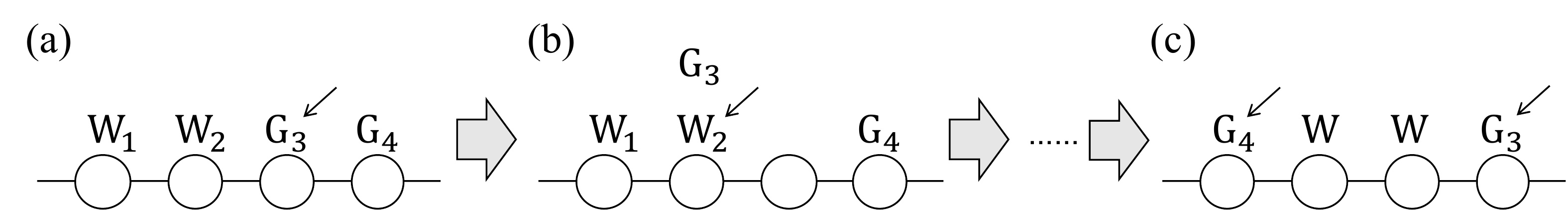}
\caption{An execution from $\W\W\G\G$ of Algorithm \ref{alg:at4}}
\label{fig:alg-at-wwgg}
\end{center}
\end{figure}

\begin{figure}[t]
\begin{center}
\includegraphics[width=0.95\textwidth]{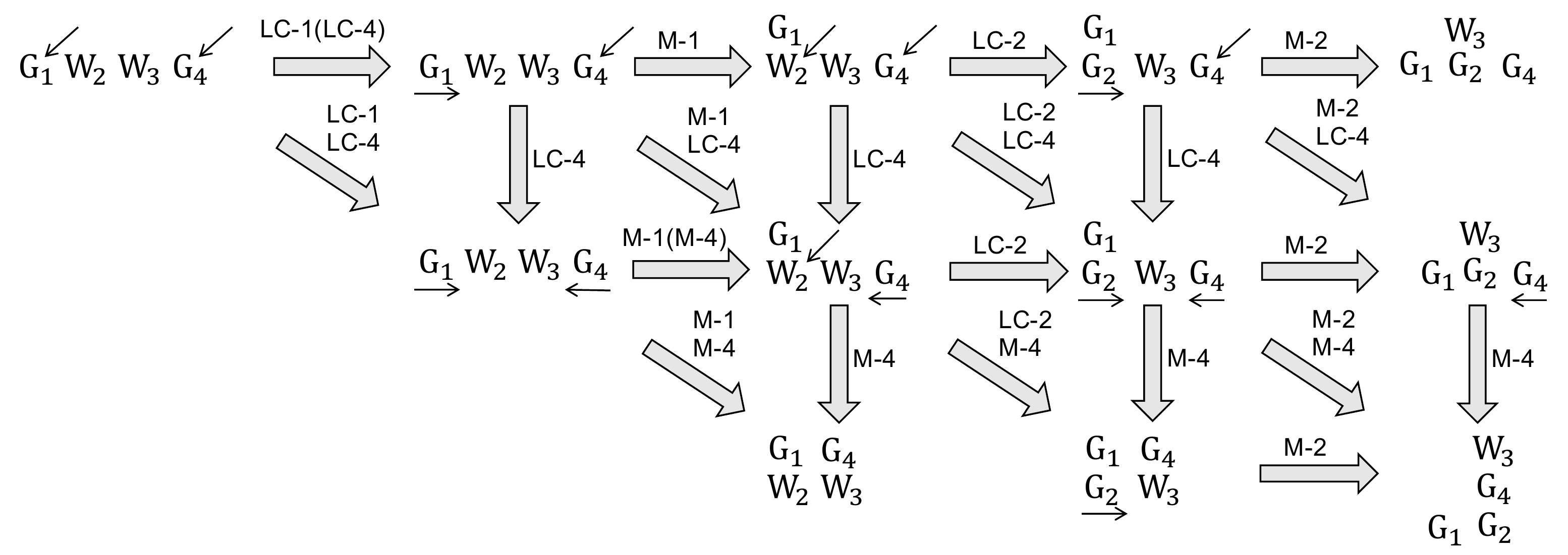}
\caption{A termination execution of Algorithm \ref{alg:at4}}
\label{fig:alg-at-last}
\end{center}
\end{figure}

We consider other initial configurations $\W\W\W\G$ and $\W\W\G\W$ in Fig.\,\ref{fig:alg-at-other}. From initial configuration $\W\W\W\G$ (Fig.\,\ref{fig:alg-at-other}(a)), robots eventually form configuration $\W\W\G\G$ (Fig.\,\ref{fig:alg-at-other}(e)) and thus they can solve terminating exploration. From initial configuration $\W\W\G\W$ (Fig.\,\ref{fig:alg-at-other}(f)), robots form a configuration in Fig.\,\ref{fig:alg-at-other}(g) and the configuration is the same as in Fig.\,\ref{fig:alg-at-other}(b). Hence, they can solve terminating exploration.

\begin{figure}[t]
\begin{center}
\includegraphics[scale=0.35]{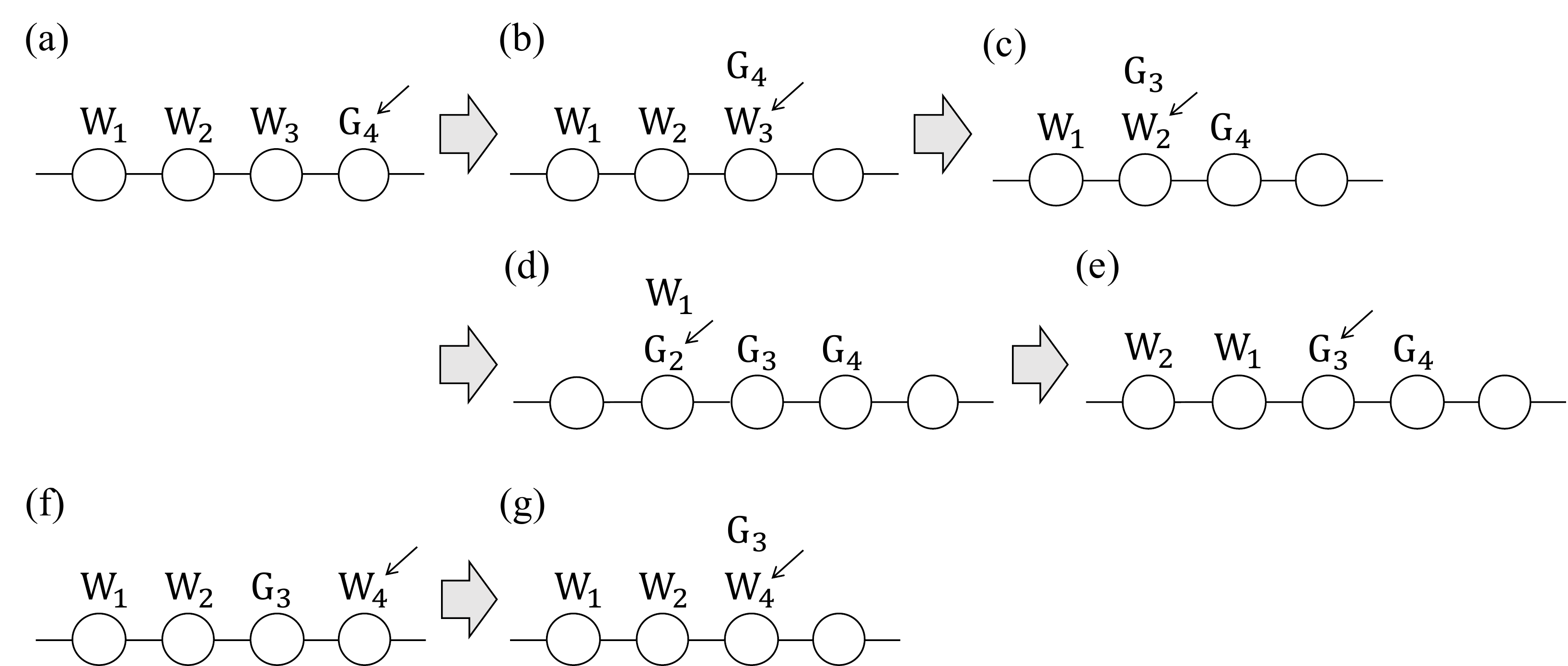}
\caption{Executions from $\W\W\W\G$ and $\W\W\G\W$ of Algorithm \ref{alg:at4}}
\label{fig:alg-at-other}
\end{center}
\end{figure}

Since configurations $\G\G\W\W$, $\G\W\W\W$, and $\W\G\W\W$ are symmetric to $\W\W\G\G$, $\W\W\W\G$, and $\W\W\G\W$, respectively, we have the following theorem.

\begin{theorem}
In case of $\phi=1$ and $k=4$, Algorithm \ref{alg:at4} solves terminating exploration from initial configurations $\W\W\G\G$, $\W\W\W\G$, $\W\W\G\W$, $\G\G\W\W$, $\G\W\W\W$, and $\W\G\W\W$ for $n\ge 5$ in the ASYNC model.
\end{theorem}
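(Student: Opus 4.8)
The plan is to verify Algorithm~\ref{alg:at4} by a finite reachable-configuration analysis, one tree of configurations per listed initial configuration, establishing in each case that (a) every node of the ring is occupied at some instant and (b) every execution has a suffix in which no robot is enabled. First I would halve the work with a symmetry argument: every guard of Algorithm~\ref{alg:at4} is written so that it matches a forward \emph{or} a backward view, and the ring is unoriented, so if $\overline{C}$ denotes the mirror image of a configuration $C$, the executions from $\overline{C}$ are exactly the mirror images of those from $C$. Since $\G\G\W\W$, $\G\W\W\W$, $\W\G\W\W$ are the mirror images of $\W\W\G\G$, $\W\W\W\G$, $\W\W\G\W$ respectively, it suffices to treat the three configurations whose left half is $\W\W$.

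Next I would reduce the two ``three-$\W$'' cases to $\W\W\G\G$. From $\W\W\W\G$ only the extremal robot coloured $\G$ is enabled (by rule $0\G\W$); I would follow the resulting chain of moves, checking that exactly one robot is enabled at each step, and verify that the system passes through the configurations of Fig.~\ref{fig:alg-at-other} and reaches sub-configuration $\W\W\G\G$ up to a one-node shift. From $\W\W\G\W$ only the extremal robot coloured $\W$ is enabled (by rule $0\W\G$), and after one move the configuration coincides with one already reached along the $\W\W\W\G$ branch. Since every node occupied in the initial configuration is visited at instant~$0$, these prefixes lose no coverage, and because a single robot is enabled throughout them the ASYNC scheduler's only freedom is to separate Look/Compute from Move; for that I would carry out the routine check that the transient configuration obtained once the enabled robot has updated its colour but not yet moved matches no guard, so no spurious rule fires.

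The core is the analysis of $\W\W\G\G$, which I would split into two stages. First (Fig.~\ref{fig:alg-at-wwgg}), rule $\G\G\W$ fires once, after which three robots occupy a sub-configuration in $\CONF_{pe}$ while the fourth robot keeps colour $\G$ with both neighbours free, hence is isolated and stays permanently un-enabled until a trio robot becomes adjacent to it. Because rules $0\G\W$, $0T\W$, $0T\G$, $0\W\G$ are shared with Algorithm~\ref{alg:ap3} and — as one checks — the extra rules $\G\G\W$, $\G T\W$ never match while the trio is in $\CONF_{pe}$ and the fourth robot is isolated (they require a robot with a $\G$ neighbour on a prescribed side, a pattern absent there), the trio evolves exactly as under Algorithm~\ref{alg:ap3}; by the cyclic structure behind Theorem~\ref{thm:ap-alg-cor} it advances one fresh node per cycle around the ring, so after finitely many cycles it has visited every node not occupied at instant~$0$, which together with the four nodes occupied then covers the whole ring. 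The moment a trio robot first lands next to the fourth robot, that robot has never been enabled, hence has performed no Look while enabled and has no pending move, so the configuration reached is exactly $\G\W\W\G$ on four consecutive nodes, the remaining nodes free and already visited. From $\G\W\W\G$ both extremal robots are enabled, and I would exhaust the finite set of configurations reachable under all ASYNC interleavings — including moves executed on outdated views — as tabulated in Fig.~\ref{fig:alg-at-last}, verifying that every branch ends in a configuration with no enabled robot. Small rings need no separate treatment: the $\EMP^{h}$ blocks in these figures stand for any number $h\ge 0$ of free nodes, i.e.\ any $n\ge 5$.

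I expect the last enumeration, from $\G\W\W\G$, to be the main obstacle: with two robots simultaneously enabled one must consider every interleaving of their Look/Compute and Move sub-phases, including a robot moving on a view taken before the other robot acted, and show both that the reachable-configuration graph is finite and that all of its sinks are terminal configurations. Once this finite-but-tedious case analysis is complete, the theorem follows by combining it with the coverage established in the earlier stages and the observation that every node has already been visited by the time $\G\W\W\G$ is reached.
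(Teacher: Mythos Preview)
Your proposal is correct and follows essentially the same approach as the paper: reduce by mirror symmetry to the three $\W\W\cdots$ configurations, funnel $\W\W\W\G$ and $\W\W\G\W$ into $\W\W\G\G$ via the single-enabled-robot chains of Fig.~\ref{fig:alg-at-other}, then from $\W\W\G\G$ observe that after one application of $\G\G\W$ the trio runs the perpetual-exploration cycle of Algorithm~\ref{alg:ap3} while the isolated $\G$ robot waits, until the meeting configuration $\G\W\W\G$ is reached and the finite ASYNC case analysis of Fig.~\ref{fig:alg-at-last} yields termination. Your extra care about pending moves of the idle robot and about transient colour-changed-but-not-yet-moved configurations is exactly the ASYNC bookkeeping the paper alludes to when it remarks that ``no rule matches the intermediate configuration''; note only that rule $\G T\W$ \emph{does} fire during the $\W\W\W\G\to\W\W\G\G$ reduction (it is inert only in the later exploration phase, as you state), so be sure your chain-of-moves trace in that prefix includes it.
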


Note that we can construct another algorithm by swapping the roles of colors $\G$ and $\W$ in Algorithm \ref{alg:at4}. Clearly this algorithm solves terminating exploration from configurations such that colors $\G$ and $\W$ are swapped from solvable configurations for Algorithm \ref{alg:at4}. This implies configurations $\G\G\G\W$, $\G\G\W\G$, $\W\G\G\G$, and $\G\W\G\G$ are also solvable. Hence, we have the following lemma.

\begin{lemma}
\label{lem:at-solvable}
If $k=4$ holds and a set of colors is $\{\G,\W\}$, configurations $\W\W\G\G$, $\W\W\W\G$, $\W\W\G\W$, $\G\G\W\W$, $\G\W\W\W$, $\W\G\W\W$, $\G\G\G\W$, $\G\G\W\G$, $\W\G\G\G$, and $\G\W\G\G$ are solvable for terminating exploration in the ASYNC model.
\end{lemma}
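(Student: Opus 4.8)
The plan is to obtain the ten listed configurations as the union of two families: the six configurations already covered by Algorithm~\ref{alg:at4} together with the preceding theorem, and their images under exchanging the two colors, which are covered by the ``mirror'' algorithm obtained from Algorithm~\ref{alg:at4} by swapping $\G$ and $\W$ in every rule.

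First, the theorem immediately preceding this lemma asserts that Algorithm~\ref{alg:at4} solves terminating exploration from $\W\W\G\G$, $\W\W\W\G$, $\W\W\G\W$, $\G\G\W\W$, $\G\W\W\W$, and $\W\G\W\W$ for $n\ge 5$ in the ASYNC model. For the single remaining small case $n=4$, which the algorithm in this subsection was designed to handle as well, I would check directly that the same rules drive each of these six configurations to a terminating configuration after all four nodes have been visited; the relevant execution trees are finite and small. Hence each of these six configurations is solvable for terminating exploration for every $n\ge 4$.

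Second, let $\sigma$ be the permutation of $Col=\{\G,\W\}$ with $\sigma(\G)=\W$ and $\sigma(\W)=\G$, extended to multisets of colors, configurations, and views by relabelling every color. Let $\ALG'$ be the algorithm obtained from Algorithm~\ref{alg:at4} by applying $\sigma$ to every color occurring in every guard and every action. Since the system model treats the two colors symmetrically (nothing distinguishes $\G$ from $\W$ other than the rules of the algorithm), for any scheduler and any execution $E=C_0,C_1,\ldots$ of Algorithm~\ref{alg:at4} there is a matching execution $\sigma(E)=\sigma(C_0),\sigma(C_1),\ldots$ of $\ALG'$ under the analogous scheduler, and conversely; moreover a robot is enabled in $C_t$ under Algorithm~\ref{alg:at4} iff it is enabled in $\sigma(C_t)$ under $\ALG'$, and ``every node is visited by at least one robot'' is a color-blind predicate. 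Consequently $\ALG'$ solves terminating exploration from $\sigma(C_0)$ exactly when Algorithm~\ref{alg:at4} solves it from $C_0$. Applying $\sigma$ to the six configurations above shows that $\ALG'$ solves terminating exploration from $\G\G\W\W$, $\G\G\G\W$, $\G\G\W\G$, $\W\W\G\G$, $\W\G\G\G$, and $\G\W\G\G$ (for the same range of $n$).

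Taking the union of the two families, each of $\W\W\G\G$, $\W\W\W\G$, $\W\W\G\W$, $\G\G\W\W$, $\G\W\W\W$, $\W\G\W\W$, $\G\G\G\W$, $\G\G\W\G$, $\W\G\G\G$, and $\G\W\G\G$ is solvable for terminating exploration — each by Algorithm~\ref{alg:at4} or by $\ALG'$ — which is exactly the claim. There is no genuinely hard step here; the only points requiring a little care are the precise statement of the color-symmetry correspondence between Algorithm~\ref{alg:at4} and $\ALG'$ (it must preserve not only the set of visited nodes but also the enabledness/termination and disconnection bookkeeping), and the disposal of the corner case $n=4$.
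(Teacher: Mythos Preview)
Your approach is correct and is essentially the same as the paper's: the paper also obtains the four additional configurations by observing that swapping the roles of $\G$ and $\W$ in Algorithm~\ref{alg:at4} yields an algorithm solving terminating exploration from the color-swapped initial configurations, and then takes the union with the six configurations already handled by the preceding theorem. Your explicit formulation of the color-swap bijection $\sigma$ and your attention to the $n=4$ corner case are details the paper glosses over, but the underlying argument is identical.
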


We also prove that there exists no universal algorithm with respect to terminating exploration for four robots with two colors. This validates the assumption that Algorithm \ref{alg:at4} starts from some designated initial configuration.

\begin{theorem}
\label{thm:no-universal-ssync}
In case of $\phi=1$, $k=4$, and $\kappa=2$, no universal algorithm exists with respect to terminating exploration in the SSYNC and ASYNC models.
\end{theorem}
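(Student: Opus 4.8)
The plan is to mirror the proof of Theorem~\ref{thm:no-universal-fsync}, reducing to the three-robot impossibility (Theorem~\ref{thm:no-three-ssync}) instead of the two-robot one, and using the freedom of the SSYNC scheduler wherever the $k=3$ argument used full synchrony. Assume for contradiction that a universal algorithm $\ALG$ exists for $\phi=1$, $k=4$ and $\kappa=2$ in the SSYNC model; the ASYNC case follows since every schedule we build is an SSYNC schedule and every configuration we use is ASYNC-solvable by Lemma~\ref{lem:at-solvable}. First I would show the robots must spread out: fix a solvable configuration $C_0$ from Lemma~\ref{lem:at-solvable} and a ring $R_0$ (with $n_0=12$, say) in which $C_0$ is solvable. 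If along every execution of $\ALG$ from $C_0$ in $R_0$ the four robots always remained within a window of at most five consecutive nodes, then running $\ALG$ from $C_0$ in a much larger ring would produce exactly the same views at each step (the nodes flanking the window being free in both rings), so $\ALG$ would terminate after visiting only a bounded number of nodes --- impossible for terminating exploration in the larger ring, where $C_0$ is still solvable. Hence some execution reaches a first configuration $D$ in which two robots are at distance at least five; since four connected robots have diameter at most three, at $D$ the robots form at least two connected components.

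From $D$ I would argue, following Lemmas~\ref{lem:fp-imp-isolated}--\ref{lem:fp-imp-gg} and the sketch of Theorem~\ref{thm:no-universal-fsync}, that the execution either reaches a configuration whose occupied nodes are pairwise non-adjacent, monochromatic, and enclosable in an independent territory set --- in which case Lemma~\ref{lem:imp-territory} finishes the proof --- or contains a connected component that carries out the long-range exploration of the ring. In the latter case, a symmetry argument in the spirit of Lemma~\ref{lem:fp-imp-gg} shows that an advancing connected group cannot be monochromatic unless the scheduler can either disconnect it or collapse two equally-coloured robots onto a single node, and tracing how such a group advances forces $\ALG$ to contain a \emph{chasing} rule $0\W\G X:\EMP(\W)\G::X,\RIGHT$ or $0\G\W X:\EMP(\G)\W::X,\RIGHT$ for some $X\in\{\G,\W\}$ (possibly together with the corresponding tower rules). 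Assuming WLOG that $\ALG$ contains $0\W\G X$, I would then exhibit a configuration containing a tower of two robots of the same colour: either directly among the configurations of Lemma~\ref{lem:at-solvable} --- for instance $\W\G\G\G$ (resp.\ $\G\W\W\W$), from which a recolouring chasing rule produces a tower of two $\G$-robots (resp.\ two $\W$-robots) in a single activation --- or as a configuration reached from such a configuration under an adversarial SSYNC schedule that first drives the fourth robot far away and freezes it by Lemma~\ref{lem:imp-territory}.

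Once such a same-colour tower exists, the scheduler from then on always activates its two robots together: they keep the same colour, node and view, hence compute identically and stay merged, so the four robots behave as at most three independent entities --- each visiting at most $t+1$ nodes after $t$ steps and, whenever the entities separate, occupying a single monochromatic node. The counting-and-territory argument of Theorem~\ref{thm:no-three-ssync} then applies essentially unchanged and shows that terminating exploration of a sufficiently large ring is impossible, contradicting that $\ALG$ solves terminating exploration from the solvable configuration we started from. I expect the second paragraph to be the main obstacle: for $k=3$ the spreading argument essentially isolated one robot and the already-solvable pattern $\W\G\W$ immediately supplied the bad collision, whereas for $k=4$ one must dispose of several component patterns ($3+1$, $2+2$, $2+1+1$, $1+1+1+1$) and several rule shapes (including guards mentioning a tower) before concluding that a connected component must do the exploring, that this forces a (possibly tower-)chasing rule, and that the appropriate solvable configuration --- or a configuration reachable from one --- makes two equally-coloured robots collide; checking that the ``activate the two tower robots together forever'' schedule is legal in ASYNC is routine, as it is an SSYNC schedule.
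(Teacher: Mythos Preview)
Your plan has a genuine gap at its central step. You argue, by analogy with Lemma~\ref{lem:ft-imp-rules}, that the exploring component forces $\ALG$ to contain a two-robot chasing rule $\EMP(\W)\G::X,\RIGHT$ or $\EMP(\G)\W::X,\RIGHT$. But in SSYNC the exploring component consists of \emph{three} robots, not two, and three myopic robots can advance without any rule of that shape. Concretely, the rule set $\{\RULE_2,\RULE_3,\RULE_7\}$ with $\RULE_2:\EMP(\W)\W::\G,\RIGHT$ and $\RULE_3,\RULE_7$ having tower guards lets three robots cycle through $\G\W\W\to\G\,{}^{\W}_{\G}\to{}^{\G}_{\W}\,\W\to\G\W\W$ while shifting one node; none of these three rules has guard $\EMP(\W)\G$ or $\EMP(\G)\W$. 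So the implication ``exploring group $\Rightarrow$ chasing rule'' fails, and with it your tower-creation step from $\W\G\G\G$ or $\G\W\W\W$. Even when a chasing rule \emph{is} present, your example only yields a same-colour tower for the recolouring variant ($X\neq$ current colour); the non-recolouring case $\EMP(\W)\G::\W,\RIGHT$ applied to $\W\G\G\G$ produces a $\W\G$-tower, not a monochromatic one.

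The paper's proof does not attempt a reduction to Theorem~\ref{thm:no-three-ssync}. Instead it pins down that the three exploring robots must stay inside $\CONF_{exp}$ at all times (Lemma~\ref{lem:at-imp-form}), enumerates the sixteen rules that realise transitions within $\CONF_{exp}$, lists all simple cycles in the resulting transition graph, and checks which cycles actually produce net displacement. Only four rule sets survive: $\{\RULE_2,\RULE_3,\RULE_7\}$, $\{\RULE_8,\RULE_4,\RULE_1\}$, $\{\RULE_{10},\RULE_4,\RULE_5\}$, $\{\RULE_6,\RULE_3,\RULE_9\}$. For each of these, the paper then runs $\ALG$ from a specific configuration in $\CONF_{sol}$, reaches a blocked configuration, and exhaustively checks that every conceivable extra rule that could unblock it either leads to an unsolvable configuration (via Lemmas~\ref{lem:ssync-unsolvable-1}, \ref{lem:ssync-unsolvable-2}) or destroys the exploration property (via Lemmas~\ref{lem:at-imp-perp}, \ref{lem:at-imp-form}). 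Your high-level strategy of ``force a monochromatic tower, then quote the $k=3$ impossibility'' is appealing, but the case analysis you would need to make it go through is at least as heavy as the paper's, and the shortcut you propose through a single chasing rule does not exist.
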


\begin{proof}
(Sketch) Similarly to Theorem \ref{thm:no-universal-fsync}, we can show that the distance between some pair of robots become large and some three robots must explore the ring. At that time, we also show that the three robots always form a sub-configuration in
\[
\CONF_{exp}=\left\{\W\G\G, \G\G\W, \G\W\W, \W\W\G, \TOWERA{}{\W}\TOWERA{\G}{\W}, \TOWERA{\G}{\W}\TOWERA{}{\W}, \TOWERA{}{\G}\TOWERA{\W}{\G}, \TOWERA{\W}{\G}\TOWERA{}{\G} \right\}.
\]
Note that this is a set of solvable configurations for perpetual exploration. Since a universal algorithm makes the three robots explore the ring by changing their sub-configuration from one in $\CONF_{exp}$ to another in $\CONF_{exp}$, we can show that the algorithm should include some set of rules. Lastly, we prove the set of rules makes some solvable initial configuration unsolvable.
\end{proof}

\section{Conclusions}

In this paper, we investigated the possibility of exploration algorithms for myopic luminous robots evolving in uniform ring-shaped networks. Considering weakest possible assumptions for myopia and luminosity, we proved that: \emph{(i)} in the fully synchronous model, two and three robots are necessary and sufficient to achieve perpetual and terminating exploration, respectively, and \emph{ii)} in the semi-synchronous and asynchronous models, three and four robots are necessary and sufficient to achieve perpetual and terminating exploration, respectively. These tight results characterize the power of lights for myopic robots since, without lights, five robots are necessary and sufficient to achieve terminating exploration in the fully synchronous model, and no terminating exploration algorithm exists in the semi-synchronous and asynchronous models. We also showed that our perpetual exploration algorithms are universal, and that no universal algorithm exists for terminating exploration.

This paper leaves open many issues with respect to problem solvability for myopic luminous robots. In case of non-myopic luminous robots, the difference between the semi-synchronous model and the asynchronous model disappears. Does this difference still hold for \emph{myopic} luminous robots? If visibility $\phi$ is large, robots may be able to use distance to neighboring robots to store information instead of lights. Now, is there some relation between tasks achieved by myopic luminous robots with a large number of colors, and tasks achieved by non-luminous robots with large visibility? Is there a tradeoff between the visibility distance and the number of colors?
It is also interesting to consider other tasks and topologies with myopic luminous robots.

\section*{Acknowledgments}
This work was partially supported by a mobility scholarship of the author at Sorbonne University in the frame of the Erasmus Mundus Action 2 Project TEAM Technologies for Information and Communication Technologies, funded by the European Commission. This publication reflects the view only of the authors, and the Commission cannot be held responsible for any use which may be made of the information contained therein. This work was also partially supported by JSPS KAKENHI Grant Number 18K11167. 

\bibliography{refs}


\newpage
\appendix

\section{Proof of Theorem \ref{thm:no-universal-fsync}}

To prove Theorem \ref{thm:no-universal-fsync} by contradiction, we assume universal algorithm $\ALG$ exists. We assume $Col=\{\G,\W\}$. We consider a ring with $n>4\cdot 10^3$ nodes and an execution $E=C_0,C_1,\ldots$ of $\ALG$ in the ring. Note that, since three robots can visit at most three new nodes in each configuration, $E$ includes at least $n/3>(4/3)\cdot 10^3$ configurations.

The outline of the proof is as follow. We first prove that three robots eventually become disconnected (Lemma \ref{lem:ft-imp-distance}), and at that time two robots compose a sub-configuration $\G\W$ (Lemma \ref{lem:ft-imp-conf}). Then, we prove that some rule is necessary to make the two robots explore the ring (Lemma \ref{lem:ft-imp-rules}), however the rule makes a solvable configuration unsolvable (Lemma \ref{lem:ft-imp-unsolvable}).

\begin{lemma}
\label{lem:ft-imp-distance}
The distance between some pair of two robots becomes at least five before the $10^3$-th configuration of execution $E$.
\end{lemma}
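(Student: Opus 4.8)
The plan is to argue by contradiction: suppose that throughout the first $10^3$ configurations of $E$, every pair of robots stays within distance four. The goal is then to build a small ring on which $\ALG$ fails, contradicting universality. First I would observe that, under this assumption, the three robots occupy at most $5$ consecutive nodes at every instant $C_t$ with $t \le 10^3$ (since the diameter of the occupied set is at most $4$). Hence the union of all nodes visited up to time $10^3$ has size at most $5 + 10^3$ (at most $5$ initially, plus at most one new node per step if the occupied window slides), which is far smaller than $n > 4\cdot 10^3$. In particular there are unvisited nodes, so $E$ does not achieve exploration within $10^3$ steps; but more importantly, the robots cannot have \emph{terminated} by time $10^3$ either, because a terminating execution of a universal algorithm on this (solvable) initial configuration must visit all $n$ nodes first, which is impossible given the visited-set bound. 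So no robot is disabled-forever yet; the execution is still ``active'' at time $10^3$.

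Next I would transplant the execution to a smaller ring. Let $E = C_0,C_1,\ldots$ be the execution above and let $w \le 5 + 10^3$ be the number of nodes ever occupied during $C_0,\ldots,C_{10^3}$; choose a ring $R'$ with $n'$ nodes where $n'$ is, say, the size of the occupied window plus a small constant (anything with $5 + 10^3 < n' < n$ works, as long as $C_0$ is still a valid — and solvable — configuration on $R'$; since $C_0$ is one of the designated solvable initial configurations for $\ALG$, and solvability of these small patterns does not depend on $n$ being huge, this is fine). In the FSYNC model the behaviour of a robot at step $t$ depends only on its view up to distance $\phi=1$, i.e. only on the local picture, which is identical in $R$ and $R'$ as long as the robots stay within a window of size $< n'$ and never ``wrap around'' to see themselves from the other side. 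By the distance-$\le 4$ assumption the occupied window never has length more than $5$, so on $R'$ (chosen with $n' > 5 + 10^3$) the two executions stay locally indistinguishable for all $t \le 10^3$. Therefore the execution $E' = C'_0,C'_1,\ldots$ of $\ALG$ on $R'$ from the same initial pattern coincides (up to the embedding) with $E$ for the first $10^3$ steps, so at time $10^3$ on $R'$ the robots still have not terminated and have visited at most $5 + 10^3 < n'$ nodes — wait, I need $n'$ strictly larger than the visited count so that some node is unvisited, which holds by the choice $n' > 5 + 10^3$.

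Now the contradiction: on $R'$, the algorithm $\ALG$ is universal and $C'_0$ is solvable, so every execution from $C'_0$ — in particular $E'$, under any fair scheduler, and FSYNC is fair — must achieve terminating exploration, i.e. eventually terminate after visiting all $n'$ nodes. But I have shown that for all $t \le 10^3$ the robots on $R'$ mimic $R$, stay in a window of size $\le 5$, and hence can never have visited a node outside that window by time $10^3$. If I additionally ensure (by choosing $n'$ appropriately, e.g. $n' = w+1$ or larger and relatively prime considerations are not even needed) that the window of visited nodes is a \emph{proper} subset of $R'$'s node set, there is a permanently unvisited node at time $10^3$. The remaining gap is to rule out that exploration is completed \emph{after} time $10^3$ on $R'$: but the point of choosing $10^3$ as a bound is a ``pumping'' argument — if the robots never separate to distance $\ge 5$, the window never grows faster than one node per step \emph{and} the configuration (as a local pattern, which is all that drives FSYNC dynamics) lives in a finite state space, so the window's growth is ultimately periodic; if it ever halts, termination-without-exploration follows on a suitably large $R'$, and if it never halts the window grows without bound, contradicting the distance-$\le 4$ assumption on a ring with enough nodes. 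The cleanest packaging: the distance-$\le 4$ assumption forces the occupied set to stay inside a fixed window of $\le 5$ nodes \emph{forever} (not just up to $10^3$), because to move the window the robots must momentarily have two robots at distance exactly $4$ with the third trailing — and a careful case analysis, or the simpler observation that a sliding window of width $\le 5$ translating around an $n$-node ring eventually returns to visit a node twice while leaving others untouched when $n$ is large — shows the ``perpetual-style'' behaviour on a large ring cannot also be a terminating behaviour on every solvable small ring.

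The main obstacle I anticipate is making the ``window of width $\le 5$ cannot explore a large ring'' step fully rigorous \emph{and} simultaneously compatible with the $10^3$ bound stated in the lemma. The honest version is: assume for contradiction the distance stays $\le 4$ for the first $10^3$ steps; since a width-$\le 5$ window can cover at most $5 + (\text{number of steps})$ nodes, after $10^3$ steps at most $\approx 10^3$ nodes are visited, which on a ring with $n > 4\cdot10^3$ leaves the execution far from complete, and — crucially — it leaves the execution's first $10^3$ steps indistinguishable from those on a ring of size, say, $2\cdot 10^3$ (solvable, since the initial pattern is one of the designated solvable ones and is small), on which $\ALG$ must terminate with full exploration under the fair FSYNC scheduler; deriving a contradiction then reduces to showing that the \emph{terminating} execution on the size-$2\cdot10^3$ ring, which agrees with $E$ for $10^3$ steps and hence has visited $\le 10^3 < 2\cdot10^3$ nodes at time $10^3$, must at some later point separate two robots to distance $\ge 5$ in order to finish exploring — and once such a separation happens in that shared prefix extension, it happens (by indistinguishability running a bit longer) in the original large ring too. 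So the $10^3$ is really a convenient slack constant: any bound comfortably exceeding the window-growth-rate times the number of steps needed before a forced separation works, and the forced separation before exploration completes is exactly what the previously-established territory lemmas and the structure of two-robot exploration give us. I expect the write-up to choose the second, cleaner route and to cite Lemma \ref{lem:imp-territory} and Theorem \ref{thm:no-two-fsync} to close it.
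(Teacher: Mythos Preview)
Your transplant-to-a-smaller-ring argument does not close. You assume the pairwise distances stay $\le 4$ for the first $10^3$ steps on the big ring, then move to a ring $R'$ of size about $2\cdot 10^3$ and invoke indistinguishability. But all this buys you is that on $R'$ the robots also stay within distance $\le 4$ for the first $10^3$ steps; whatever happens on $R'$ \emph{after} step $10^3$ (separation, termination, further exploration) is not constrained by your hypothesis and, by the very same indistinguishability you are using, lifts back to the big ring only at times $>10^3$. So you never contradict the assumption that the separation fails \emph{before} step $10^3$. Your attempt to plug this hole by saying ``the window grows without bound, contradicting the distance-$\le 4$ assumption'' is incorrect: a width-$5$ block can slide indefinitely around the ring while all pairwise distances stay $\le 4$; nothing in the distance assumption prevents perpetual translation. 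The references you plan to cite (Lemma~\ref{lem:imp-territory} and Theorem~\ref{thm:no-two-fsync}) do not help here either, since the robots need not be isolated and there are three of them, not two.

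The paper's proof is much more direct and explains why the bound is exactly $10^3$ rather than ``convenient slack''. Under the assumption, at every instant the three robots fit into a length-$5$ window; the local pattern (three robots, each with one of $2$ colours and one of $5$ positions in the window, normalised so that the first node is occupied) has at most $(2\cdot 5)^3 = 10^3$ possibilities. By pigeonhole two instants $t<u\le 10^3$ have the same local pattern. If the underlying five nodes are the same, then $C_t=C_u$ and determinism in FSYNC makes the execution periodic on a fixed set of nodes, so exploration fails. If the underlying nodes differ, the window has translated; again by determinism the translation repeats forever, so the execution is perpetual and never terminates. Either way terminating exploration is violated. You actually brush past this idea (``lives in a finite state space, so the window's growth is ultimately periodic'') but then abandon it for the transplant argument; developing that sentence \emph{is} the proof.
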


\begin{proof}
For contradiction, assume that the distance between any pair of robots is at most four until $10^3$-th configuration of $E$. For each configuration $C_i$ ($0\le i\le 10^3$), we define $C'_i$ as a sub-configuration of $C_i$ such that $C'_i$ contains five nodes (\emph{i.e.}, the length of $C'_i$ is five), every robot occupies one of the five nodes, and some robot occupies the first node of $C'_i$. Since each robot has one of two colors and occupies one of five nodes, the number of such sub-configurations is at most $10^3$. Hence, for some $t$ and $u$ ($t<u\le 10^3$), sub-configurations $C'_t$ and $C'_u$ are identical. In the following, we consider two cases.

The first case is that the five nodes contained in $C'_t$ and $C'_u$ are identical. This implies configurations $C_t$ and $C_u$ are identical. Since $\ALG$ is deterministic, robots repeat the behavior from $C_t$ to $C_u$ after $C_u$. Since robots can visit at most $3\cdot 10^3$ nodes before configuration $C_t$, they cannot visit remaining nodes after $C_u$. Thus robots cannot complete exploration.

The second case is that the five nodes contained in $C'_t$ and $C'_u$ are not identical. This means robots change their position from $C_t$ to $C_u$. However, since no robots exist out of the five nodes, the behaviors of robots depend only on the sub-configuration. Hence, robots repeat the behaviors from $C_t$ to $C_u$ after $C_u$. This implies robots continue to explore the ring, and thus they cannot terminate.

For both cases, robots cannot achieve terminating exploration. This is a contradiction.
\end{proof}

From Lemma \ref{lem:ft-imp-distance}, the distance between some pair of two robots becomes at least five before the $10^3$-th configuration. Let $t_1$ be the smallest instant such that the distance between some pair of two robots is at least five at $C_{t_1}$. Clearly $t_1 \le 10^3$ holds. Since three robots can visit at most $3 (t_1+1)$ nodes before $C_{t_1}$, at least $10^3-3$ unexplored nodes exist at $C_{t_1}$. The next two lemmas show that, to explore the unexplored nodes, robots have to construct a specific sub-configuration.

\begin{lemma}
\label{lem:ft-imp-notower}
If a tower exists at configuration $C_{t_1}$, robots break the tower before they visit three unexplored nodes.
\end{lemma}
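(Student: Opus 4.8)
The plan is to combine a pigeonhole argument on the minuscule local state space of an \emph{isolated} tower with the territory argument of Lemma~\ref{lem:imp-territory}, exploiting the freedom the scheduler retains in the FSYNC model whenever a robot has a symmetric view.

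First I would pin down the shape of the configuration. A tower of three robots at $C_{t_1}$ is impossible, since it forces every pairwise distance to be $0$, contradicting the definition of $t_1$; so the tower consists of exactly two robots, sitting on some node $v$, and the third robot $r_3$ is the only one that can realize distance $\ge 5$ with the tower, hence $r_3$ lies at distance $\ge 5$ from $v$ along both arcs. Consequently, as long as the tower stays on $v$ (and for the first few rounds after it leaves), the two tower robots see only empty neighbors, so each of them has a symmetric view and may only recolor or move with $\BOTH$; moreover $r_3$ stays isolated.

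Next I would eliminate, one by one, all behaviors of the tower that are incompatible with the correctness of $\ALG$. (a) If the two robots of a same-color isolated tower, or \emph{both} robots of a tower carrying one $\G$-robot and one $\W$-robot, are ever enabled to move, the scheduler sends them to the two opposite neighbors of $v$; the three robots then occupy pairwise non-neighboring nodes (the two ex-tower robots at distance $2$, and $r_3$ far away), so they admit an independent territory set, and Lemma~\ref{lem:imp-territory} confines them forever; but at least $10^3-3$ nodes are still unexplored at $C_{t_1}$, so exploration fails, a contradiction. Hence, while isolated, the tower may only recolor, or, only from the state carrying both colors, let exactly one of its two robots step onto a neighbor while the other stays, which breaks the tower. (b) The local state of the tower is just the color multiset on $v$, which takes at most three values and evolves deterministically as long as the tower neither moves nor breaks; if it never breaks, the sequence of states is periodic, the tower robots never leave $v$, and $r_3$ would have to visit every one of the $\ge 10^3-3$ unexplored nodes alone, which is impossible because $r_3$ remains isolated (the tower is static) and the scheduler can keep it oscillating between two fixed adjacent nodes far from $v$. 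So the tower does break.

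Finally I would bound the progress made before that happens. The states visited before the break are pairwise distinct (otherwise the deterministic evolution would loop and never break), so there are at most three of them and the break occurs within at most three rounds of $C_{t_1}$; during these rounds the tower robots merely recolor, and I let the scheduler keep the still-isolated $r_3$ oscillating between two adjacent nodes, so no node that was unexplored at $C_{t_1}$ is visited until the breaking round, at which a single robot steps onto a neighbor of $v$, that is, at most one new node. Thus the tower is broken before three unexplored nodes are visited. The delicate point is case~(a): Lemma~\ref{lem:imp-territory} requires the robots to occupy pairwise non-neighboring nodes, which is exactly why the scheduler must push \emph{both} tower robots outward in opposite directions, since a single robot leaving would land adjacent to its former partner; it is precisely because this forced split is fatal to $\ALG$ that the only surviving behavior is the asymmetric one-robot step that breaks the tower gently, and checking that this last behavior stays within the stated bound of three unexplored nodes is the remaining routine work.
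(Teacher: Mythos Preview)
Your argument is correct, but it is considerably more elaborate than the paper's, which dispatches the lemma in three lines. The paper simply observes that, as long as the tower persists, both tower robots and the isolated robot $r_3$ have symmetric views, so the scheduler can confine the tower (kept together, by sending both robots to the \emph{same} neighbor whenever both move) and $r_3$ each to a two-node territory, exactly as in Lemma~\ref{lem:imp-territory}; hence at most two previously unexplored nodes can be visited while the tower exists, and the lemma follows.

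Your route is different: you use the scheduler's freedom to \emph{split} the tower (sending the two robots to opposite neighbors) so as to rule out, via Lemma~\ref{lem:imp-territory}, the case where both tower robots are enabled to move; you then conclude the tower can only recolor in place or break by a single robot stepping out, and finish with a pigeonhole on the three possible color multisets. This buys you extra information (the tower never shifts nodes and breaks within at most three rounds), but none of it is needed downstream, and it costs a full case analysis that the paper avoids entirely. One small overclaim: you assert that ``no node that was unexplored at $C_{t_1}$ is visited until the breaking round,'' but the oscillating $r_3$ may touch one new neighbor during those rounds; this does not affect the conclusion, since the total is still at most two new nodes, strictly fewer than three.
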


\begin{proof}
Assume that a tower exists at configuration $C_{t_1}$. Since the number of robots is three and the distance between some pair of robots is at least five at $C_{t_1}$, the tower includes two robots and the views of the robots are symmetric. The view of a robot not in the tower is also symmetric. Hence, we can define their territories similarly to Lemma \ref{lem:imp-territory}. This implies robots do not go out of their territories before they break the tower. Therefore, the lemma holds.
\end{proof}

Let $t_2$ be the smallest instant such that the distance between some pair of two robots is at least five and no tower exists at $C_{t_2}$. Since at least $10^3-3$ unexplored nodes exist at $C_{t_1}$, at least $10^3-7$ unexplored nodes exist at $C_{t_2}$ from Lemma \ref{lem:ft-imp-notower}.

\begin{lemma}
\label{lem:ft-imp-conf}
Configuration $C_{t_2}$ includes sub-configuration $\G\W\EMP^i X$ or $\W\G\EMP^i X$ for some $i\ge 3$ and $X\in Col$ (or sub-configuration symmetric to one of them).
\end{lemma}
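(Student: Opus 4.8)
The plan is to argue that at $C_{t_2}$ the three robots must be split as "two robots close together, one robot far away," and that the close pair must form exactly the sub-configuration $\G\W$ (up to symmetry). First I would recall what $C_{t_2}$ looks like: by definition there is a pair of robots at distance at least five and no tower. With only three robots, the possible "shapes" are: (a) all three mutually at distance $\ge 2$ (no two adjacent); (b) exactly two of them adjacent and the third at distance $\ge 5$ from at least one of the pair; (c) all three within a small window — but that contradicts the distance-five condition. So the real case analysis is (a) versus (b).

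In case (a), I would invoke Lemma~\ref{lem:imp-territory}: since no two occupied nodes are neighbors, and (because there is no tower) each node carries a single robot, I can assign to each robot a territory of two adjacent nodes. The distance-five gap between some pair guarantees that these three territories can be chosen to be pairwise at distance $\ge 2$, i.e.\ the territory set is independent — just put each robot's territory on the "side away" from the large gap, or use the same $d=2$ / $d>2$ construction as in Lemma~\ref{lem:fp-imp-isolated} applied to each pair. Then no robot ever leaves its territory, so the at-least-$10^3-7$ unexplored nodes stay unexplored forever, contradicting that $\ALG$ completes exploration. Hence case (a) is impossible and we are in case (b): two adjacent robots, the third at distance $\ge 3$ from both (in fact $\ge 5$ from one of them), giving a sub-configuration of the form $M_0 M_1 \EMP^i X$ with $M_0, M_1$ single colors, $i \ge 3$, and $X\in Col$ the color of the lone robot.

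It remains to pin down the colors $M_0, M_1$ of the adjacent pair. If they were equal — say both $\W$ — then the two robots have identical, symmetric views ($\EMP$ on one side, $\W$ on the other), so by the FSYNC synchrony and determinism the scheduler can make them behave symmetrically: either they both stay, or they swap, or they both move outward. In none of these does the pair make progress exploring new territory as a genuine pair; more precisely, a pair of same-colored robots with symmetric views can be boxed into an independent territory set exactly as in Lemma~\ref{lem:fp-imp-gg}, so again the unexplored nodes are never visited. The same argument rules out $\G\G$. Therefore $\{M_0,M_1\}=\{\G,\W\}$, i.e.\ $C_{t_2}$ contains $\G\W\EMP^i X$ or $\W\G\EMP^i X$ for some $i\ge 3$, $X\in Col$ — possibly only up to the ring's reflection symmetry, which is why the statement allows a symmetric sub-configuration. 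I would also note the lone robot is irrelevant to this part: whatever its color $X$, it sits in its own territory and cannot help.

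The main obstacle I anticipate is the bookkeeping around "up to symmetry" and making the territory construction genuinely independent in case (a): with three robots and one large gap, one must check that the three two-node territories can simultaneously be chosen pairwise non-adjacent, which needs the gap of size $\ge 5$ (a gap of $3$ or $4$ between the two ends would not by itself suffice for all three territories at once, but here the distance-five pair plus the no-tower, no-adjacency structure forces enough room). The other delicate point is the $\G\G$/$\W\W$ exclusion: one has to be careful that "symmetric view, FSYNC" really does let the adversary keep the pair confined — but this is exactly the content of Lemmas~\ref{lem:imp-territory} and~\ref{lem:fp-imp-gg}, so it should go through cleanly.
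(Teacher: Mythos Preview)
Your proposal is correct and follows essentially the same route as the paper's proof: both split into (a) no two robots adjacent (the paper's Case~1, handled by an independent territory set via Lemma~\ref{lem:imp-territory}) versus (b) an adjacent pair with the third robot at gap $i\ge 3$, and then rule out a same-colored pair by the swap/outward-move dichotomy. The one place where you are more compressed than the paper is the same-color exclusion: rather than citing Lemma~\ref{lem:fp-imp-gg} verbatim (which is stated for two robots only), the paper explicitly observes that after the pair moves outward the resulting sub-configuration $Y'\emptyset\emptyset Y'\emptyset^{j} X'$ falls back into Case~1; that is exactly the unpacking your ``boxed into an independent territory set'' remark needs, and once stated it closes the argument cleanly.
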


\begin{proof}
For contradiction, assume that $C_{t_2}$ does not include such sub-configurations. Since the distance between some pair of two robots is at least five and no tower exists, $C_{t_2}$ includes either 1) sub-configuration $Y\EMP^i Z \EMP^j X$ for some $i+j\ge 3$ and $X,Y,Z\in Col$, 2) sub-configuration $YY\EMP^i X$ for some $i\ge 3$ and $X,Y\in Col$, or 3) sub-configuration symmetric to Case 1 or 2.

In Case 1, we can define an independent territory set. This implies robots cannot visit remaining unexplored nodes from Lemma \ref{lem:imp-territory}. Thus, they cannot achieve exploration.

Let us consider Case 2. We assume robots $r_1$ and $r_2$ have color $Y$ and $r_3$ has color $X$ at $C_{t_2}$. Since the scheduler can make $r_3$ move forward and backward, $r_1$ and $r_2$ should move to achieve exploration. Since the view of $r_1$ is identical (symmetric) to the view of $r_2$, $r_1$ and $r_2$ make symmetric behaviors. If $r_1$ moves toward $r_2$, $r_1$ and $r_2$ swap their positions and visit no unexplored nodes. If $r_1$ moves against $r_2$, its sub-configuration becomes $Y'\EMP\EMP Y'\EMP^{j} X'$ for some $j\in\{i-2,i-1,i\}$ and $X',Y'\in Col$. Here $Y'$ is a new color of $r_1$ and $r_2$, and $j$ and $X'$ depend on the behavior of $r_3$. For any case, this sub-configuration reduces to Case 1. This implies robots cannot achieve exploration.

We consider Case 3 similarly to Cases 1 and 2. For every case, robots cannot achieve exploration. This is a contradiction.
\end{proof}

From Lemma \ref{lem:ft-imp-conf}, at configuration $C_{t_2}$, two robots with colors $\G$ and $\W$ are away from the other robot. Since the remaining isolated robot cannot explore by itself, two robots with colors $\G$ and $\W$ explore the remaining part of the ring. The following lemma proves that $\ALG$ includes some rules to realize this behavior.

\begin{lemma}
\label{lem:ft-imp-rules}
We consider the following rules:
\begin{itemize}
    \item $0\W\G X: \EMP (\W) \G:: X,\RIGHT$ (for some $X\in Col$)
    \item $0\G\W X: \EMP (\G) \W:: X,\RIGHT$ (for some $X\in Col$)
\end{itemize}
Algorithm $\ALG$ includes a rule $0\W\G X$ or $0\G\W X$
\end{lemma}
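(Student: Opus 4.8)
The plan is to argue that, starting from configuration $C_{t_2}$, the only way the two robots $\G$ and $\W$ can visit a node that is not already in their two-node "neighborhood" is by one of them stepping onto an empty node, and that the first such step must be performed by a rule whose guard sees an empty node on one side and the partner robot on the other. First I would set up the situation precisely: by Lemma~\ref{lem:ft-imp-conf}, $C_{t_2}$ contains (up to symmetry) a sub-configuration $\G\W\EMP^i X$ or $\W\G\EMP^i X$ with $i\ge 3$, so the robot $r_3$ with color $X$ is isolated and, since an isolated robot has a symmetric view, the scheduler controls its direction; hence by (the argument of) Lemma~\ref{lem:imp-territory} $r_3$ alone can visit only the two nodes of its territory, so if $\ALG$ is to complete exploration, the pair $\{r_1,r_2\}$ forming $\G\W$ must eventually visit all of the $\ge 10^3-7$ remaining unexplored nodes.

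Next I would track the pair $\{r_1,r_2\}$ in the FSYNC model while they remain within distance four of each other and away from $r_3$. As long as they are on adjacent nodes forming sub-configuration $\G\W$ (with empty neighbors on both sides, since $r_3$ is far away), their views are $\EMP(\G)\W$ and $\EMP(\W)\G$; note these views are \emph{not} symmetric, so each robot has a well-defined forward direction, but the two forward directions need not be consistent. The key combinatorial point is the same finiteness argument used in Lemma~\ref{lem:ft-imp-distance}: if neither robot ever steps onto an empty node, then (a) they only swap or stay on the same pair of nodes, or (b) if they could somehow make progress while staying clustered, the bounded-width sub-configuration would recur, and in either case the $\ge 10^3-7$ far-away unexplored nodes are never visited, contradicting exploration. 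Therefore at some configuration one of $r_1,r_2$ is enabled to move onto an empty node while, in its view, the other robot sits on the opposite side; the corresponding rule's guard is exactly $\EMP(\W)\G$ or $\EMP(\G)\W$ and the action moves $\RIGHT$ (toward the non-empty side), i.e.\ it is $0\W\G X$ or $0\G\W X$ for the resulting new color $X\in\{\G,\W\}=Col$.

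The main obstacle is ruling out the case where the pair becomes non-adjacent (distance two, three, or four) before ever performing such an $\EMP(\cdot)\cdot$ move: one must check that in every such intermediate sub-configuration the only rules available still reduce, within a bounded number of steps and without visiting new far-away nodes, to a configuration where a robot is enabled with an $\EMP$ on one side and the partner on the other — otherwise the robots separate into disjoint territories (Lemma~\ref{lem:imp-territory}) or loop, both contradicting terminating exploration. Concretely, I would enumerate the finitely many sub-configurations of width $\le 5$ containing exactly the two robots $\{r_1,r_2\}$ with an empty region of length $\ge 3$ separating them from $r_3$, observe that any move carrying a robot over an empty node either is already of the claimed form or immediately produces an independent territory set or a repeating pattern, and conclude that the \emph{first} move by which the $\G,\W$ pair enters a fresh node is governed by $0\W\G X$ or $0\G\W X$. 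Hence $\ALG$ must include such a rule.
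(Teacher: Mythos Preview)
Your proposal has a direction error that breaks the argument. In the guard $\EMP(\W)\G$ the convention of the paper is $M_{-1}=\EMP$, $M_0=\{\W\}$, $M_1=\{\G\}$, so the action $\RIGHT$ means moving \emph{toward the partner} $\G$, not onto the empty node. Your core argument (``if neither robot ever steps onto an empty node, then they only swap or stay on the same pair of nodes\ldots therefore one of $r_1,r_2$ is enabled to move onto an empty node'') would, if it worked, establish that $\ALG$ contains a rule with action $\LEFT$ under one of these guards. That is \emph{not} the lemma: the rules $0\W\G X$ and $0\G\W X$ are precisely the ``follow the partner'' moves. So the sentence ``the action moves $\RIGHT$ (toward the non-empty side), i.e.\ it is $0\W\G X$ or $0\G\W X$'' directly contradicts your own premise that the robot is stepping onto an empty node.

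The paper's proof goes in the opposite direction and is much shorter. It assumes neither $\RIGHT$ rule is present, so from the adjacent $\G\W$ pair at $C_{t_2}$ no robot moves toward its partner; one FSYNC step later only three things can happen: (1) at least one robot moves away onto an empty node, in which case the two robots become disconnected and Lemma~\ref{lem:imp-territory} blocks exploration; (2) both robots change color without moving, which just swaps colors and loops; (3) exactly one robot changes color, producing $\W\W$ or $\G\G$, which is Case~2 of Lemma~\ref{lem:ft-imp-conf}. Notice that case~(1) is exactly the scenario your argument is aiming for, and it is the \emph{bad} case, not the one that furnishes the desired rule. Your enumeration of width-$\le 5$ sub-configurations for $r_1,r_2$ at distances $2,3,4$ is unnecessary: once you negate the two $\RIGHT$ rules, a single FSYNC step from the adjacent $\G\W$ pair already yields the contradiction.
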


\begin{proof}
For contradiction, assume that $\ALG$ includes neither $0\W\G X$ nor $0\G\W X$. From Lemma \ref{lem:ft-imp-conf}, at configuration $C_{t_2}$, two robots with colors $\G$ and $\W$ are away from the other robot and compose sub-configuration $\G\W$ or $\W\G$. Since sub-configurations $\G\W$ and $\W\G$ are symmetric, we consider only sub-configuration $\G\W$. To complete exploration, the two robots must explore the remaining part of the ring. 

Since $\ALG$ includes neither $0\W\G X$ nor $0\G\W X$, a robot in a sub-configuration $\G\W$ does not move toward another robot in $\G\W$. We consider three cases about the movement of robots in $\G\W$:
\begin{enumerate}
\item One or both of robots in $\G\W$ move against the other robots in $\G\W$. In this case, these two robots become isolated. At this configuration, we can define an independent territory set, and thus robots cannot achieve exploration from Lemma \ref{lem:imp-territory}.

\item Both robots in $\G\W$ change their colors and do not move. In this case, robots just swap their colors and thus the views of them are not changed. This implies robots repeatedly change their colors and never move. Therefore, robots cannot achieve exploration.

\item One robot in $\G\W$ changes its color. That is, two robots in $\G\W$ at $C_{t_2}$ compose sub-configuration $\W\W$ or $\G\G$ at $C_{t_2+1}$. Recall that, from Lemma \ref{lem:ft-imp-conf}, sub-configurations of $C_{t_2}$ include $\G\W\EMP^i X$ or $\W\G\EMP^i X$ for some $i\ge 3$ and $X\in Col$. This implies sub-configurations of $C_{t_2+1}$ include $YY\EMP^j X'$ for some $j\in\{i,i+1\}$ and some $X',Y\in Col$. Here $X'$ and $j$ depend on the behavior of a robot with color $X$ at $C_{t_2}$. As proved in Case 2 of Lemma \ref{lem:ft-imp-conf}, from sub-configuration $YY\EMP^j X'$, robots cannot achieve exploration.
\end{enumerate}

For every case, robots cannot achieve exploration. This is a contradiction.
\end{proof}

From Lemma \ref{lem:ft-imp-rules}, $\ALG$ includes a rule $0\W\G X$ or $0\G\W X$. The following lemma shows, if $\ALG$ includes rule $0\W\G X$ or $0\G\W X$, $\ALG$ cannot solve terminating exploration from some solvable configuration.

\begin{lemma}
\label{lem:ft-imp-unsolvable}
1) If $\ALG$ includes rule $0\W\G X$, $\ALG$ cannot solve terminating exploration from configuration $\W\G\W$. 2) If $\ALG$ includes rule $0\G\W X$, $\ALG$ cannot solve terminating exploration from configuration $\G\W\G$.
\end{lemma}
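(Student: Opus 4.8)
The plan is to handle the two symmetric statements in parallel, so suppose $\ALG$ includes rule $0\W\G X:\ \EMP(\W)\G::X,\RIGHT$ for some $X\in Col$ (the argument for $0\G\W X$ is obtained by swapping the roles of $\G$ and $\W$). We start from the solvable configuration $\W\G\W$ and run $\ALG$ in the FSYNC model. The key observation is that in configuration $\W\G\W$, both robots with color $\W$ have a view that matches the guard of rule $0\W\G X$: each sees an empty node on one side, itself ($\W$), and a $\G$ robot on the other side. Hence both $\W$ robots are enabled by $0\W\G X$, and in the FSYNC model they both execute it simultaneously, changing color to $X$ and moving \emph{toward} the central $\G$ robot.

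First I would determine the resulting configuration. The two $\W$ robots move onto the node occupied by the $\G$ robot, creating a tower: the central node now contains colors $\{X,X,\G\}$ — actually two $X$'s and one $\G$ — while the two previously-occupied outer nodes become free. Wait: more carefully, in $\W\G\W$ the robots sit on three consecutive nodes $v_{-1},v_0,v_1$; after both $\W$ robots move to $v_0$ we get a single tower on $v_0$ with multiset $\{X,X,\G\}$ (and $n-1$ free nodes). The crucial structural feature is that the two robots that were $\W$ are now in the same node with the same color $X$, so from this point on they have identical views at every instant and, being activated together in FSYNC, make identical moves forever. Thus they behave as a single "super-robot," and effectively the system contains only two distinguishable robots: the pair (acting as one) and the original $\G$ robot.

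Next I would invoke the reduction already used in Theorem~\ref{thm:no-two-fsync}. The execution from this tower configuration is, for purposes of which nodes get visited, indistinguishable from an execution of a two-robot algorithm in the same ring: whatever the paired robots and the $\G$ robot do, at most two "positions" move around the ring. Concretely, one runs the argument of Theorem~\ref{thm:no-two-fsync}: take a large ring $R_1$, let $t$ be the first instant (after the tower forms) at which the two effective robots terminate or become disconnected, and then take a much larger ring $R_2$ with more than, say, $2(t+1)$ extra nodes; as long as the two effective robots stay connected they cannot distinguish $R_1$ from $R_2$, so they reach the same situation at time $t$ in $R_2$; if they terminate they have visited too few nodes, and if they become disconnected we apply Lemma~\ref{lem:imp-territory} to conclude the remaining nodes are never visited. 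Either way terminating exploration fails, contradicting that $\W\G\W$ is solvable (which it is, by Lemma~\ref{lem:ft3-solvable}). The symmetric case starting from $\G\W\G$ with rule $0\G\W X$ is identical with colors exchanged.

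The main obstacle I anticipate is being precise about the "effective two-robot" reduction: one must check that the paired robots genuinely stay merged (i.e., that in FSYNC two co-located same-colored robots, activated together, can never be separated and never split their colors), and that the $\G$ robot together with the pair cannot between them cover the whole ring — this is exactly where one re-runs the ring-size pumping argument of Theorem~\ref{thm:no-two-fsync} rather than quoting it verbatim, since here we start from a tower configuration rather than an arbitrary initial one, but the combinatorial content (at most two moving "tokens," so at most $2(t+1)$ nodes visited by time $t$, plus the territory argument on disconnection) is the same.
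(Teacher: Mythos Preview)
Your approach is essentially the same as the paper's: force the two outer $\W$ robots to collide on the center node with the same color $X$, observe they are henceforth indistinguishable and move as one, and then rerun the two-robot impossibility (the paper phrases this as ``similarly to Lemma~\ref{lem:ft-imp-distance}'' followed by Lemma~\ref{lem:imp-territory}, which is exactly the content of the Theorem~\ref{thm:no-two-fsync} argument you invoke).

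One small oversight: you assert the resulting configuration is a single tower $\{X,X,\G\}$ on $v_0$, but the middle robot $r_2$ may itself be enabled at $\W\G\W$ (its guard is $\W(\G)\W$) and hence may change its color to some $Y$ and/or move to $v_{\pm 1}$; the paper accordingly records the next sub-configuration as either $Y$ adjacent to the tower $\{X,X\}$ or the single tower $\{X,X,Y\}$. This does not affect your argument, since the only fact you actually use is that $r_1$ and $r_3$ end up co-located with the same color, which holds regardless of what $r_2$ does.
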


\begin{proof}
Without loss of generality, we consider only statement 1 (Statement 2 is obtained by swapping the roles of two colors). Assume that, initially, robots $r_1$, $r_2$, and $r_3$ construct a configuration $\W\G\W$ in this order. Since $\ALG$ includes rule $0\W\G X$, the next configuration includes a sub-configuration
\[
\TOWERA{}{Y} \TOWERA{X}{X} \ \mbox{or}\ \TOWERB{X}{X}{Y}
\]
for some $X,Y\in Col$. That is, $r_1$ and $r_3$ assign $X$ to its color, move based on rule $0\W\G X$, and makes a tower, and $r_2$ assigns $Y$ to its color and moves (or stays) based on some rule. Note that $r_1$ and $r_3$ stay at the same node with the same color. This implies, since the views of $r_1$ and $r_3$ are identical, they move together after this configuration. Similarly to Lemma \ref{lem:ft-imp-distance}, the distance between two agents becomes at least five. In this configuration, since $r_1$ and $r_3$ move together, one node is occupied by $r_1$ and $r_3$ with the same color, and another node is occupied by $r_2$. Since we can define an independent territory set, robots cannot achieve exploration from Lemma \ref{lem:imp-territory}. That is, $\ALG$ cannot solve terminating exploration from configuration $\W\G\W$.
\end{proof}

From Lemmas \ref{lem:ft-imp-rules} and \ref{lem:ft-imp-unsolvable}, $\ALG$ cannot solve terminating exploration from configuration $\W\G\W$ or $\G\W\G$. However, from Lemma \ref{lem:ft3-solvable}, configurations $\W\G\W$ and $\G\W\G$ are solvable. This contradicts to the assumption that $\ALG$ is universal. Therefore, Theorem \ref{thm:no-universal-fsync} holds.

\section{Proof of Theorem \ref{thm:universal-async}}

To prove Theorem \ref{thm:universal-async}, we show that initial configurations other than ones in Theorem \ref{thm:ap-alg-cor} are unsolvable for $n\ge 9$ in the SSYNC model. First we prove the following simple lemma.

\begin{lemma}
\label{lem:imp-onenode}
Consider a configuration $C$ such that, for some node $v_i$, $v_i$ is occupied by at least one robot, all robots on $v_i$ have the same color, and $v_{i-1}$ and $v_{i+1}$ are occupied by no robot. After configuration $C$, the following statements hold.
\begin{itemize}
    \item[(1)] If no robots appear in $v_{i-2}$ and $v_{i+1}$, robots on $v_i$ cannot visit nodes other than $v_{i-1}$ and $v_i$. 
    \item[(2)] If no robots appear in $v_{i+2}$ and $v_{i-1}$, robots on $v_i$ cannot visit nodes other than $v_{i+1}$ and $v_i$. 
\end{itemize}
\end{lemma}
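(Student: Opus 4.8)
The plan is to prove statement (1) directly by a simple boundary argument, and then obtain statement (2) from it by the mirror reflection of the ring about $v_i$ (this reflection swaps $v_{i-1}$ with $v_{i+1}$ and $v_{i-2}$ with $v_{i+2}$, and leaves the hypotheses of the lemma invariant, so the two statements are mirror images of each other).

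First I would fix an arbitrary execution starting from $C$ and let $S$ be the set of robots located on $v_i$ at $C$; $S$ is non-empty by hypothesis. The set I want to confine $S$ to is $R=\{v_{i-1},v_i\}$, and the key structural fact is that in the ring the only nodes outside $R$ adjacent to a node of $R$ are $v_{i-2}$ (adjacent to $v_{i-1}$) and $v_{i+1}$ (adjacent to $v_i$). Since a Move phase always moves a robot along a single edge --- this holds in the FSYNC, SSYNC, and ASYNC models alike, as asynchrony only affects which (possibly outdated) view a robot acts upon, not the fact that a move traverses one edge --- a robot that is in $R$ at one instant can be outside $R$ at the next instant only by sitting on $v_{i-2}$ or on $v_{i+1}$.

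Then I would argue by contradiction. Suppose some robot of $S$ occupies a node outside $R$ at some instant after $C$ at which no robot has yet appeared on $v_{i-2}$ or $v_{i+1}$; let $t$ be the earliest instant at which some robot $r\in S$ is outside $R$. At $C$ every robot of $S$ is on $v_i\in R$, and by minimality of $t$ every robot of $S$ lies in $R$ at all instants strictly before $t$; hence between instant $t-1$ and instant $t$ robot $r$ stepped from a node of $R$ to a node outside $R$, so $r$ is on $v_{i-2}$ or on $v_{i+1}$ at instant $t$. But then a robot appears on $v_{i-2}$ or on $v_{i+1}$ at instant $t$, contradicting the hypothesis of statement (1). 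Therefore no robot of $S$ ever leaves $R$; that is, the robots that are on $v_i$ at $C$ only ever visit $v_{i-1}$ and $v_i$, which is statement (1), and statement (2) follows by the reflection.

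I do not expect a genuine obstacle here: the only points to be careful about are (i) checking that the list of ``ways to leave $R$'' is exhaustive (immediate from the ring topology, using that $v_{i-2}$ and $v_{i+1}$ are distinct from $v_{i-1}$ and $v_i$ for the ring sizes under consideration), and (ii) setting up the extremal instant $t$ so that the argument also covers the base instant $C$ itself (there the robots of $S$ sit on $v_i$ and $v_{i-1},v_{i+1}$ are empty, so no robot of $S$ is yet outside $R$ and the boundary nodes are still clear). Note that the colour hypothesis is not needed for the confinement itself; it is retained because the surrounding impossibility proofs that invoke this lemma rely on it for other reasons.
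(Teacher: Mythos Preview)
Your argument is circular, and the circularity hides the real content of the lemma. You read the hypothesis ``no robots appear in $v_{i-2}$ and $v_{i+1}$'' as forbidding \emph{any} robot---including the robots of $S$ themselves---from ever occupying those two nodes. Under that reading your contradiction step is empty: you derive that some $r\in S$ sits on $v_{i-2}$ or $v_{i+1}$ at the first escape time $t$, and then declare this a contradiction with the hypothesis; but that is just assuming the conclusion. With your reading the lemma collapses to ``robots cannot teleport across an unoccupied boundary node,'' which is true but useless for the applications (e.g.\ in Lemma~\ref{lem:sp-imp-dist4} one first proves that the \emph{other} robots stay away from $v_{d-1},v_{d+2}$ and then invokes this lemma to confine $r_3$; one still needs an argument that $r_3$ itself does not step onto $v_{d-1}$ or $v_{d+2}$, and your proof provides none).

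The paper's proof is of a different nature: it \emph{constructs an adversarial schedule}. Because all robots on $v_i$ share the same colour, they share the same (symmetric) view; the scheduler activates them simultaneously so that they all take the same action, and since the view is symmetric the scheduler also chooses the direction, sending them all to $v_{i-1}$. After the move, $v_{i-2}$ is free (that is what the hypothesis---about \emph{other} robots---guarantees) and $v_i$ is free (they all left it together), so the view is again symmetric and the argument repeats, pinning the group to $\{v_{i-1},v_i\}$. This is why the same-colour hypothesis is essential, contrary to your final remark: if two robots on $v_i$ had different colours they could take different actions, and after one step one of them could sit at $v_{i-1}$ while the other remains at $v_i$; the first robot then has an asymmetric view $\emptyset\,(\cdot)\,\{\cdot\}$ and the scheduler can no longer force its direction. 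Your proof has no mechanism to prevent that escape.
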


\begin{proof}
We prove only statement (1) because we can similarly prove statement (2). Since robots on $v_i$ have the same color, they have the same view. Consequently, if the scheduler repeatedly activates them at the same time, they continue to make the same behaviors. In addition, since the views of robots on $v_i$ are symmetric, the scheduler decides the directions of their movements. Consequently, robots on $v_i$ move to $v_{i-1}$ if they decide to move. From statement (1), no robots exist on $v_{i-2}$ and $v_i$, and hence the views of the robots are the same as in the previous configuration. By repeating such behaviors, the robots can visit only $v_{i-1}$ and $v_i$. Therefore, the lemma holds.
\end{proof}

In the following, we show that initial configurations other than ones in Theorem \ref{thm:ap-alg-cor} are unsolvable.

\begin{lemma}
\label{lem:sp-imp-tower}
Assume $n\ge 6$. Let $C$ be a configuration such that two or three robots with the same color stay at a single node. In this case, $C$ is unsolvable in the SSYNC model.
\end{lemma}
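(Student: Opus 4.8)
The plan is to reduce everything to the previously-established territory machinery, using Lemma \ref{lem:imp-onenode} to handle the case where the robots on the single node form a (possibly two-robot) ``blocked'' tower. Suppose $C$ has a node $v_i$ occupied by two or three robots, all of the same color, say color $c$. Since $n\ge 6$, the ring is large enough that the two (or three) neighboring nodes $v_{i-1}, v_{i+1}$ together with $v_{i-2},v_{i+2}$ are all distinct from $v_i$ and from each other. The key observation is that in $C$ there is essentially nowhere for the tower to ``usefully'' go: the robots on $v_i$ all share the same view, so the scheduler can always activate exactly those robots together, keeping them on a common node with a common color forever; moreover, as long as they are isolated (no robot on $v_{i-1}$ or $v_{i+1}$), their common view is symmetric, so the scheduler chooses their direction of motion.

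First I would dispose of the trivial sub-case where \emph{all} $k=3$ robots are in the tower on $v_i$: then by the argument above the scheduler keeps all three robots together on a single node (symmetric view, same color, activated simultaneously), so they behave exactly like a single robot whose direction is adversarially chosen; such an entity can be confined to $\{v_{i-1},v_i\}$ or forced to oscillate, and in any case visits at most a bounded number of nodes, so for $n\ge 6$ perpetual (and hence terminating) exploration fails — this is just an instance of the one-node argument of Lemma \ref{lem:imp-onenode}. Next, the substantive case: exactly two robots of color $c$ sit on $v_i$, and a third robot $r_3$ sits somewhere at distance $\ge 2$ from $v_i$ (if $r_3$ were a neighbor of $v_i$, the hypothesis ``$v_{i-1}$ and $v_{i+1}$ occupied by no robot'' needed for Lemma \ref{lem:imp-onenode} must be re-examined; but the statement of Lemma \ref{lem:sp-imp-tower} allows $r_3$ to be adjacent, so I must treat that too — see below). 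Here I define a territory $T_1=\{v_i,v_{i-1}\}$ or $T_1=\{v_i,v_{i+1}\}$ for the tower, chosen so that it stays away from $r_3$, and a territory $T_2$ of two neighboring nodes containing $r_3$'s node, chosen disjoint and non-adjacent to $T_1$. Because the tower's two robots share a symmetric view whenever isolated, and $r_3$ is isolated, Lemma \ref{lem:imp-territory} (or the refined Lemma \ref{lem:imp-onenode} for the tower alone) confines each robot to its territory, so the union $T_1\cup T_2$ has at most $4<n$ nodes and exploration is impossible.

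The main obstacle is the boundary case where $r_3$ is adjacent to $v_i$, i.e. $r_3$ occupies $v_{i-1}$ or $v_{i+1}$ (then Lemma \ref{lem:imp-onenode} does not directly apply). In that case I would argue by a short activation-based case analysis: activate only the tower's two robots; their view is non-symmetric (they see $r_3$ on one side), so the algorithm dictates a deterministic response. If the response leaves both of them on $v_i$ (merely recoloring, but still with a common color since they are identical), we are back to a tower adjacent to $r_3$ with possibly new colors — there are only finitely many color combinations, so either this loops forever without exploring (contradiction) or eventually some color combination triggers a move. If they move toward $r_3$, either they merge into a larger tower on $r_3$'s node (reducing $k$ effectively to $\le 2$, then invoke Theorem \ref{thm:sp-imp-two}) or $r_3$ simultaneously moves and we re-analyze the resulting small-support configuration; if they move away from $r_3$, they become isolated and we are in the generic territory case already handled. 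In every branch the robots' reachable set is confined to a constant number of nodes, contradicting exploration for $n\ge 6$; thus $C$ is unsolvable in the SSYNC model, completing the proof.
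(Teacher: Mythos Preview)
Your overall strategy---keep the two same-colored robots synchronized by activating them together and then reduce to the territory machinery once the pair is isolated---is exactly the paper's approach. However, two steps in your write-up do not go through.

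First, you miss a case. You treat ``all three robots on $v_i$, all of color $c$'' and ``exactly two robots on $v_i$, $r_3$ elsewhere,'' but the lemma (and the paper's proof) also covers the situation where all three robots sit on $v_i$ yet only two of them share a color. Your ``trivial sub-case'' explicitly assumes all three have the common color $c$, so the symmetric-view, same-direction confinement argument you give there does not apply when the third robot on $v_i$ is differently colored.

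Second, in the boundary case your reduction ``merge into a larger tower on $r_3$'s node (reducing $k$ effectively to $\le 2$, then invoke Theorem~\ref{thm:sp-imp-two})'' is not valid. After the merge you have three robots on one node; if $r_3$'s color differs from the pair's (possibly new) color you are precisely in the missing case above. Even granting that the pair $r_1,r_2$ always moves as one entity, the system is \emph{not} equivalent to a genuine two-robot instance: views are multisets, so $r_3$ seeing $\{c,c\}$ on a neighbor may fire a different rule than seeing $\{c\}$, and Theorem~\ref{thm:sp-imp-two} therefore does not apply verbatim. You also write ``or $r_3$ simultaneously moves'' right after stipulating that only the tower's two robots are activated, which is inconsistent; and repeatedly activating only the pair while it merely recolors is not a fair schedule, so ``loops forever without exploring (contradiction)'' is not a contradiction at all.

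The paper closes these gaps cheaply: it fixes the fair alternating schedule ``activate $\{r_1,r_2\}$, then activate $r_3$,'' notes that $r_1,r_2$ remain co-located with a common color throughout, and then argues that in order to explore more than the handful of nodes around $v$, eventually either the pair moves against $r_3$ or $r_3$ moves against the pair. At that instant the pair and $r_3$ are at distance $\ge 2$ (Case~2, handled by Lemma~\ref{lem:imp-territory}); the only other outcome is that the pair lands on $r_3$'s node with $r_3$'s color (Case~1). This single argument handles both the adjacent-$r_3$ case and the three-on-one-node mixed-color case uniformly, with no appeal to Theorem~\ref{thm:sp-imp-two}.
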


\begin{proof}
We consider three cases 1) three robots with the same color stay at a single node, 2) two robots with the same color stay at a single node and the distance to another robot is at least two, 3) two robots with the same color stay at a single node and the distance to another robot is one, and 4) three robots stay at a single node but one robot has a different color from other two robots. In Cases 1 and 2, robots cannot achieve exploration from Lemma \ref{lem:imp-territory}. 

To consider Cases 3 and 4, assume that $r_1$ and $r_2$ have the same color and stay at node $v$, and $r_3$ stays at $v$ or a neighbor of $v$. We consider the scheduler that repeats 1) activation of $r_1$ and $r_2$ and 2) activation of $r_3$. Since $r_1$ and $r_2$ have the same view, they make the same behavior. If $r_1$ and $r_2$ join the node with $r_3$ and they have the same color as $r_3$, this reduces to Case 1. If such a situation never happens, to achieve exploration, eventually $r_1$ and $r_2$ move against $r_3$, or $r_3$ moves against $r_1$ and $r_2$. Both cases reduce to Case 2. Therefore, the lemma holds.
\end{proof}

\begin{lemma}
\label{lem:sp-imp-dist4}
Assume $n\ge 9$. Let $C$ be a configuration such that the longest distance between two robots is at least four. In this case, $C$ is unsolvable in the SSYNC model.
\end{lemma}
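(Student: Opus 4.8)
The plan is to exploit the structure of configurations in which the longest inter-robot distance is at least four, and to argue that in every such configuration we can either apply Lemma~\ref{lem:imp-territory} (or Lemma~\ref{lem:imp-onenode}) directly, or force the adversary to drive the system into a configuration where we can. First I would fix a configuration $C$ with $n\ge 9$ and longest pairwise distance at least four. With three robots on a ring, write the three occupied-or-coinciding positions; since some pair is at distance $\ge 4$, the three robots cannot all lie within a window of three consecutive nodes, so at least one robot (or a tower) is ``isolated'' in the sense that its two neighboring nodes are empty and the nearest other robot is at distance $\ge 2$ on at least one side. The case analysis I would run is: (a) all three robots mutually at distance $\ge 2$ (no two neighbors); (b) exactly two robots are neighbors (forming a sub-configuration $XY$ on adjacent nodes) and the third is at distance $\ge 2$ from both, with the two-robot block itself at distance $\ge 3$ from the third since the longest distance is $\ge 4$; (c) a tower of two or three robots exists.

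In case~(c), Lemma~\ref{lem:sp-imp-tower} already gives unsolvability (the hypothesis $n\ge 9\ge 6$ is met), so that case is closed immediately. In case~(a), I would build an independent territory set: assign to each robot a territory $\{v_i,v_{i+1}\}$ (or $\{v_{i-1},v_i\}$) chosen so that the three territories are pairwise at distance $\ge 2$; this is possible because each robot has an empty neighbor on at least one side and the three robots are spread out enough (the longest distance being $\ge 4$ guarantees enough room, and the short side of the ring has length $\le n/2$ but with $n\ge 9$ the packing works after checking the small cases where two robots are at distance $2$ or $3$). Lemma~\ref{lem:imp-territory} then forbids exploration. Case~(b) is the one requiring genuine work: here the two adjacent robots $r_1,r_2$ form a block $B$, and the third robot $r_3$ is far. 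If $r_1$ and $r_2$ have different colors ($\G\W$-type block), this is precisely a configuration from which Algorithm~\ref{alg:ap3}'s solvable family is built, so I cannot invoke a territory argument against $B$ directly; instead I would argue that $r_3$, being isolated with empty neighbors on both sides, is pinned by Lemma~\ref{lem:imp-onenode} — wait, that requires $r_3$ also not to be reached. The cleaner route: observe that for $r_3$ to ever help exploration it must leave its current vicinity, but with empty neighbors on both sides its view is symmetric and Lemma~\ref{lem:imp-onenode} confines it to a two-node territory unless the block $B$ comes to meet it; since $B$ is at distance $\ge 3$ on at least one arc and $\ge 4$ overall on the other, by the time $B$ (which can make progress only in one coherent direction once it is a $\G\W$ pair moving together) could reach $r_3$ from the far side, it must traverse $\ge 3$ nodes, and meanwhile on the near side $r_3$ is confined; a counting/territory argument then shows some node is never visited. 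If the block is monochromatic ($\G\G$ or $\W\W$), the two robots have identical symmetric views and by the reasoning in Lemma~\ref{lem:fp-imp-gg} they either swap (no progress) or split apart, landing in case~(a).

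I expect the main obstacle to be case~(b) with a bichromatic adjacent block, since there the natural territory argument does not apply to the block itself; the resolution will rely on carefully combining Lemma~\ref{lem:imp-onenode} (to pin the far robot on the side where it is isolated) with the observation that a $\G\W$ pair, once separated from the third robot, behaves essentially like the two-robot system and hence — as in Theorem~\ref{thm:sp-imp-two} — the adversary can make it disconnect or confine it, so that no single coherent unit ever visits the whole ring. I would organize the argument so that every branch terminates either in an application of Lemma~\ref{lem:imp-territory}, an application of Lemma~\ref{lem:imp-onenode}, a reduction to Lemma~\ref{lem:sp-imp-tower}, or a reduction back to case~(a); the finitely many small rings ($n=9$ and the borderline distances $d=2,3,4$) I would check by hand to confirm the territory packings are valid, and conclude that $C$ is unsolvable in the SSYNC model, hence also in ASYNC.
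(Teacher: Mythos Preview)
Your decomposition is reasonable in spirit, but several branches do not close.

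In case~(c) you invoke Lemma~\ref{lem:sp-imp-tower}, but that lemma only applies when the two robots in the tower have the \emph{same} color; a $\G\W$ tower with the third robot far away is not covered (and mixed-color towers do occur among the solvable configurations of Theorem~\ref{thm:ap-alg-cor}, so this is not a technicality). In case~(b) with a monochromatic block you say the two robots have ``identical symmetric views'' and appeal to Lemma~\ref{lem:fp-imp-gg}; neither holds here: a robot in a $\G\G$ block sees an empty node on one side and $\G$ on the other, so its own view is not symmetric, and Lemma~\ref{lem:fp-imp-gg} is an FSYNC argument in which both robots move together, whereas in SSYNC the scheduler may activate only one, sending it onto its neighbor to form a (possibly mixed-color) tower rather than into case~(a).

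The decisive gap is case~(b) with a bichromatic block. You argue that the $\G\W$ pair ``can make progress only in one coherent direction'' and that ``a counting/territory argument then shows some node is never visited'' as it heads toward $r_3$. But a $\G\W$ pair is precisely the engine that \emph{does} traverse the whole ring (this is how the perpetual algorithms work), so no argument of that shape can succeed. What is actually needed, and what the paper does, is to exploit SSYNC: the scheduler activates one robot per step and, whenever a view is symmetric, chooses the direction, so that $r_1$ and $r_2$ are confined to the four-node window $\{v_{-1},v_0,v_1,v_2\}$. Concretely, any move by $r_1$ or $r_2$ either forms a tower (which the scheduler reverses at the next step), disconnects the pair (after which an independent territory set exists), or lands back in the window. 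Once $r_1,r_2$ are trapped, Lemma~\ref{lem:imp-onenode} pins $r_3$ to $\{v_d,v_{d+1}\}$. Your pointer to Theorem~\ref{thm:sp-imp-two} is the right instinct, but you must run this confinement argument explicitly; the ``coherent progress plus counting'' line directly contradicts it.
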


\begin{proof}
We consider three robots $r_1$, $r_2$, and $r_3$. If the distance between any pair of robots is at least three at $C$, robots cannot achieve exploration from Lemma \ref{lem:imp-territory}. To consider the remaining cases, without loss of generality, we assume that the distance between $r_1$ and $r_2$ is at most two at $C$. We assume that $r_1$ occupies $v_0$ and $r_2$ occupies $v_0$, $v_1$, or $v_2$ at $C$. Let $v_d$ be the node occupied by $r_3$ at $C$. Since the distance between some pair of two robots is at least four, we assume $d\ge 4$ and $d\le n-4$ hold without loss of generality.

In the following, we consider the scheduler that activates one robot at each instant. We prove that, under this scheduler, robots cannot achieve exploration from every possible configuration. More concretely, we prove that $r_1$ and $r_2$ always stay at $v_{-1}$ (i.e., $v_{n-1}$), $v_0$, $v_1$, or $v_2$. This implies, since no robots appear in $v_{d-1}$ and $v_{d+2}$, $r_3$ cannot visit nodes other than $v_{d}$ and $v_{d+1}$ from Lemma \ref{lem:imp-onenode}. Therefore, robots cannot achieve exploration.

First consider the case that $r_2$ occupies $v_2$ at $C$. In this case, the view of every robot is symmetric. Consequently, if a robot moves, its direction is decided by the scheduler. This implies $r_1$ cannot move because, if $r_1$ moves, it moves to $v_{-1}$ (i.e., $v_{n-1}$), and from Lemma \ref{lem:imp-territory}, robots cannot achieve exploration. If $r_2$ moves, it moves to $v_1$ and this reduces the next case.

Next, consider the case that $r_2$ occupies $v_1$ at $C$. We consider two sub-cases.
\begin{itemize}
\item Case that $r_1$ can move. If $r_1$ moves to $v_{-1}$, robots cannot achieve exploration from Lemma \ref{lem:imp-territory}. Assume that $r_1$ moves to $v_1$, and let $C'$ be the resultant configuration. Then the views of $r_1$ and $r_2$ become symmetric at $C'$. This implies, if they move, the directions of their movements are decided by the schedule. Hence, if $r_1$ or $r_2$ moves, it moves to $v_0$. This makes the configuration go back to $C$ (possibly, $r_1$ and $r_2$ are swapped). If neither $r_1$ nor $r_2$ moves at $C'$, clearly robots cannot achieve exploration.
\item Case that $r_2$ can move. If $r_2$ moves to $v_2$, this reduces to the case that $r_2$ occupied $v_2$. Assume that $r_2$ moves to $v_0$, and let $C'$ be the resultant configuration. Then the views of $r_1$ and $r_2$ become symmetric at $C'$. This implies, if they move, the directions of their movements are decided by the schedule. Hence, if $r_1$ or $r_2$ moves, it moves to $v_1$. This makes the configuration go back to $C$ (possibly, $r_1$ and $r_2$ are swapped). If neither $r_1$ nor $r_2$ moves at $C'$, clearly robots cannot achieve exploration.
\end{itemize}

Next, consider the case that $r_2$ occupies $v_0$ at $C$. In this case, the views of $r_1$ and $r_2$ are symmetric, and hence, if they move, the directions of their movements are decided by the scheduler. If $r_1$ or $r_2$ moves, it moves to $v_1$. This reduces to the case that $r_1$ and $r_2$ occupy nodes $v_0$ and $v_1$ at $C$. If neither $r_1$ nor $r_2$ moves, clearly robots cannot achieve exploration.

From the above discussion, $r_1$ and $r_2$ always stay at $v_{-1}$, $v_0$, $v_1$, or $v_2$. Therefore robots cannot achieve exploration, and consequently the lemma holds.
\end{proof}

\begin{lemma}
\label{lem:sp-imp-dist3}
Assume $n\ge 9$. Let $C$ be a configuration such that the longest distance between two robots is three. In this case, $C$ is unsolvable in the SSYNC model.
\end{lemma}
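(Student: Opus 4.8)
The plan is to argue as in the proof of Lemma~\ref{lem:sp-imp-dist4}: assume for contradiction that an algorithm $\ALG$ solves perpetual exploration from $C$, so that every fair execution from $C$ visits every node infinitely often, and exhibit a fair scheduler that activates a single robot at a time under which the execution either reaches a configuration already proven unsolvable — a tower (Lemma~\ref{lem:sp-imp-tower}) or one whose longest inter-robot distance is at least four (Lemma~\ref{lem:sp-imp-dist4}) — or else keeps all robots confined to a fixed set of at most six nodes forever. In the first case, splicing our finite prefix with a fair execution that fails from the reached (unsolvable) configuration yields a fair execution from $C$ that fails, contradicting ``$\ALG$ solves from $C$''; in the second case the confined execution already fails since $n\ge 9$.

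First I would classify, up to rotation and reflection, the configurations with longest distance exactly three: \emph{(a)} a ``pair'' of robots on adjacent nodes $\{v_0,v_1\}$ together with a ``singleton'' robot on $v_3$; \emph{(b)} a two-robot tower on a node with a singleton three nodes away; and \emph{(c)} the configuration, possible only for $n=9$, with robots on $v_0,v_3,v_6$. Case~\emph{(c)} is immediate: $\{v_0,v_1\},\{v_3,v_4\},\{v_6,v_7\}$ is an independent territory set covering every occupied node, so Lemma~\ref{lem:imp-territory} confines the robots to six nodes. In case~\emph{(b)} a monochromatic tower is handled by Lemma~\ref{lem:sp-imp-tower}; for a $\G\W$ tower both tower robots have symmetric views, so whenever fairness forces the scheduler to activate one that is enabled to move, it sends it away from the singleton, which either makes the tower monochromatic (Lemma~\ref{lem:sp-imp-tower}) or splits it into an adjacent pair whose far robot is now at distance four from the singleton (Lemma~\ref{lem:sp-imp-dist4}); if the algorithm never enables a tower robot or the singleton to move, the configuration is frozen on two nodes.

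The heart of the argument is case~\emph{(a)}. Since a pair robot sees only its partner, the pair evolves as an isolated two-robot system on $\{v_0,v_1\}$: under one-at-a-time activation each pair robot either stays and/or recolors within $\{v_0,v_1\}$, steps onto its partner's node (forming a tower there), or moves against its partner. A monochromatic tower is killed by Lemma~\ref{lem:sp-imp-tower}; a $\G\W$ tower has symmetric views, so the scheduler can undo it, returning to a pair on two adjacent nodes among $v_{-1},v_0,v_1$ — and if that pair is $\{v_{-1},v_0\}$ we are at distance four from the singleton (Lemma~\ref{lem:sp-imp-dist4}). A move-against by the robot on $v_0$ likewise reaches distance four; a move-against by the robot on $v_1$ carries it next to the singleton on $v_3$, turning the picture into its mirror image, a pair on $\{v_2,v_3\}$ with a singleton on $v_0$. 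The singleton keeps a symmetric view while its neighbours are empty, so whenever fairness forces its activation and it moves, the scheduler sends it away, again reaching distance four, while Lemma~\ref{lem:imp-onenode} keeps it from wandering in the meantime. Hence, unless a tower or a distance-$\ge 4$ configuration is ever produced, the execution merely oscillates between the two mirror states and all robots stay inside $\{v_0,v_1,v_2,v_3\}$ forever, impossible for $n\ge 9$.

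The delicate part is the bookkeeping in case~\emph{(a)}: one must check that the pair genuinely behaves as an isolated two-robot system throughout (its members' views are never influenced by the distant singleton, and the one step in which a pair robot moves onto the node adjacent to the singleton is harmless since it is computed from a correct local view), that every way the pair might form a tower or leak leftward is covered by Lemma~\ref{lem:sp-imp-tower} or Lemma~\ref{lem:sp-imp-dist4}, that the described scheduler is fair, and — the real point — that no sequence of moves lets the three robots drift around the ring without first entering one of the configurations already ruled out.
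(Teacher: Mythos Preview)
Your proposal is correct and follows essentially the same strategy as the paper: build an adversarial one-at-a-time SSYNC scheduler that either drives the system into a previously-handled unsolvable configuration (a same-color tower via Lemma~\ref{lem:sp-imp-tower}, or longest distance $\ge 4$ via Lemma~\ref{lem:sp-imp-dist4}) or else confines all three robots to the four nodes $\{v_0,v_1,v_2,v_3\}$ through the mirror-symmetry oscillation you describe. Your case decomposition is only cosmetically different---you split off the tower case~(b) and explicitly treat the $n=9$ equilateral configuration~(c), which the paper's ``without loss of generality'' in fact silently omits.
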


\begin{proof}
Consider three robots $r_1$, $r_2$, and $r_3$. Without loss of generality, we assume that, at $C$, robots $r_1$ and $r_3$ occupy $v_0$ and $v_3$, respectively, and $r_2$ occupies $v_0$ or $v_1$.

First, we claim that $r_3$ never moves as long as nodes $v_2$ and $v_4$ are free. Otherwise, by activating $r_3$ several times after $C$, the scheduler can make $r_3$ move. Since the view of $r_3$ is symmetric, the direction of the movement is also decided by the scheduler. Hence $r_3$ moves to $v_4$ and the resultant configuration is unsolvable from Lemma \ref{lem:sp-imp-dist4}. Therefore, the above claim holds.

Consider the case that $r_2$ occupies $v_0$ at $C$. Since $r_3$ cannot move, $r_1$ or $r_2$ should move. Since the views of $r_1$ and $r_2$ are symmetric, the directions of their movements are decided by the scheduler. Hence, $r_1$ or $r_2$ moves to $v_{-1}$, and the resultant configuration is unsolvable from Lemma \ref{lem:sp-imp-dist4}. 

Consider the case that $r_2$ occupies $v_1$ at $C$. Since $r_3$ cannot move, $r_1$ or $r_2$ should move. We consider the following three sub-cases.
\begin{itemize}
\item If $r_1$ moves to $v_{-1}$, the resultant configuration is unsolvable from Lemma \ref{lem:sp-imp-dist4}. \item If $r_1$ moves to $v_1$ (resp., if $r_2$ moves to $v_0$), $r_1$ and $r_2$ occupy $v_1$ (resp., $v_0$) at the resultant configuration $C'$. Since $r_3$ cannot move at $C'$, $r_1$ or $r_2$ should move after $C'$. Since views of $r_1$ and $r_2$ are symmetric, the directions of their movements are decided by the scheduler. Hence, $r_1$ or $r_2$ moves to $v_0$ (resp., $v_1$), and this makes the configuration go back to $C$ (possibly, $r_1$ and $r_2$ are swapped). 
\item If $r_2$ moves to $v_2$, $r_1$, $r_2$, and $r_3$ occupies $v_0$, $v_2$, $v_3$, respectively. Since we do not consider colors here, the resultant configuration is symmetric to $C$ considered in the current case (i.e., $r_1$, $r_2$, and $r_3$ occupies $v_0$, $v_1$, $v_3$). 
\end{itemize}
For all sub-cases, the scheduler can make the configuration unsolvable or move robots infinitely among nodes $v_0$, $v_1$, $v_2$, and $v_3$. Hence, robots cannot achieve exploration in this case.
\end{proof}

\begin{lemma}
\label{lem:sp-imp-dist2-tower}
Assume $n\ge 9$. Let $C$ be a configuration such that the longest distance between two robots is two and two robots occupy a single node. In this case, $C$ is unsolvable in the SSYNC model.
\end{lemma}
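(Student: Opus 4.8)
The plan is to argue by contradiction, assuming that some algorithm $\ALG$ solves perpetual exploration from $C$, and to exploit the fact that in $C$ every robot has a symmetric view, so the adversary controls every movement direction. Up to rotation and reflection, and using Lemma~\ref{lem:sp-imp-tower} to dispose of the subcase where the tower is monochromatic, I would take $C$ to be the following: a tower made of a robot $r_1$ of colour $\G$ and a robot $r_2$ of colour $\W$ on $v_0$, together with the third robot $r_3$ on $v_2$ (so $v_1$ and $v_{-1}$ are free). Here $r_1$ and $r_2$ each see only the tower between two empty neighbours, and $r_3$ is isolated; all three views are thus symmetric.

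The heart of the argument is to show that $\ALG$ is forced to leave all three robots idle on these views, with $r_1$ keeping colour $\G$ and $r_2$ keeping colour $\W$. Suppose $\ALG$ prescribed a move on $r_1$'s view. Then the adversary activates $r_1$ alone; since the view is symmetric, the adversary may send $r_1$ onto $v_{-1}$, producing a configuration whose longest inter-robot distance is $3$ (using $n\ge 9$), hence unsolvable by Lemma~\ref{lem:sp-imp-dist3}; prepending this one-step prefix to a failing fair execution of $\ALG$ from that configuration yields a failing fair execution of $\ALG$ from $C$ --- a contradiction. If instead $\ALG$ recoloured $r_1$ to $\W$ without moving it, activating $r_1$ alone produces a monochromatic tower on $v_0$, unsolvable by Lemma~\ref{lem:sp-imp-tower} --- again a contradiction. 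Hence $\ALG$ keeps $r_1$ idle with colour $\G$; exchanging the two colours, the same reasoning forces $\ALG$ to keep $r_2$ idle with colour $\W$. For $r_3$, a prescribed move would let the adversary send it onto $v_3$ (not onto $v_1$, which would create a solvable tower-plus-neighbour configuration), again reaching longest distance $3$; so $\ALG$ keeps $r_3$ idle, and this holds for both of $r_3$'s possible isolated views, since each is reached from the other after a recolouring.

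To finish, I would run the fair scheduler that activates all three robots at every step. By the previous paragraph no robot ever moves, so $r_1$ and $r_2$ remain on $v_0$ and $r_3$ remains on $v_2$; thus only $v_0$ and $v_2$ are ever visited, and since $n\ge 9$ the node $v_1$ is never visited, contradicting perpetual exploration. This proves the lemma.

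The step I expect to require the most care is the middle one: I must check that \emph{every} possible one-robot action out of the relevant configurations --- moving or recolouring a tower robot, moving or recolouring $r_3$ --- lands in a configuration already shown unsolvable by Lemma~\ref{lem:sp-imp-tower} or Lemma~\ref{lem:sp-imp-dist3}, and must invoke the adversary's freedom to pick a symmetric-view robot's direction only where it is legitimate (so that the adversary never produces a solvable configuration). I would also make the reduction pattern explicit: reaching an unsolvable configuration along a finite adversarial prefix yields, by concatenation with the fair failing continuation guaranteed by the cited lemma, a fair failing execution of $\ALG$ from $C$. The remaining verifications are routine.
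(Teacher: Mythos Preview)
Your proposal is correct and follows essentially the same line as the paper: all three robots have symmetric views, so the adversary controls every movement direction, and the first move can always be steered to produce a configuration with longest distance three, which is unsolvable by Lemma~\ref{lem:sp-imp-dist3}.

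The paper's proof is considerably more streamlined than yours, however. It does \emph{not} separate the monochromatic-tower case or invoke Lemma~\ref{lem:sp-imp-tower} at all, and it does not analyse colour-change-only actions. The reason is that the symmetry of every robot's view is independent of colours: whatever recolourings occur, as long as no robot moves the configuration remains ``tower plus isolated robot at distance two'' with all views symmetric. So the paper simply says: if $r_1$ or $r_2$ ever moves, send it away from $r_3$; if $r_3$ ever moves, send it away from the tower; either way the longest distance becomes three and Lemma~\ref{lem:sp-imp-dist3} finishes. Your detour through Lemma~\ref{lem:sp-imp-tower} (for the monochromatic initial case and for the ``$r_1$ recolours to $\W$'' case) is valid but unnecessary---the distance-three argument already covers those configurations uniformly once a robot moves, and if no robot ever moves, exploration fails trivially regardless of colours.
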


\begin{proof}
We consider three robots $r_1$, $r_2$, and $r_3$. Without loss of generality, we assume that $r_1$ and $r_2$ occupy $v_1$ and $r_3$ occupies $v_3$. Since views of all robots are symmetric, the directions of their movements are decided by the scheduler. Hence, if $r_1$ or $r_2$ moves, it moves to $v_0$, and if $r_3$ moves, it moves to $v_4$. For both cases, the longest distance between two robots becomes three. Therefore, robots cannot achieve exploration from Lemma \ref{lem:sp-imp-dist3}.
\end{proof}

\begin{lemma}
\label{lem:sp-imp-dist2-sync}
Assume $n\ge 9$. Let $C$ be a configuration $XYX$ for some $X\in Col$ and $Y\in Col$. In this case, $C$ is unsolvable in the SSYNC model.
\end{lemma}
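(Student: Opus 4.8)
The plan is to show that from the symmetric configuration $C = XYX$ (robots $r_1$, $r_2$, $r_3$ at $v_0$, $v_1$, $v_2$ with $r_1$, $r_3$ of color $X$ and $r_2$ of color $Y$), an SSYNC scheduler can force the configuration into one already known to be unsolvable by an earlier lemma, or else prevent exploration outright. The crucial structural fact is that in $C$ every robot has a symmetric view: $r_1$ sees only $r_2$ on one side and an empty node on the other, $r_3$ is the mirror image, and $r_2$ sees $r_1$ and $r_3$ (both of color $X$) symmetrically on its two sides. Hence whenever a robot decides to move, the adversary picks the direction. I will run the one-robot-at-a-time scheduler and do a short case analysis on which robot moves and whether it changes color.

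First I would handle $r_1$ (and symmetrically $r_3$). If $r_1$ moves, the scheduler sends it to $v_{n-1}$, away from the others; now the longest distance between two robots is at least three (the pair $r_1$ and $r_3$), so by Lemma~\ref{lem:sp-imp-dist3} (or Lemma~\ref{lem:sp-imp-dist4}) the configuration is unsolvable. If instead $r_1$ only changes its color (staying at $v_0$), then either the two $X$-colored robots are now $r_3$ alone while $r_1$ took $r_2$'s old color or some new color; in any case I would re-examine the resulting configuration, which is still of the form ``three robots on $v_0,v_1,v_2$ with the two end robots possibly now of different colors.'' The key sub-observation is that if at any point the two end robots $r_1,r_3$ acquire the same color again, their views are again symmetric and mirror-identical, so the adversary can keep activating them in lockstep. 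Next I would handle $r_2$: if $r_2$ moves, the scheduler sends it to $v_0$ or $v_2$, landing it on top of $r_1$ or $r_3$; this produces a configuration with a tower and longest pairwise distance two, which is unsolvable by Lemma~\ref{lem:sp-imp-dist2-tower}. If $r_2$ only recolors, I repeat the argument on the new configuration $XY'X$.

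The main obstacle is the recoloring-without-moving case: the adversary's moves do not change positions, so I must argue that the system cannot cycle through colors forever while staying on the three nodes $v_0,v_1,v_2$ without ever either (a) moving a robot outward (hitting Lemma~\ref{lem:sp-imp-dist3}/\ref{lem:sp-imp-dist4}), or (b) stacking $r_2$ onto an end robot (hitting Lemma~\ref{lem:sp-imp-dist2-tower}), or (c) simply never moving at all (in which case exploration visits only $v_0,v_1,v_2$, a contradiction for $n\ge 9$). Since there are only finitely many colorings of three robots over a two-element color set on three fixed nodes, and the algorithm is deterministic given the scheduler's direction choices, any infinite execution in which no robot ever changes node must repeat a configuration and therefore be periodic with the robots confined to $\{v_0,v_1,v_2\}$ — impossible for a ring of $n\ge 9$ nodes. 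So in every branch the adversary either drives the configuration to one covered by the already-established impossibility lemmas or traps the robots on three nodes forever; either way exploration fails, which proves the lemma.
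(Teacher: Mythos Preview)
Your proposal contains a genuine error in its central structural claim. You assert that ``in $C$ every robot has a symmetric view'' and hence ``whenever a robot decides to move, the adversary picks the direction.'' This is false for $r_1$ and $r_3$: each of them sees an empty node on one side and $r_2$ (with color $Y$) on the other, so their individual views are \emph{asymmetric}. Only $r_2$ has a symmetric view. Consequently, when $r_1$ (or $r_3$) is enabled to move, the \emph{algorithm} determines whether it moves toward $r_2$ or away from it; the scheduler has no say. Your case ``if $r_1$ moves, the scheduler sends it to $v_{n-1}$'' therefore does not go through. If the algorithm instead sends $r_1$ toward $r_2$, activating $r_1$ alone may produce a configuration such as $\TOWERA{\G}{\W}\TOWERA{}{\W}$, which is \emph{solvable} (it is one of the initial configurations of Algorithm~\ref{alg:ap3}); your adversary has then lost. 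The same problem arises in your recoloring branch: once you activate only one end robot and it recolors, you may land in $\G\G\W$ or $\W\W\G$, again solvable.

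The paper's proof avoids this by having the scheduler activate $r_1$ and $r_3$ \emph{simultaneously}. Since their views are mirror images of each other, they take mirror actions: either both move toward $r_2$ (creating a same-color tower at $v_2$, unsolvable by Lemma~\ref{lem:sp-imp-tower}) or both move away (distance becomes four, unsolvable by Lemma~\ref{lem:sp-imp-dist4}), or both merely recolor to the same new color, preserving the $X'YX'$ shape. After that the scheduler activates $r_2$, whose view \emph{is} symmetric, and if $r_2$ moves the scheduler stacks it onto an end robot (Lemma~\ref{lem:sp-imp-dist2-tower}). The simultaneous activation is the key device your one-robot-at-a-time scheduler lacks.
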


\begin{proof}
Without loss of generality, we assume that robots $r_1$, $r_2$, and $r_3$ have colors $X$, $Y$, and $X$ and occupy $v_1$, $v_2$, and $v_3$, respectively. We consider the scheduler that repeats the following activation: Activate $r_1$ and $r_3$ at the same time, and then activate $r_2$. 

First consider the case that $r_1$ and $r_3$ move. Since $r_1$ and $r_3$ have the same views, they make symmetric behaviors. If $r_1$ and $r_3$ move to $v_2$, $r_1$ and $r_3$ occupy $v_2$ and they have the same color. Hence, the resultant configuration is unsolvable from Lemma \ref{lem:sp-imp-tower}. If $r_1$ and $r_3$ move to $v_0$ and $v_4$, respectively, the distance between $r_1$ and $r_3$ becomes four. Hence, the configuration is unsolvable from Lemma \ref{lem:sp-imp-dist4}.

Next consider the case that $r_2$ moves. In this case, $r_2$ joins $r_1$ or $r_3$ and makes a tower. The resultant configuration is unsolvable from Lemma \ref{lem:sp-imp-dist2-tower}.
\end{proof}

From Lemmas \ref{lem:sp-imp-dist4} and \ref{lem:sp-imp-dist3}, the longest distance between two robots is at most two. Three robots must be connected from Lemma \ref{lem:sp-imp-dist2-tower}, and they must not form a symmetric configuration from Lemma \ref{lem:sp-imp-dist2-sync}. In addition, robots with the same color must not occupy a single node from Lemma \ref{lem:sp-imp-tower}. This implies all configurations other than initial configurations described in Theorem \ref{thm:ap-alg-cor} are unsolvable. Clearly, if configuration $C$ is unsolvable in the FSYNC model, $C$  is also unsolvable in the ASYNC model. Therefore, we have Theorem \ref{thm:universal-async}.

\section{Proof of Theorem \ref{thm:no-universal-ssync}}

To prove Theorem \ref{thm:no-universal-ssync}, we show the following lemma.

\begin{lemma}
\label{lem:no-universal-lemma}
Assume that $\phi=1$, $k=4$, and $Col=\{\G,\W\}$. We define a set of configurations $\CONF_{sol}$ as follows:
\begin{eqnarray*}
\CONF_{sol}&=&\{ \W\W\G\G, \W\W\W\G, \W\W\G\W, \G\G\W\W, \G\W\W\W, \\
           & &\hspace{2mm}\W\G\W\W, \G\G\G\W, \G\G\W\G, \W\G\G\G, \G\W\G\G\}.
\end{eqnarray*}
In the SSYNC model, no algorithm exists that solves terminating exploration from any initial configuration in $\CONF_{sol}$.
\end{lemma}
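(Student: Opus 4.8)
The plan is to follow the template of the proof of Theorem~\ref{thm:no-universal-fsync} (Appendix~A), adapted to four robots and to an SSYNC adversary. Assume for contradiction that some algorithm $\ALG$ solves terminating exploration from every configuration of $\CONF_{sol}$ in the SSYNC model, and fix a ring whose size $n$ is a large constant; run $\ALG$ from, say, $\W\W\G\G$. First I would prove the four-robot analogue of Lemma~\ref{lem:ft-imp-distance}: if the four robots stayed within a bounded window forever, then some $O(1)$-sized window sub-configuration would repeat, and $\ALG$ would either cycle inside a bounded region of the large ring (leaving most nodes unexplored) or keep drifting forever (never terminating). Hence some pair of robots reaches distance at least five at a bounded instant $t_1$, at which point a linear number of nodes is still unexplored.

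Next, from $C_{t_1}$ onward I would use the SSYNC scheduler together with Lemmas~\ref{lem:imp-territory} and~\ref{lem:imp-onenode} and the already-established impossibilities to force the system into a configuration in which exactly one robot is isolated and permanently confined to a two-node territory, while the remaining three robots stay connected and are the only ones that can make progress: an ``active'' group of a single robot is trapped by the symmetric-view/territory argument; an active group of two robots is forced to disconnect by activating one robot as it tries to move, after which an independent territory set appears (the SSYNC two-robot argument of Theorem~\ref{thm:sp-imp-two}); a monochromatic tower among the three is broken before any progress (Lemma~\ref{lem:imp-territory}), and a symmetric triple $XYX$ is sterile (Lemma~\ref{lem:sp-imp-dist2-sync}). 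Thus the still-unexplored linear-size part of the ring must be covered by the three connected robots alone, and --- invoking the impossibility lemmas behind Theorem~\ref{thm:universal-async} --- throughout a linear-length prefix of the subsequent exploration these three robots must form, at every step, a sub-configuration in $\CONF_{exp}$; otherwise the scheduler would trap them in a bounded region and the remaining nodes would stay unexplored.

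I would then carry out the four-robot analogue of Lemmas~\ref{lem:ft-imp-rules} and~\ref{lem:ft-imp-unsolvable}. Enumerating the one-step transitions between configurations of $\CONF_{exp}$ that actually advance the three-robot window around the ring (being careful, as in the analysis of Algorithm~\ref{alg:ap3}, about intermediate partially-updated views in SSYNC/ASYNC), one finds that realizing such a transition forces $\ALG$ to contain at least one rule from a small, explicitly determined set of ``merging'' rules --- rules in which, from a two-neighbour view, a robot moves onto or toward an equally- or oppositely-coloured neighbour in a way that could create a monochromatic tower or a disconnection. The final step is to check, rule by rule in that set, that the same rule is enabled in a suitable four-robot starting configuration of $\CONF_{sol}$ (its guard mentions only nearby robots), and that firing it drives the system into a configuration containing a monochromatic tower or an independent territory set of disconnected groups, hence unsolvable by Lemma~\ref{lem:sp-imp-tower} or Lemma~\ref{lem:imp-territory}. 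Since $\CONF_{sol}$ was chosen (via Lemma~\ref{lem:at-solvable}) to contain an appropriate ``killer'' configuration for each such rule, this contradicts the assumption that $\ALG$ solves terminating exploration from every configuration of $\CONF_{sol}$.

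I expect the main obstacle to be these last two steps: pinning down the exact set of rules that $\ALG$ is \emph{forced} to contain from the $\CONF_{exp}\to\CONF_{exp}$ progression analysis, and then matching each forced rule with a concrete configuration of $\CONF_{sol}$ that it destroys --- the bookkeeping of guards, intermediate SSYNC/ASYNC views, and colour swaps is where the real work lies. By comparison, the reduction to ``three connected robots plus one frozen robot'' in the second step is conceptually routine given the territory lemmas and the two- and three-robot impossibilities, though it still needs a careful SSYNC scheduling argument.
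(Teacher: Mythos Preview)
Your plan is essentially the paper's own proof: the same pigeonhole argument forces a large gap, the same territory/SSYNC lemmas force a $3+1$ split with the three connected robots cycling through $\CONF_{exp}$, and the same rule-enumeration then kills each candidate rule-set against a configuration in $\CONF_{sol}$. One refinement to anticipate: the contradictions in the final case analysis are not only ``monochromatic tower or independent territory set'' --- several cases are closed instead by landing in a \emph{symmetric} four-robot configuration such as $\W\G\G\W$ or $\G\W\W\G$ (which needs its own short impossibility lemma), and several others are closed by showing that the extra rule, when applied to the three exploring robots themselves, would push them out of $\CONF_{exp}$ or into a non-advancing loop, contradicting the perpetual-exploration property you have already established for that trio.
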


From Lemma \ref{lem:at-solvable}, $\CONF_{sol}$ is a subset of solvable initial configurations in the SSYNC and ASYNC models. Hence, universal algorithms for the SSYNC model must solve terminating exploration from any initial configuration in $\CONF_{sol}$ in the SSYNC model. In addition, universal algorithms for the ASYNC model must also solve terminating exploration from any initial configuration in $\CONF_{sol}$ in the SSYNC model. However, Lemma \ref{lem:no-universal-lemma} implies no such algorithms exist. This implies no universal algorithm exists in the SSYNC and ASYNC model. Therefore, we have Theorem \ref{thm:no-universal-ssync}.

In the rest of this section, we prove Lemma \ref{lem:no-universal-lemma} by contradiction. Assume that there exists an algorithm $\ALG$ that solves terminating exploration from any initial configuration in $\CONF_{sol}$ in the SSYNC model. Let $d_1=10^3+1$, $d_2=3d_1$, and $d_3=(2d_2)^4$. We consider a ring with $n>5d_3$ nodes and an execution $E=C_0,C_1,\ldots$ of $\ALG$ in the ring. Here we assume that, in each instant, the scheduler activates at least one robot that changes its position or its color unless $\ALG$ terminates. That is, $C_i\neq C_{i+1}$ holds for every $i\le t$, where $t$ is the instant such that $\ALG$ terminates in $C_t$. Note that, since four robots can visit at most four new nodes in each configuration, $E$ includes at least $n/4>(5/4) d_3$ configurations.

The outline of the proof is as follows. In Lemmas \ref{lem:ssync-unsolvable-1} and \ref{lem:ssync-unsolvable-2}, we give unsolvable configurations. Then, we prove that the distance between some pair of two robots becomes at least $d_2$ (Lemma \ref{lem:at-imp-distance}). We also prove that, in Lemmas \ref{lem:at-imp-perp} and \ref{lem:at-imp-form}, that some three robots must explore the ring by always forming a sub-configuration in
\[
\CONF_{exp}=\left\{\W\G\G, \G\G\W, \G\W\W, \W\W\G, \TOWERA{}{\W}\TOWERA{\G}{\W}, \TOWERA{\G}{\W}\TOWERA{}{\W}, \TOWERA{}{\G}\TOWERA{\W}{\G}, \TOWERA{\W}{\G}\TOWERA{}{\G} \right\}.
\]
Note that these sub-configurations are identical to ones used in the perpetual exploration algorithm in Section \ref{sec:async-perp}. The above fact limits a possible set of rules used in algorithm $\ALG$. Lastly, we prove that, for every possible set of rules, the set of rules makes some configuration in $\CONF_{sol}$ unsolvable (Lemmas \ref{lem:at-imp-last1} and \ref{lem:at-imp-last2}). This contradicts to the assumption that $\ALG$ solves terminating exploration from any initial configuration in $\CONF_{sol}$.

First, we give examples of unsolvable configurations.
\begin{lemma}
\label{lem:ssync-unsolvable-1}
Let $C$ be a configuration that contains a sub-configuration in
\[
\CONF_{sym}=\left\{\W\G\G\W,\G\W\W\G,\W\W\W\W,\G\G\G\G,\TOWERA{\W}{\W}\TOWERA{\W}{\W},
\TOWERA{\G}{\G}\TOWERA{\G}{\G}, \TOWERA{\W}{\G}\TOWERA{\W}{\G} \right\}.
\]
In this case, $C$ is unsolvable.
\end{lemma}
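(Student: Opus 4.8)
The plan is to show that each sub-configuration in $\CONF_{sym}$ forces a situation in which a symmetric "block" of robots is trapped, so that Lemma \ref{lem:imp-territory} (or its one-node variant, Lemma \ref{lem:imp-onenode}) applies. The common feature of all seven sub-configurations is that they contain a length-four window $M_0 M_1 M_2 M_3$ which is symmetric about the midpoint between $v_1$ and $v_2$: in each case the multiset on $v_0$ equals the multiset on $v_3$ and the multiset on $v_1$ equals the multiset on $v_2$, and in every case the two "inner" nodes $v_1,v_2$ carry the same color(s). I would first record this symmetry explicitly for each of the seven patterns (a one-line check per pattern).

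Next I would argue that the scheduler can keep this symmetry forever. Consider the scheduler that, at each step, activates a group of robots together with its mirror image under the reflection fixing the window's axis. Because robots are identical and deterministic, and because at $C$ the configuration is mirror-symmetric about that axis (we only need the local window to be symmetric and the remaining two robots, if any, to be far enough away — but note that for $\W\W\W\W$, $\G\G\G\G$, the tower patterns, and $\W\G\G\W$, $\G\W\W\G$ all four robots already lie in the window), any pair of robots symmetric about the axis have mirror-image views, hence take mirror-image actions; thus $C_{t+1}$ is again symmetric about the same axis. By induction the whole execution stays symmetric. I would then invoke the standard consequence: a robot whose view is symmetric cannot choose its direction, so the scheduler decides it, and in a symmetric configuration the scheduler can always push such a robot "inward" (toward the axis) or force a tower on the axis — in either case the robots can be confined to a bounded set of nodes around the axis. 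More carefully, I would set up an independent territory set $\TER$ (using Lemma \ref{lem:imp-territory}) for the patterns where the four robots sit at four distinct occupied nodes — e.g. for $\W\G\G\W$ take territories $\{v_0,v_{-1}\}$ and... wait, that is not symmetric; instead I would argue directly that the scheduler first collapses the window to a tower or a pair of adjacent towers on the axis and then the configuration becomes one already handled (two robots with the same color on a node, far from the rest), so Lemma \ref{lem:imp-onenode} or Lemma \ref{lem:imp-territory} finishes it.

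The cleanest route, which I would actually follow in the write-up, is: treat the three tower-containing patterns and $\W\W\W\W$, $\G\G\G\G$ first, since there all four robots are within distance three and the whole ring-configuration (for $n\ge 9$, say) has an axis of symmetry with no robot on the axis or on two antipodal-ish far nodes; the symmetric scheduler keeps it symmetric, every robot's view is symmetric throughout, so by a territory argument the four robots never escape a window of bounded length — contradiction with $n$ large. For $\W\G\G\W$ and $\G\W\W\G$ the argument is the same once we note these are just "not yet collapsed" versions: the symmetric scheduler activates the two outer robots together; either they move apart (distance grows, and we can define an independent territory set of two territories, done by Lemma \ref{lem:imp-territory}) or they move inward to form $\cdot\,\W\W\W\W\,\cdot$-type tower patterns already covered, or they change color only and loop forever. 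In each branch exploration fails.

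The main obstacle I anticipate is the bookkeeping to make "the scheduler preserves the symmetry" fully rigorous in the SSYNC/ASYNC setting when the window is only a \emph{sub}-configuration: the two robots outside the window break global symmetry, so I cannot simply say "the configuration is symmetric." The fix is to observe that since $n$ is large and we only need finitely many steps to reach a provably-unsolvable configuration, I can first use a separate short argument to push the outer two robots far away (or note that in all the listed patterns the relevant trapped block has symmetric views \emph{regardless} of what the distant robots do, because those robots are outside visibility range $\phi=1$), and then apply Lemma \ref{lem:imp-territory} / Lemma \ref{lem:imp-onenode} to the block. Handling the color-only-change loop (robots repeatedly swapping $\G\leftrightarrow\W$ without moving) is a minor extra case, dispatched exactly as in Case 2 of Lemma \ref{lem:ft-imp-conf}.
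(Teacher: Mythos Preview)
Your plan has the right first move --- use a symmetric scheduler to preserve the mirror symmetry --- but it then goes off course in two ways.

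First, the ``main obstacle'' you worry about does not exist. In this lemma $k=4$, and each pattern in $\CONF_{sym}$ already contains exactly four robots, so the sub-configuration \emph{is} the whole configuration; there are no ``two robots outside the window'' to break global symmetry. You actually note this parenthetically midway through, but then spend the final paragraph trying to handle phantom distant robots. Drop that entirely: the configuration is globally symmetric about an axis through an edge, and the symmetric scheduler keeps it so.

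Second, your proposed endgame --- a branching case analysis on whether outer robots move apart, collapse inward, or loop on colors, plus an attempt to set up territories --- is messier than needed and, as written, incomplete (you only track the outer pair in $\W\G\G\W$, not the inner pair, and ``territories $\{v_0,v_{-1}\}$ and\ldots'' never gets finished). The paper avoids all of this with a single clean reduction: since the axis passes through an \emph{edge}, the two mirror halves each contain at most two robots; for exploration to succeed, the robots on one side would have to keep progressing indefinitely in one direction. But Lemma~\ref{lem:imp-onenode} (one robot) and Theorem~\ref{thm:sp-imp-two} (two robots) already show that one or two robots cannot keep moving in a single direction under an adversarial SSYNC scheduler. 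That is the whole argument --- no territory sets, no collapse-to-tower cases, no color-loop bookkeeping. You cite Lemma~\ref{lem:imp-onenode} but miss the key invocation of Theorem~\ref{thm:sp-imp-two}, which is what makes the proof one line instead of a page.
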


\begin{proof}
Note that configurations in $\CONF_{sym}$ are symmetric and the axis of symmetry goes through a link between robots. We consider the scheduler that always activates two symmetric robots. Clearly, all robots keep their symmetry. Hence, to explore the ring, one or two robots must continue to move in one direction (and other one or two robots move in another direction). However, as shown in Lemma \ref{lem:imp-onenode} and Theorem \ref{thm:sp-imp-two}, one or two robots cannot continue to move in one direction. Therefore, robots cannot achieve exploration from configurations in $\CONF_{sym}$.
\end{proof}

\begin{lemma}
\label{lem:ssync-unsolvable-2}
Let $C$ be a configuration such that two robots with the same color occupy a single node. In this case, $C$ is unsolvable.
\end{lemma}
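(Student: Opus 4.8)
The plan is to reduce the situation to an already-settled impossibility. Assume $C$ contains a node $v$ occupied by two robots $r_1$ and $r_2$ with the same color (say both $X$), and suppose for contradiction that some algorithm $\ALG$ solves terminating exploration from $C$ in the SSYNC model. The key observation is that $r_1$ and $r_2$ have identical views at $C$, so if the scheduler always activates $r_1$ and $r_2$ together they perform identical actions, change to the same color, and move in the same direction. Hence they behave like a single "merged" robot forever, unless at some point they land on a node together with the fourth robot sharing its color (which only makes a larger same-color tower) or the configuration becomes one we have already shown unsolvable.

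First I would set up the scheduler that, at every instant, activates $r_1$ and $r_2$ simultaneously (together with whatever choice for the other two robots we need). Under this scheduler the pair $\{r_1,r_2\}$ is invariant: it always occupies one node with one common color. So effectively the system behaves as if there were only three "entities" — the merged pair plus the other two robots $r_3,r_4$. Next I would argue that the merged pair can never split, and that either (a) at some point all four robots occupy one node with the same color, or three robots do, in which case we can invoke Lemma~\ref{lem:ssync-unsolvable-1} (the configurations in $\CONF_{sym}$ include the all-same-color tower cases) or a territory argument via Lemma~\ref{lem:imp-territory}; or (b) the pair stays separate from $r_3$ and $r_4$, in which case the three entities behave exactly like three robots. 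In case (b) I would mimic the proof of Theorem~\ref{thm:no-three-ssync}: pick a large ring, run $\ALG$ there, note the three entities cannot distinguish it from a smaller ring as long as they stay connected, and derive that they either terminate without having visited all nodes (since three entities touch at most $3(t+1)$ nodes up to the termination time $t$ chosen as in that theorem), or become disconnected, in which case an independent territory set exists and Lemma~\ref{lem:imp-territory} finishes the job.

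More directly, I would state the reduction as: the merged pair acts as one robot, so $\ALG$ under this scheduler induces an algorithm for three robots (with possibly more colors, but that does not matter since Theorem~\ref{thm:no-three-ssync} allows an infinite number of colors), and Theorem~\ref{thm:no-three-ssync} says no such algorithm solves terminating exploration in SSYNC. The only subtlety is handling the event where the pair merges with another robot of the same color into a tower of three or four; there Lemma~\ref{lem:ssync-unsolvable-1} (or a simple territory argument, as in Lemma~\ref{lem:sp-imp-tower}) covers it, because a same-color tower of all robots is symmetric with the axis through the tower's node and the pair-plus-tower cannot be broken so as to explore.

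The main obstacle I expect is bookkeeping the case analysis for what happens when $r_3$ or $r_4$ joins the merged pair: one must check that every resulting configuration is either genuinely a tower of same-colored robots (hence unsolvable by the symmetry/territory arguments already in hand) or is a configuration with fewer effective entities to which the same reduction applies. Once that is organized, the heart of the argument — "two same-colored co-located robots are forever indistinguishable under a synchronous activation, so they collapse the problem to one with strictly fewer robots, which is impossible by Theorem~\ref{thm:no-three-ssync}" — is short. I would therefore present the proof as: fix the always-activate-$r_1$-and-$r_2$-together scheduler, observe the pair is a permanent merged entity, split on whether the pair ever collides with a same-colored third robot (handled by Lemma~\ref{lem:ssync-unsolvable-1} / territory arguments) or not (handled by the Theorem~\ref{thm:no-three-ssync} simulation), and conclude $C$ is unsolvable.
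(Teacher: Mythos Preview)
Your approach is essentially the same as the paper's: fix the scheduler that always activates $r_1$ and $r_2$ together, observe that they then act as a single robot, and reduce to the three-robot impossibility of Theorem~\ref{thm:no-three-ssync}. The paper's proof is only three sentences and invokes Theorem~\ref{thm:no-three-ssync} directly; your extra case split on whether the merged pair later collides with a third same-colored robot is unnecessary, since such a collision only shrinks the number of effective entities further and is already subsumed by the ``at most three entities'' reduction.
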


\begin{proof}
Let $r_1$ and $r_2$ be robots that have the same color and occupy a single node. We consider the scheduler that always activates $r_1$ and $r_2$ at the same time. Clearly, since $r_1$ and $r_2$ have the same view, they make the same behaviors. This means $r_1$ and $r_2$ move as if they were a single robot. Therefore, similarly to Theorem \ref{thm:no-three-ssync}, robots cannot achieve exploration from configuration $C$.
\end{proof}

Next, we consider an execution $E$ of algorithm $\ALG$, and prove that the distance between some pair of two robots becomes large in $E$.

\begin{lemma}
\label{lem:at-imp-distance}
The distance between some pair of two robots becomes at least $d_2$ before the $d_3=(2d_2)^4$-th configuration of execution $E$.
\end{lemma}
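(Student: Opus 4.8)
The plan is to mimic the pigeonhole argument used in Lemma~\ref{lem:ft-imp-distance}, but adapted to four robots and the larger thresholds $d_1,d_2,d_3$. I would argue by contradiction: suppose that in the execution $E$ the distance between every pair of robots stays at most $d_2-1$ up to configuration $C_{d_3}$. Then at each instant $i$ with $0\le i\le d_3$, since all four robots lie within a window of $2d_2-1$ consecutive nodes, I can record a ``local snapshot'' $C'_i$: the sub-configuration of length $2d_2$ starting at the first occupied node (reading in a canonically chosen direction), which specifies, for each of the $2d_2$ positions, the multiset of colors present (each robot being $\G$ or $\W$). The number of such local snapshots is bounded: each of the four robots independently picks one of $2d_2$ positions and one of $2$ colors, so there are at most $(2\cdot 2d_2)^4 = (4d_2)^4$ of them — and in any case this is a constant independent of $n$, crucially smaller than the number of configurations in $E$ once $n$ is large. (One should pick the constant $d_3=(2d_2)^4$ to be at least this count; a slightly more careful bookkeeping of indistinguishable labelings gives exactly $(2d_2)^4$, which is why the paper chose that value.)

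By the pigeonhole principle there are instants $t<u\le d_3$ with $C'_t=C'_u$. Now I split into the same two cases as in Lemma~\ref{lem:ft-imp-distance}. If the actual windows of $C_t$ and $C_u$ occupy the same set of nodes, then $C_t=C_u$ as full configurations; since $\ALG$ is deterministic and no robot ever sees anything outside its window (the rest of the ring is empty within distance $\phi=1$ of any robot), the segment of the execution from $t$ to $u$ — including the scheduler's choices, which I may take to be periodic — repeats forever, so the robots only ever visit the nodes touched up to time $u$. Those number at most $4(u+1)\le 4(d_3+1) < n$, so exploration never completes: contradiction. If instead the windows of $C_t$ and $C_u$ sit on different sets of nodes, then the robots have collectively shifted, but again because their behavior depends only on the (identical) local snapshot and the empty surroundings, the same block of moves repeats indefinitely; the robots then keep moving forever and never terminate, contradicting the termination requirement of terminating exploration.

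The one subtlety I would need to handle carefully — and the step I expect to be the main obstacle — is the direction/orientation issue: the ring is unoriented, and when I define the canonical local snapshot I must fix a reading direction, but a symmetric local view may leave the scheduler free to resolve a robot's movement direction, so two instants with the same snapshot need not evolve identically unless I also pin down the scheduler's choices. This is resolved by recalling that we are choosing the scheduler adversarially: I simply commit, in advance, to a scheduler whose activations and tie-break direction choices from instant $t$ onward are periodic with period $u-t$, copying whatever it did on $[t,u)$. With that scheduler fixed, the evolution from $t$ is genuinely periodic regardless of any symmetry, and both cases above go through. The remaining bookkeeping — verifying that $4(d_3+1)<n$ (which holds since $n>5d_3$), and that the counting of local snapshots really is at most $d_3$ — is routine, and I would only sketch it. This establishes Lemma~\ref{lem:at-imp-distance}.
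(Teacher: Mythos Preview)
Your argument is essentially the paper's: assume the bound fails, record a bounded-length sub-configuration at each step, apply the pigeonhole principle to find a repeat, and split on whether the repeated window sits on the same nodes (robots loop without finishing) or has shifted (robots move forever without terminating). The one discrepancy is your window length: you take $2d_2$ and then appeal vaguely to ``indistinguishable labelings'' to squeeze the count down to $(2d_2)^4$, whereas the paper simply uses a window of length $d_2$. That tighter window is justified because on a ring with $n>4(d_2-1)$, if all pairwise distances are at most $d_2-1$ then some gap between cyclically consecutive robots exceeds $d_2-1$, hence (by the distance assumption on that pair) has length at least $n-(d_2-1)$, so the complementary arc containing all four robots has at most $d_2$ nodes; with window $d_2$ the count $(2d_2)^4=d_3$ is immediate. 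Your explicit commitment to a periodic adversarial scheduler to force the repetition is a detail the paper leaves implicit, and your treatment there is an improvement.
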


\begin{proof}
For contradiction, assume that the distance between any pair of robots is at most $d_2-1$ until $d_3$-th configuration of $E$. For each configuration $C_i$ ($0\le i\le d_3$), we define $C'_i$ as a sub-configuration of $C_i$ such that $C'_i$ contains $d_2$ nodes (\emph{i.e.}, the length of $C'_i$ is $d_2$), every robot occupies one of the $d_2$ nodes, and some robot occupies the first node of $C'_i$. Since each robot has one of two colors and occupies one of $d_2$ nodes, the number of such sub-configurations is at most $d_3$. Hence, for some $t$ and $u$ ($t<u\le d_3$), sub-configurations $C'_t$ and $C'_u$ are identical. In the following, we consider two cases.

The first case is that the $d_2$ nodes contained in $C'_t$ and $C'_u$ are identical. This implies configurations $C_t$ and $C_u$ are identical. Since $\ALG$ is deterministic, robots repeat the behavior from $C_t$ to $C_u$ after $C_u$. Since robots can visit at most $4d_3$ nodes before configuration $C_t$, they cannot visit remaining nodes after $C_u$. Thus robots cannot complete exploration.

The second case is that the $d_2$ nodes contained in $C'_t$ and $C'_u$ are not identical. This means robots change their positions from $C_t$ to $C_u$. However, since no robots exist out of the $d_2$ nodes, the behaviors of robots depend only on the sub-configuration. Hence, robots repeat the behaviors from $C_t$ to $C_u$ after $C_u$. This implies that robots continue to explore the ring, and thus they cannot terminate.

For both cases, robots cannot achieve terminating exploration. This is a contradiction.
\end{proof}

From Lemma \ref{lem:at-imp-distance}, the distance between some pair of two robots becomes at least $d_2$ before the $d_3$-th configuration. Let $t_1$ be the smallest instant such that the distance between some pair of two robots is at least $d_2$ at $C_{t_1}$. Clearly $t_1\le d_3$ holds. Since four robots can visit at most $4(t_1+1)$ nodes before configuration $C_{t_1}$, at least $d_3-4$ unexplored nodes exist at $C_{t_1}$. Next, we consider instant $t_2=t_1+d_1$. Since robots can visit at most $4d_1$ nodes during configurations $C_{t_1}$ to $C_{t_2}$, at least $d_3-4-4d_1$ unexplored nodes exist at $C_{t_2}$.

In Lemmas \ref{lem:at-imp-team} to \ref{lem:at-imp-form}, we show that there exist three robots that explore the network by forming a sub-configuration in $\CONF_{exp}$ from $C_{t_1}$ to $C_{t_2}$.

\begin{lemma}
\label{lem:at-imp-team}
Let $C$ be a configuration in $\{C_{t_1},C_{t_1+1},\ldots,C_{t_2}\}$. At configuration $C$, there exist three robots such that the distance among them is at most four.
\end{lemma}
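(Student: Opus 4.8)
The plan is to argue by contradiction using a counting/pigeonhole argument analogous to Lemma \ref{lem:at-imp-distance}, but applied to the restricted window of $d_1$ consecutive configurations $C_{t_1},\ldots,C_{t_2}$. First I would suppose that at some configuration $C\in\{C_{t_1},\ldots,C_{t_2}\}$, no three of the four robots lie within mutual distance four; since $k=4$, and the overall configuration is connected only in pieces, this forces the four robots to split into two groups (e.g.\ two pairs, or a triple-plus-singleton that still fails the distance bound, or four singletons) that are pairwise far apart — concretely, at least two ``clusters'' each of which is separated from the rest by distance more than four. The key observation is that such a split lets us define an independent territory set, or at least apply Lemma \ref{lem:imp-onenode} / Lemma \ref{lem:imp-territory}, to the far-apart clusters.

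The main steps I would carry out, in order, are: (1) Enumerate the ways four robots can be positioned so that no three are within mutual distance four; the decisive case is when the robots break into exactly two clusters $A$ and $B$ with $\mathrm{dist}(A,B)>4$ (each cluster having one or two robots, possibly a tower). (2) Show that in each such case one can assign to $A$ and to $B$ disjoint, ``independent'' territories in the sense of Lemma \ref{lem:imp-territory} (when a cluster has two robots forming a tower of equal color, or two isolated same-color robots) or invoke Lemma \ref{lem:imp-onenode} (when a cluster is a single node whose immediate surroundings stay empty). In every sub-case the robots of $A$ stay confined near $A$ and those of $B$ stay confined near $B$, so the union of nodes they can ever visit after $C$ is bounded. (3) Since $C$ occurs at instant at most $t_2=t_1+d_1\le d_3+d_1$, and at least $d_3-4-4d_1>0$ unexplored nodes remain at $C_{t_2}$ (by the count preceding the lemma), the confined robots cannot reach all those unexplored nodes, contradicting that $\ALG$ achieves terminating exploration. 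Hence no such $C$ exists, and at every configuration in the window three robots are within mutual distance four.

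The subtle point — and the main obstacle — is handling the \textbf{mixed-color} cases, where the clusters are not made of same-color robots, so the clean ``symmetric view'' argument behind Lemma \ref{lem:imp-territory} does not directly apply. For instance, a cluster consisting of two robots at distance one with colors $\G$ and $\W$ can genuinely move along the ring (that is exactly what the perpetual-exploration algorithm exploits). So the case analysis must be sharpened: if a far-apart cluster is a non-confined pair such as $\G\W$, I would argue that, by fairness, the scheduler can let that pair travel around the ring while freezing the other cluster; eventually the traveling pair either meets the other cluster (contradicting that the clusters stayed at distance $>4$ throughout the window) or it never does, in which case the other cluster is a truly confined group (same color, isolated, or a monochromatic tower) and Lemma \ref{lem:imp-territory} or Lemma \ref{lem:imp-onenode} finishes the argument. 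The remaining bookkeeping — checking that the window length $d_1=10^3+1$ is large enough for the confinement/return dichotomy to bite before $C_{t_2}$, and that the unexplored-node count stays positive — is routine given the choices $d_1=10^3+1$, $d_2=3d_1$, $d_3=(2d_2)^4$.
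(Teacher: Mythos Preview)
Your proposal has a genuine gap in the mixed-color case, and the gap stems from misunderstanding which side the scheduler is on.

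You write that when a far-apart cluster is a non-confined pair such as $\G\W$, ``by fairness, the scheduler can let that pair travel around the ring while freezing the other cluster; eventually the traveling pair either meets the other cluster (contradicting that the clusters stayed at distance $>4$ throughout the window) or it never does\ldots''. This is backwards. The scheduler is \emph{adversarial}: it is trying to \emph{prevent} exploration, not to shepherd a $\G\W$ pair around the ring. Moreover, the parenthetical ``contradiction'' is not a contradiction of anything you have assumed; you only assumed that \emph{at the single configuration $C$} no three robots are within distance four, so the pair meeting the other cluster later does not contradict your hypothesis, and in fact would be exactly what the algorithm wants. On a finite ring a genuinely traveling pair will always meet the other cluster eventually, so your dichotomy collapses to the branch that gives you nothing.

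The idea you are missing is that in the SSYNC model the scheduler can \emph{break} an adjacent $\G\W$ pair by activating just one of the two robots. The paper's proof exploits exactly this: it parametrizes the positions as $r_1$ at $v_0$, $r_4$ at $v_d$ (the farthest pair, $d\ge d_1$), and $r_2,r_3$ at $v_i,v_j$ with $0\le i\le j\le d$, $i\le d-j$, then assumes $j>4$ and splits on the value of $i$. In the critical sub-case $i\in\{0,1\}$ (i.e., $r_1,r_2$ adjacent, possibly of different colors), the scheduler activates one robot at a time: if that robot moves toward its partner they form a tower with symmetric view (scheduler now controls direction, forcing oscillation), and if it moves away the pair becomes disconnected and Lemma~\ref{lem:imp-territory} confines each to a two-node territory. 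Either way $r_1,r_2$ are trapped in $\{v_{-2},\ldots,v_3\}$, and the same reasoning (or a direct territory argument) confines $r_3,r_4$. This single-activation trick is what makes the SSYNC case work and is absent from your sketch. Your opening reference to a ``pigeonhole argument analogous to Lemma~\ref{lem:at-imp-distance}'' is also a red herring: no counting is used here, only territory/confinement arguments driven by the scheduler's choices.
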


\begin{proof}
Recall that the distance between some pair of two robots is at least $d_2$ at configuration $C_{t_1}$. Since the distance between two robots decreases by at most two after one cycle, the distance between some pair of two robots is at least $d_2-2d_1\ge d_1$. We consider four robots $r_1$, $r_2$, $r_3$, and $r_4$. Let $d\ge d_1$ be the longest distance between two robots. Without loss of generality, we assume the distance between $r_1$ and $r_4$ is $d$, $r_1$ occupies $v_0$, and $r_4$ occupies $v_d$ at $C$. We can also assume that $r_2$ and $r_3$ occupy $v_i$ and $v_j$ ($0\le i\le j\le d$), respectively, and $i\le d-j$ holds. For contradiction, we assume $j>4$.

Consider the case that $i=2$ or $i=3$ holds. Since $j>4$ and $j\le d-2$ holds, we can define an independent territory set $\{\{v_{-1},v_0\},\{v_2,v_3\},\{v_j,v_{j+1}\},\{v_d,v_{d+1}\}\}$. From Lemma \ref{lem:imp-territory}, robots cannot explore the ring.

Consider the case that $i=0$ or $i=1$ holds, that is, $r_1$ and $r_2$ are connected. First, we show that $r_1$ and $r_2$ continue to occupy nodes in $\{v_h |-2\le h\le 3\}$ as long as no robot appears on $v_{-3}$ or $v_4$. In the case of $i=0$, if $r_1$ or $r_2$ moves, it moves $v_1$ because the direction of its movement is decided by the scheduler. This reduces to the case of $i=1$. Let us consider the case of $i=1$. If $r_1$ moves to $v_1$ or $r_2$ moves to $v_0$, they make a tower. In this case, after $r_1$ or $r_2$ moves again, the configuration goes back to the previous one (possibly, $r_1$ and $r_2$ are swapped). Hence, we assume that eventually $r_1$ moves to $v_{-1}$ or $r_2$ moves to $v_2$. If $r_1$ moves to $v_{-1}$, we can define territories $\{v_{-2},v_{-1}\}$ for $r_1$ and $\{v_1,v_2\}$ for $r_2$. Similarly, if $r_2$ moves to $v_2$, we can define territories $\{v_{-1},v_{0}\}$ for $r_1$ and $\{v_2,v_3\}$ for $r_2$. Hence, as long as no robot appears on $v_{-3}$ or $v_4$, $r_1$ and $r_2$ continue to occupy nodes in $\{v_h |-2\le h\le 3\}$. Let us consider $r_3$ and $r_4$ in this case. If $r_3$ and $r_4$ are connected at $C$, we can show the same proposition as $r_1$ and $r_2$. That is, as long as no robot appears on $v_{d-4}$ or $v_{d+3}$, $r_3$ and $r_4$ continue to occupy nodes in $\{v_{h'}|d-3\le h'\le d+2\}$. Therefore, in this case, the four robots cannot achieve exploration. If $r_3$ and $r_4$ are not connected at $C$, we can define territories for $r_3$ and $r_4$ such that $r_3$ and $r_4$ do not go out of their territories. Hence, robots cannot achieve exploration.

Now we consider the remaining case $i>3$. Clearly $r_1$ continues to occupy $v_0$ or $v_{-1}$ as long as no robot appears on $v_{-2}$ or $v_{1}$, and $r_4$ continues to occupy $v_d$ or $v_{d+1}$ as long as no robot appears on $v_{d-1}$ or $v_{d+2}$. First consider the case that $r_2$ and $r_3$ are connected. As described above, $r_1$ and $r_4$ cannot go out of their current and neighboring nodes as long as $r_2$ or $r_3$ moves toward them. This implies $r_2$ or $r_3$ must move toward $r_1$ or $r_4$ to achieve exploration. Here we assume the scheduler activates either $r_2$ or $r_3$ at the same time. This implies $r_2$ or $r_3$ moves against another robot, and the distance between $r_2$ and $r_3$ becomes two. At this configuration, we can define an independent territory set and thus robots cannot achieve exploration from Lemma \ref{lem:imp-territory}. Next consider the case that $r_2$ and $r_3$ are disconnected. In this case, we can similarly define an independent territory set and thus, from Lemma \ref{lem:imp-territory}, robots cannot achieve exploration.

For all cases, robots cannot explore the ring. Hence, $j\le 4$ holds. That is, the distance among robots $r_1$, $r_2$, and $r_3$ is at most four.
\end{proof}

From Lemma \ref{lem:at-imp-team}, for each configuration $C$ in $\{C_{t_1},C_{t_1+1},\ldots,C_{t_2}\}$, there exist three robots such that the distance among them is at most four. Note that, since the distance from one of the three robots to another robot is at least $d_1$, a set of three robots within distance four does not change from $C_{t_1}$ to $C_{t_2}$. In the following, we define $r_1$, $r_2$, and $r_3$ as three robots that stay within distance four from $C_{t_1}$ to $C_{t_2}$. Let $r_4$ be another robot. In the following lemma, we show that $r_1$, $r_2$, and $r_3$ execute a perpetual exploration algorithm from $C_{t_1}$ to $C_{t_2}$.

\begin{lemma}
\label{lem:at-imp-perp}
Consider configuration $C^*$ such that $r_1$, $r_2$, and $r_3$ stay on the same node with the same color as configuration $C_{t_1}$ and $r_4$ does not exist. Then, from configuration $C^*$, robots $r_1$, $r_2$, and $r_3$ achieve a perpetual exploration.
\end{lemma}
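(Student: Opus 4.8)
The plan is to exploit the fact that, throughout the window $[t_1,t_2]$ of the execution $E$, robots $r_1,r_2,r_3$ evolve \emph{autonomously}. From the distance bounds used in Lemma~\ref{lem:at-imp-team} the largest pairwise distance stays at least $d_2-2d_1=d_1$, and since $r_1,r_2,r_3$ remain within mutual distance four, the triangle inequality puts $r_4$ at distance at least $d_1-4\ge 2$ from each of them. Hence during $[t_1,t_2]$ none of $r_1,r_2,r_3$ ever sees $r_4$ and $r_4$ never sees any of them, so the moves of $r_1,r_2,r_3$ depend only on their own relative configuration, which at $C_{t_1}$ is exactly $C^*$. Equivalently, running $\ALG$ with three robots from $C^*$, under the activation pattern that $E$ uses for $r_1,r_2,r_3$, reproduces verbatim their behaviour from $C_{t_1}$ to $C_{t_2}$. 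In the same window $r_4$ is isolated with both neighbours free, so by Lemma~\ref{lem:imp-onenode} it can reach at most two nodes; and as long as the adversary keeps the three robots away from it, $r_4$ stays confined forever.

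The main step is then a contradiction argument. Suppose $\ALG$ does not solve perpetual exploration from $C^*$ with three robots, i.e.\ some fair scheduler $\sigma^*$ leaves some node unvisited from some time on. Because robots move one node per step, the three robots are then eventually confined to a proper sub-arc. Using the pigeonhole principle on the (at most a constant number of) relative configurations of three robots lying within mutual distance four --- this is where $d_1=10^3+1$ being far larger than that constant is used --- one shows that $\sigma^*$ may be chosen so that the three robots actually loop forever through a fixed local pattern with zero net displacement, hence are confined to an arc $B$ of constant size. Now replay this inside $E$: keep the first $t_1$ steps of $E$ (which start from a configuration in $\CONF_{sol}$), then from $C_{t_1}$ on run a translate of $\sigma^*$ on $r_1,r_2,r_3$ and activate $r_4$ fairly. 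Since $|B|$ is a constant while $r_4$ sits at distance at least $d_2-4\gg|B|$, the three robots never come within distance two of $r_4$, so by Lemma~\ref{lem:imp-onenode} $r_4$ never leaves its two-node territory. Thus this execution of $\ALG$ --- four robots, initial configuration in $\CONF_{sol}$, ring of $n>5d_3$ nodes --- never visits all nodes, contradicting that $\ALG$ solves terminating exploration from $\CONF_{sol}$. Therefore $\ALG$ does solve perpetual exploration from $C^*$.

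The delicate point, and the step I expect to be the main obstacle, is converting ``$\ALG$ fails to perpetually explore from $C^*$'' into ``the three robots can be confined to a constant-size arc''. One must exclude the scenario in which they keep acquiring new nodes yet still miss one --- impossible on a ring once we note they advance one node at a time and the rules are direction-symmetric, so any net-displacement gait has a fixed direction and eventually sweeps the whole ring --- as well as the scenario of oscillation over an arc whose length grows with $n$; the pigeonhole bound on local configurations is precisely what caps the relevant arc by a constant, and the numeric choices $d_1,d_2,d_3$ are tuned so that this constant is dwarfed by the guaranteed separation from $r_4$. A routine but necessary accompanying check is that no wrap-around occurs on the $n>5d_3$-node ring during any of these bounded phases, so that the territory/arc arguments and Lemma~\ref{lem:imp-onenode} apply unchanged.
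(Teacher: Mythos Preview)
Your contrapositive strategy is sound in outline and is essentially the mirror image of the paper's argument, but the paper's route is both shorter and avoids the very obstacle you flag as delicate. The paper does not start from a hypothetical failing three-robot scheduler $\sigma^*$ at all. It works \emph{forward} inside the four-robot execution $E$: on the window $[t_1,t_2]$, Lemma~\ref{lem:at-imp-team} already guarantees that $r_1,r_2,r_3$ stay within mutual distance four, so one can take length-five sub-configurations $C'_i$ and apply pigeonhole (there are at most $10^3<d_1$ of them) to find $g<h$ with $C'_g=C'_h$. If the five nodes of $C'_g$ and $C'_h$ coincide, the adversary may repeat the schedule on $[g,h]$ periodically; the three robots then revisit the same nodes forever while $r_4$ is confined by Lemma~\ref{lem:imp-onenode}, so $E$ itself fails terminating exploration --- contradiction. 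Hence the nodes differ, meaning nonzero net displacement, and the periodic extension of the schedule on $[g,h]$ (with $r_4$ removed) gives perpetual exploration from $C^*$. No separate ``confinement to a constant arc'' step is needed.

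Your version has a real gap at precisely the point you identify. You assume a failing fair scheduler $\sigma^*$ exists and then assert that ``$\sigma^*$ may be chosen'' so that the three robots loop with zero net displacement inside a constant-size arc. But the pigeonhole you invoke needs the three robots to remain within mutual distance four, and Lemma~\ref{lem:at-imp-team} only supplies that bound for the execution $E$ (because $E$ must succeed); under an arbitrary failing $\sigma^*$ the three robots may well spread apart, and then length-five sub-configurations no longer capture the state. Moreover, even when they do stay close, pigeonhole only gives a repeated sub-configuration, not a loop: you must \emph{replace} $\sigma^*$ by the periodic extension of its $[g,h]$ segment, and that new scheduler might have nonzero displacement and therefore \emph{succeed}, which tells you nothing. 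To repair your argument you would have to branch: if under $\sigma^*$ the three robots ever leave mutual distance four, switch at that instant to the confining schedulers supplied by the unsolvability Lemmas~\ref{lem:sp-imp-tower}--\ref{lem:sp-imp-dist2-sync}; otherwise run pigeonhole and argue that a nonzero-displacement repetition would already force every node to be visited under $\sigma^*$ itself, contradicting the choice of $\sigma^*$. That works, but it is exactly the case analysis the paper's forward argument absorbs in one stroke by staying inside $E$ and leaning on Lemma~\ref{lem:at-imp-team}.
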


\begin{proof}
First we consider configurations from $C_{t_1}$ to $C_{t_2}$. From Lemma \ref{lem:at-imp-team}, three robots $r_1$, $r_2$, and $r_3$ always stay within distance four. Hence, for each configuration $C_i$ for $t_1\le i\le t_2$, we can define $C'_i$ as a sub-configuration of $C_i$ such that $C'_i$ contains five nodes (\emph{i.e.}, the length of $C'_i$ is five), each of $r_1$, $r_2$, and $r_3$ occupies one of the five nodes, and some robot occupies the first node of $C'_i$. Since each robot has one of two colors and occupies one of five nodes, the number of such sub-configurations is at most $10^3<d_1$. Hence, for some $g$ and $h$ ($t_1\le g<h\le t_2$), sub-configurations $C'_g$ and $C'_h$ are identical. We consider two cases.

The first case is that the five nodes contained in $C'_g$ and $C'_h$ are identical. This implies that $r_1$, $r_2$, and $r_3$ do not change their positions and colors in $C_g$ and $C_h$. Hence, unless $r_1$, $r_2$, or $r_3$ observes $r_4$, they repeat the behavior from $C_g$ to $C_h$ after $C_h$ and consequently continue to visit the same nodes. On the other hand, $r_4$ is far from the three robots and cannot go out of its territory (\emph{i.e.}, its current and neighboring nodes). This implies that $r_1$, $r_2$, and $r_3$ do not observe $r_4$, and thus they cannot visit the remaining unexplored nodes.

Let us consider another case, that is, the five nodes contained in $C'_g$ and $C'_h$ are different. This implies $r_1$, $r_2$, and $r_3$ change their positions from $C_g$ to $C_h$, that is, they move forward or backward in the ring. Clearly, the three robots do not observe $r_4$ from $C_{t_1}$ to $C_h$. This implies that, if $r_4$ does not exist, they repeat the behavior from $C_g$  to $C_h$ and continue to explore the ring. That is, they can achieve a perpetual exploration. Therefore, the lemma holds.
\end{proof}

\begin{lemma}
\label{lem:at-imp-form}
From $C_{t_1}$ to $C_{t_2}$, robots $r_1$, $r_2$, and $r_3$ form a sub-configuration in $\CONF_{exp}$.
\end{lemma}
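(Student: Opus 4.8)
The plan is to argue by contradiction inside the proof of Lemma~\ref{lem:no-universal-lemma}, using its standing assumption that $\ALG$ solves terminating exploration from every configuration in $\CONF_{sol}$ and that $E=C_0,C_1,\ldots$ is an execution of $\ALG$ in a ring with $n>5d_3$ nodes with $C_0\in\CONF_{sol}$. Suppose, toward a contradiction, that for some $i$ with $t_1\le i\le t_2$ the sub-configuration $S$ formed by $r_1,r_2,r_3$ at $C_i$ is \emph{not} in $\CONF_{exp}$ (note that $\CONF_{exp}$ is closed under mirroring and under color swap, so ``$S\notin\CONF_{exp}$'' is unambiguous). First I would fix the geometric picture at $C_i$: as shown in the proof of Lemma~\ref{lem:at-imp-team}, at every configuration of $\{C_{t_1},\ldots,C_{t_2}\}$ some pair of robots is at distance at least $d_1$; since $r_1,r_2,r_3$ are pairwise within distance four, that pair must involve $r_4$, so $r_4$ is at distance at least $d_1-4$ from each of $r_1,r_2,r_3$, and hence $r_4$ and the cluster $\{r_1,r_2,r_3\}$ lie outside each other's visibility (and will stay so under any continuation).

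Next I would invoke the case analysis behind Theorem~\ref{thm:universal-async}, namely Lemmas~\ref{lem:sp-imp-tower}--\ref{lem:sp-imp-dist2-sync}: for a ring with $n\ge 9$, the three-robot configurations whose pairwise distances are at most four and which are \emph{not} in $\CONF_{exp}$ (up to symmetry) are exactly the ones covered by those lemmas, and each of them provides a fair adversarial scheduler confining the three robots forever to a bounded set $W$ of $O(1)$ nodes --- via an independent territory set and Lemma~\ref{lem:imp-territory}, via Lemma~\ref{lem:imp-onenode}, or via a symmetry-preserving activation. Since $S\notin\CONF_{exp}$, such a scheduler $\sigma$ exists for $\{r_1,r_2,r_3\}$ starting at $C_i$. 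Independently, $r_4$ is isolated at $C_i$ with both neighbours free, so by Lemma~\ref{lem:imp-onenode} there is a fair activation keeping $r_4$ inside its own territory $W'$; because $r_4$ is at distance at least $d_1-4$ from the cluster while $W$ and $W'$ have $O(1)$ diameter, $W$ and $W'$ remain disjoint and at distance at least two, so neither group ever enters the other's view and the two activation patterns do not interfere.

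Then let $E'$ be the execution of $\ALG$ that coincides with $E$ on $C_0,\ldots,C_i$ and, from step $i$ on, activates $r_1,r_2,r_3$ according to $\sigma$ and $r_4$ according to the territory-preserving activation above. This $E'$ is a legitimate fair execution of $\ALG$ from $C_0\in\CONF_{sol}$, and from $C_i$ on every robot stays inside $W\cup W'$, a set of $O(1)$ nodes. Since at most $4(i+1)\le 4(d_3+d_1+1)<n-10$ nodes are occupied in $C_0,\ldots,C_i$, some node outside $W\cup W'$ is never visited in $E'$, so $E'$ fails to visit every node and in particular does not solve terminating exploration --- contradicting the standing assumption. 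Hence no such $i$ exists, which is exactly the lemma.

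I expect the main obstacle to be the verification underlying the second paragraph: that the adversarial schedulers extracted from Lemmas~\ref{lem:sp-imp-tower}--\ref{lem:sp-imp-dist2-sync}, which are proved for systems with exactly three robots, still confine the cluster once the fourth robot $r_4$ is present. This is intuitively immediate --- $r_4$ never enters the cluster's view, so the cluster behaves exactly as in a three-robot system --- but it requires going case by case through the configuration types in those lemmas to check that the confining region $W$, together with $r_4$'s fixed territory $W'$, never forces the two groups to within distance one, and that the combined activation is still fair. A secondary, purely arithmetic point is the inequality $4(d_3+d_1+1)+10<n$, which holds because $n>5d_3$ and $d_3=(2d_2)^4$ dominates $d_1$.
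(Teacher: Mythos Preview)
Your proposal is correct, but it takes a different route from the paper's proof. The paper's argument is a two-line reduction to Lemma~\ref{lem:at-imp-perp}: if the three robots ever formed a sub-configuration $C_s\notin\CONF_{exp}$, then (since $r_4$ is invisible to them) Lemma~\ref{lem:at-imp-perp} says the three-robot restriction of $\ALG$ achieves perpetual exploration from $C_s$, which directly contradicts the fact that $C_s$ is unsolvable for three-robot perpetual exploration by Lemmas~\ref{lem:sp-imp-tower}--\ref{lem:sp-imp-dist2-sync}. Your argument instead bypasses Lemma~\ref{lem:at-imp-perp} entirely: you pull the confining adversarial scheduler out of those same unsolvability lemmas, graft it onto the prefix $C_0,\ldots,C_i$ of $E$, pair it with a territory-preserving activation of $r_4$, and obtain a full four-robot execution $E'$ from $C_0\in\CONF_{sol}$ that never visits all nodes---contradicting the standing hypothesis on $\ALG$ directly.

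What this buys: the paper's route is much shorter because Lemma~\ref{lem:at-imp-perp} has already packaged the positive statement (``the three robots perpetually explore''), so the present lemma becomes a one-shot contradiction against the negative statement (``$C_s$ is unsolvable''). Your route is more explicit and arguably more robust, since it spells out the bad execution in the actual four-robot system rather than reasoning about a hypothetical three-robot instance; in particular, it makes transparent exactly where the large distance $d_1$ between $r_4$ and the cluster is used. The price you pay is the case-by-case check you flag at the end---that the confining schedulers of Lemmas~\ref{lem:sp-imp-tower}--\ref{lem:sp-imp-dist2-sync} still confine the cluster in the presence of a far-away $r_4$---which is routine but not literally contained in those lemmas. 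The paper sidesteps that check by contradicting Lemma~\ref{lem:at-imp-perp} instead of the top-level assumption.
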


\begin{proof}
For contradiction, assume that $r_1$, $r_2$, and $r_3$ form a sub-configuration $C_s$ not in $\CONF_{exp}$. From Lemma \ref{lem:at-imp-perp}, if $r_4$ does not exist, the three robots achieve perpetual exploration from configuration $C_s$. However, $C_s$ is unsolvable for perpetual exploration from Lemmas \ref{lem:sp-imp-tower}, \ref{lem:sp-imp-dist4}, \ref{lem:sp-imp-dist3}, \ref{lem:sp-imp-dist2-tower}, and \ref{lem:sp-imp-dist2-sync}. This is a contradiction.
\end{proof}

From Lemmas \ref{lem:at-imp-perp} and \ref{lem:at-imp-form}, during configurations $C_{t_1}$ to $C_{t_2}$, robots $r_1$, $r_2$, and $r_3$ continue to form a sub-configuration in $\CONF_{exp}$ and explore the ring. That is, they must change their colors and positions from a sub-configuration to another sub-configuration in $\CONF_{exp}$. We show transitions among sub-configurations in Fig.\,\ref{fig:imp-transition}, and list all rules to realize such transitions as follows. Note that, in Fig.\,\ref{fig:imp-transition}, we do not distinguish symmetric sub-configurations such as $\G\W\W$ and $\W\W\G$.

\begin{figure}[t]
\begin{center}
\includegraphics[scale=0.35]{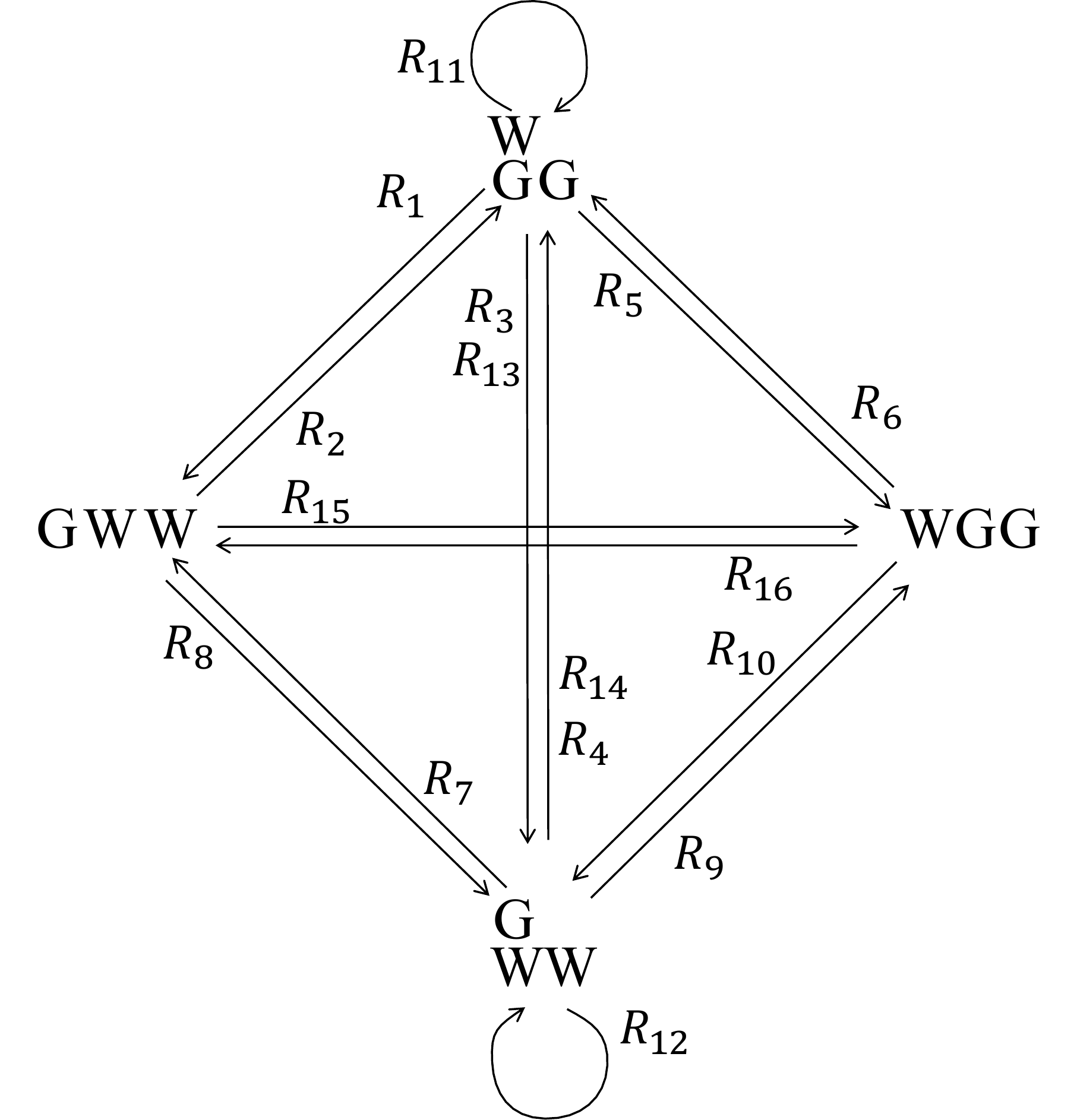}
\caption{Transitions among configurations in $\CONF_{exp}$} 
\label{fig:imp-transition}
\end{center}
\end{figure}

\begin{tabular}{ll}
     $\RULE_{1} : \TOWERA{}{\EMP}\TOWERA{\W}{(\G)}\TOWERA{}{\G}
                :: \W,\LEFT$ \hspace{1cm} &
     $\RULE_{2} : \EMP(\W)\W :: \G,\RIGHT$ \vspace{3mm}\\
     $\RULE_{3} : \TOWERA{}{\EMP}\TOWERA{\W}{(\G)}\TOWERA{}{\G}
                :: \W,\RIGHT$ & 
     $\RULE_{4} : \TOWERA{}{\EMP}\TOWERA{\G}{(\W)}\TOWERA{}{\W}
                :: \G,\RIGHT$ \vspace{3mm}\\
     $\RULE_{5} : \TOWERA{}{\EMP}\TOWERA{\G}{(\W)}\TOWERA{}{\G}
                :: \W,\LEFT$ &
     $\RULE_{6} : \EMP(\W)\G :: \W,\RIGHT$ \vspace{3mm}\\
     $\RULE_{7} : \TOWERA{}{\EMP}\TOWERA{\W}{(\G)}\TOWERA{}{\W}
                :: \G,\LEFT$ &
     $\RULE_{8} : \EMP(\G)\W :: \G,\RIGHT$ \vspace{3mm}\\
     $\RULE_{9} : \TOWERA{}{\EMP}\TOWERA{\G}{(\W)}\TOWERA{}{\W}
                :: \G,\LEFT$ &
     $\RULE_{10} : \TOWERN{\EMP}\TOWERN{(\G)}\TOWERN{\G}
                 :: \W,\RIGHT$ \vspace{3mm}\\
     $\RULE_{11} : \TOWERA{}{\EMP}\TOWERA{\G}{(\W)}\TOWERA{}{\G}
                 :: \W,\RIGHT$ &
     $\RULE_{12} : \TOWERA{}{\EMP}\TOWERA{\W}{(\G)}\TOWERA{}{\W}
                 :: \G,\RIGHT$ \vspace{3mm}\\
     $\RULE_{13} : \TOWERA{}{\EMP}\TOWERA{}{(\G)}\TOWERA{\W}{\G}
                 :: \W,\bot$ &
     $\RULE_{14} : \TOWERA{}{\EMP}\TOWERA{}{(\W)}\TOWERA{\G}{\W}
                 :: \G,\bot$ \vspace{5mm}\\
     $\RULE_{15} : \TOWERN{\G}\TOWERN{(\W)}\TOWERN{\W} :: \G,\bot$ &
     $\RULE_{16} : \TOWERN{\G}\TOWERN{(\G)}\TOWERN{\W} :: \W,\bot$ \vspace{4mm}
\end{tabular}

The next lemma shows $\ALG$ includes neither $\RULE_{15}$ nor $\RULE_{16}$.

\begin{lemma}
\label{lem:st-imp-1516}
Algorithm $\ALG$ includes neither $\RULE_{15}$ nor $\RULE_{16}$.
\end{lemma}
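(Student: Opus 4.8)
The plan is to argue by contradiction. Suppose $\ALG$ contains $\RULE_{15}$ (the argument for $\RULE_{16}$ is parallel). The key remark is that $\RULE_{15}$ and $\RULE_{16}$ only recolor a robot without moving it, so a single SSYNC activation of one well-chosen robot turns a solvable four-robot configuration into a \emph{symmetric} four-robot configuration, which Lemma \ref{lem:ssync-unsolvable-1} already rules out. Since, under the standing assumption of Lemma \ref{lem:no-universal-lemma}, $\ALG$ solves terminating exploration from every configuration in $\CONF_{sol}$, exhibiting such a transition yields the contradiction.

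For $\RULE_{15}$ I would take the initial configuration $\W\G\W\W\in\CONF_{sol}$, with robots $r_1,r_2,r_3,r_4$ on four consecutive nodes in this order. The (unoriented) view of $r_3$ is exactly $\G(\W)\W$, the guard of $\RULE_{15}$, so if the scheduler activates only $r_3$ then, by determinism of $\ALG$, $r_3$ must execute $\RULE_{15}$: it recolors to $\G$ and stays. The resulting configuration is $\W\G\G\W\in\CONF_{sym}$, hence unsolvable by Lemma \ref{lem:ssync-unsolvable-1}, contradicting that $\ALG$ solves terminating exploration from $\W\G\W\W$. For $\RULE_{16}$ I would instead take $\G\G\W\G\in\CONF_{sol}$: here $r_2$ has view $\G(\G)\W$, the guard of $\RULE_{16}$, so activating only $r_2$ recolors it to $\W$ without moving, producing $\G\W\W\G\in\CONF_{sym}$, again unsolvable. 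Either way $\ALG$ fails on a configuration of $\CONF_{sol}$, so it can contain neither rule.

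The points to verify are only bookkeeping: that in each starting configuration the singled-out robot's forward-or-backward view really matches the guard of the offending rule, that activating that robot alone must fire that rule (immediate from $\ALG$ being deterministic and well defined), and that $\W\G\G\W$ and $\G\W\W\G$ indeed belong to $\CONF_{sym}$ (they are its first two entries). I therefore do not expect a genuine obstacle here; the difficulty is entirely absorbed by Lemma \ref{lem:ssync-unsolvable-1}, and proving Lemma \ref{lem:st-imp-1516} amounts to selecting the right witness configurations in $\CONF_{sol}$ together with one scheduler step.
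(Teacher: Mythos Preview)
Your proof is correct and follows essentially the same approach as the paper: for $\RULE_{15}$ the paper also starts from $\W\G\W\W\in\CONF_{sol}$, applies $\RULE_{15}$ to reach $\W\G\G\W$, and invokes Lemma~\ref{lem:ssync-unsolvable-1}, then dispatches $\RULE_{16}$ ``similarly''. Your explicit choice of $\G\G\W\G$ for the $\RULE_{16}$ case is a valid instantiation of that ``similarly''.
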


\begin{proof}
Assume that $\ALG$ includes $\RULE_{15}$ for contradiction. Consider a configuration $\W\G\W\W\in\CONF_{sol}$. This configuration transits to $\W\G\G\W$ by $\RULE_{15}$. From Lemma \ref{lem:ssync-unsolvable-1}, configuration $\W\G\G\W$ is unsolvable. This is a contradiction. Similarly we can show that $\ALG$ does not include $\RULE_{16}$.
\end{proof}

Since three robots that form a sub-configuration in $\CONF_{exp}$ have to explore the ring, they repeat state transitions by the above rules. This means $\ALG$ includes a set of rules that form a (simple) cycle in Fig.\,\ref{fig:imp-transition}. We list all such sets of rules as follows. Note that, from Lemma \ref{lem:st-imp-1516}, each set includes neither $\RULE_{15}$ nor $\RULE_{16}$.

\begin{tabular}{l}
    $\{\RULE_{11}\}$, $\{\RULE_{12}\}$  \\
    $\{\RULE_{1},\RULE_{2}\}$, $\{\RULE_{3},\RULE_{4}\}$,
    $\{\RULE_{3},\RULE_{14}\}$, $\{\RULE_{13},\RULE_{4}\}$,
    $\{\RULE_{13},\RULE_{14}\}$ \\ 
    $\{\RULE_{5},\RULE_{6}\}$,
    $\{\RULE_{7},\RULE_{8}\}$, $\{\RULE_{9},\RULE_{10}\}$ \\
    $\{\RULE_{2},\RULE_{3},\RULE_{7}\}$, $\{\RULE_{2},\RULE_{13},\RULE_{7}\}$,
    $\{\RULE_{8},\RULE_{4},\RULE_{1}\}$,
    $\{\RULE_{8},\RULE_{14},\RULE_{1}\}$, \\
    $\{\RULE_{10},\RULE_{4},\RULE_{5}\}$, $\{\RULE_{10},\RULE_{14},\RULE_{5}\}$,
    $\{\RULE_{6},\RULE_{3},\RULE_{9}\}$,
    $\{\RULE_{6},\RULE_{13},\RULE_{9}\}$, \\
    $\{\RULE_{2},\RULE_{5},\RULE_{10},\RULE_{7}\}$,
    $\{\RULE_{8},\RULE_{9},\RULE_{6},\RULE_{1}\}$
\end{tabular}

However, $\ALG$ cannot include some of the above sets. For example, let us consider a set $\{\RULE_{2},\RULE_{5},\RULE_{10},\RULE_{7}\}$. Starting from sub-configuration $\G\W\W$, the three robots change their sub-configuration as follows. 
\[
\TOWERA{}{\G\W\W}
\XRIGHT{\RULE_{2}}
\TOWERA{}{\G}\TOWERA{\W}{\G}\TOWERA{}{\EMP}
\XRIGHT{\RULE_{5}}
\TOWERA{}{\G\G\W}
\XRIGHT{\RULE_{10}}
\TOWERA{}{\EMP}\TOWERA{\G}{\W}\TOWERA{}{\W}
\XRIGHT{\RULE_{7}}
\TOWERA{}{\G\W\W}
\]
Note that these sub-configurations include the same nodes. This implies that, when the robots form $\G\W\W$ again, they do not change their positions. Hence, if $\ALG$ includes this set, the three robots cannot explore the ring. On the other hand, let us consider a set $\{\RULE_{2},\RULE_{3},\RULE_{7}\}$. Starting from sub-configuration $\G\W\W$, the three robots change their sub-configuration as follows.
\[
\TOWERA{}{\EMP \G\W\W}
\XRIGHT{\RULE_{2}} \TOWERA{}{\EMP}\TOWERA{}{\G}\TOWERA{\W}{\G}\TOWERA{}{\EMP}
\XRIGHT{\RULE_{3}}
\TOWERA{}{\EMP}\TOWERA{G}{\W}\TOWERA{}{\W}\TOWERA{}{\EMP}
\XRIGHT{\RULE_{7}}
\TOWERA{}{\G\W\W\EMP}
\]
In this case, when the robots form $\G\W\W$ again, they change their positions. Hence, by this set of rules, the three robots can explore the ring. By considering all sets of rules, the following four sets of rules allow robots to explore the ring:
\[
\{\RULE_{2},\RULE_{3},\RULE_{7}\},
\{\RULE_{8},\RULE_{4},\RULE_{1}\},
\{\RULE_{10},\RULE_{4},\RULE_{5}\},
\{\RULE_{6},\RULE_{3},\RULE_{9}\}.
\]
That is, $\ALG$ includes at least one of the above four sets of rules. However, in the following lemmas, each set of rules makes some configuration in $\CONF_{sol}$ unsolvable.

\begin{lemma}
\label{lem:at-imp-last1}
If $\ALG$ includes $\{\RULE_{2},\RULE_{3},\RULE_{7}\}$ or $\{\RULE_{10},\RULE_{4},\RULE_{5}\}$, $\ALG$ cannot solve terminating exploration from some configuration in $\CONF_{sol}$.
\end{lemma}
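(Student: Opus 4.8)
The plan is to handle the two rule sets together by exploiting a symmetry: the rules $\RULE_{10},\RULE_{4},\RULE_{5}$ are exactly the images of $\RULE_{2},\RULE_{3},\RULE_{7}$ under the color swap $\W\leftrightarrow\G$, and both the problem ``terminating exploration'' and the set $\CONF_{sol}$ are invariant under relabeling the two colors. Hence it suffices to prove the claim assuming $\{\RULE_{2},\RULE_{3},\RULE_{7}\}\subseteq\ALG$, and then obtain the other case by applying the swap (the witness configuration is replaced by its color-complement). So I assume $\ALG$ contains $\RULE_{2},\RULE_{3},\RULE_{7}$, and I look for a short fair execution of $\ALG$ starting from a configuration in $\CONF_{sol}$ and ending in a configuration that is already known to be unsolvable.

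The witness I would use is $\G\W\W\W\in\CONF_{sol}$ (four robots on four consecutive nodes $v_0,\dots,v_3$, every other node free). In this configuration the only robot whose forward or backward view matches one of $\RULE_{2},\RULE_{3},\RULE_{7}$ is the robot on $v_3$, which is enabled by $\RULE_{2}$; the scheduler activates it alone, so it recolors to $\G$ and moves onto $v_2$, producing $\G\W\TOWERA{\W}{\G}\EMP$. In this new configuration one checks that, among $\RULE_{2},\RULE_{3},\RULE_{7}$, only the $\G$ robot of the $\{\W,\G\}$-tower on $v_2$ is enabled, namely by $\RULE_{7}$; activating it alone moves that $\G$ robot onto $v_3$ and leaves the $\W$ on $v_2$, producing $\G\W\W\G$. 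Since $\G\W\W\G\in\CONF_{sym}$, Lemma \ref{lem:ssync-unsolvable-1} tells us this configuration is unsolvable, so no fair continuation of $\ALG$ from it achieves terminating exploration. Concatenating the two steps with such a failing continuation gives a fair execution of $\ALG$ from $\G\W\W\W$ that fails; since $\G\W\W\W\in\CONF_{sol}$, this is the desired conclusion, and the color-swapped version shows $\ALG$ fails from $\W\G\G\G\in\CONF_{sol}$ when $\{\RULE_{10},\RULE_{4},\RULE_{5}\}\subseteq\ALG$.

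The only genuinely substantive step is spotting that firing $\RULE_{2}$ and then $\RULE_{7}$ turns the solvable configuration $\G\W\W\W$ into the link-symmetric (hence unsolvable) configuration $\G\W\W\G$; everything else is bookkeeping. The routine but error-prone parts, which I would carry out with explicit node indices exactly as sketched above, are: (i) on an unoriented ring, deciding which of a robot's two views matches a given guard and therefore which physical direction the action's $\LEFT$/$\RIGHT$ denotes; and (ii) verifying, in each of the two intermediate configurations, that no robot other than the intended one is enabled by any of $\RULE_{2},\RULE_{3},\RULE_{7}$, so that the scheduler can indeed realize exactly these two steps. Note that $\ALG$ may contain further rules, but the adversarial SSYNC scheduler is free never to activate the robots those rules enable along this finite prefix, and fairness is then restored by the failing continuation.
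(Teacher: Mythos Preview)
Your proof is correct and substantially simpler than the paper's. The paper chooses the witness $\W\G\W\W\in\CONF_{sol}$, drives it via $\RULE_{2}$ and $\RULE_{3}$ to the configuration $\W\,\TOWERA{\W}{\G}\,\W$, observes that none of $\RULE_{2},\RULE_{3},\RULE_{7}$ is enabled there, and then performs an exhaustive eleven-case analysis over every rule whose guard matches one of the robots' views, showing in each case that either a configuration in $\CONF_{sym}$ is reached, a same-color tower is formed (Lemma~\ref{lem:ssync-unsolvable-2}), or the three-robot exploration guaranteed by Lemmas~\ref{lem:at-imp-perp}--\ref{lem:at-imp-form} is broken. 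By contrast, your choice of witness $\G\W\W\W$ lets you reach the unsolvable $\G\W\W\G\in\CONF_{sym}$ in just two scheduler steps using only $\RULE_{2}$ and $\RULE_{7}$, with no auxiliary case split at all. The key observation you exploit is that $\RULE_{7}$, applied to the tower created by $\RULE_{2}$, sends the newly-$\G$ robot right back to where it came from, so the net effect of the two moves is merely a recoloring $\W\mapsto\G$ of an endpoint.

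Two small remarks. First, your item (ii) is indeed unnecessary, as you yourself note: in SSYNC the adversary may activate any single robot, so it suffices that the \emph{chosen} robot's view matches $\RULE_{2}$ (resp.\ $\RULE_{7}$); you need not verify that other robots are disabled. Second, the crucial implicit point---which you use and which is sound---is that determinism of $\ALG$ forces the activated robot to execute exactly the action of $\RULE_{2}$ (resp.\ $\RULE_{7}$), since two rules with the same guard and different actions would make $\ALG$ nondeterministic. With that, your two-step prefix followed by a fair failing continuation from $\G\W\W\G$ (which exists by Lemma~\ref{lem:ssync-unsolvable-1}) yields a fair execution of $\ALG$ from $\G\W\W\W$ that fails, and the color-swap symmetry you invoke handles $\{\RULE_{10},\RULE_{4},\RULE_{5}\}$ via the witness $\W\G\G\G$.
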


\begin{proof}
First, we consider the case that $\ALG$ includes $\{\RULE_{2},\RULE_{3},\RULE_{7}\}$. Let us consider a configuration $\W\G\W\W\in\CONF_{sol}$. From rules $\RULE_{2}$ and $\RULE_{3}$, the configuration changes as follows.
\[
\TOWERA{}{\W\G\W\W}
\XRIGHT{\RULE_{2}}
\TOWERA{}{\W}\TOWERA{}{\G}\TOWERA{\W}{\G}
\XRIGHT{\RULE_{3}}
\TOWERA{}{\W}\TOWERA{\W}{\G}\TOWERA{}{\W}
\]
Let $\CONF_a$ be the last configuration. Since robots cannot move from $\CONF_a$ by rules $\RULE_{2},\RULE_{3},\RULE_{7}$, $\ALG$ includes some other rules to move robots from $\CONF_a$. We consider all possibilities in the following.

\begin{itemize}

\item $\ALG$ includes $\RULE'_{1} : \TOWERA{}{\W}\TOWERA{\W}{(\G)}\TOWERA{}{\W}::\G,\BOTH$. In this case, robots change their states as follows.
\[
\TOWERA{}{\W}\TOWERA{\W}{\G}\TOWERA{}{\W}
\XRIGHT{\RULE'_{1}}
\TOWERA{}{\W}\TOWERA{}{\W}\TOWERA{\G}{\W}
\XRIGHT{\RULE_{2}}
\TOWERA{\G}{\W}\TOWERA{\G}{\W}
\]
From Lemma \ref{lem:ssync-unsolvable-1}, the last configuration is unsolvable.

\item $\ALG$ includes $\RULE'_{2} : \TOWERA{}{\W}\TOWERA{\W}{(\G)}\TOWERA{}{\W}::\W,\BOTH$. In this case, robots change their states as follows.
\[
\TOWERA{}{\W}\TOWERA{\W}{\G}\TOWERA{}{\W}
\XRIGHT{\RULE'_{2}}
\TOWERA{}{\W}\TOWERA{}{\W}\TOWERA{\W}{\W}
\]
From Lemma \ref{lem:ssync-unsolvable-2}, the last configuration is unsolvable.

\item $\ALG$ includes $\RULE'_{3} : \TOWERA{}{\W}\TOWERA{\W}{(\G)}\TOWERA{}{\W}::\W,\bot$. In this case, robots change their states as follows.
\[
\TOWERA{}{\W}\TOWERA{\W}{\G}\TOWERA{}{\W}
\XRIGHT{\RULE'_{3}}
\TOWERA{}{\W}\TOWERA{\W}{\W}\TOWERA{}{\W}
\]
From Lemma \ref{lem:ssync-unsolvable-2}, the last configuration is unsolvable.

\item $\ALG$ includes $\RULE'_{4} : \TOWERA{}{\W}\TOWERA{\G}{(\W)}\TOWERA{}{\W}::\G,\BOTH$. In this case, robots change their states as follows.
\[
\TOWERA{}{\W}\TOWERA{\W}{\G}\TOWERA{}{\W}
\XRIGHT{\RULE'_{4}}
\TOWERA{}{\W}\TOWERA{}{\G}\TOWERA{\W}{\G}
\XRIGHT{\RULE_{3}}
\TOWERA{}{\W}\TOWERA{\W}{\G}\TOWERA{}{\W}
\]
That is, the robots repeatedly change their states while keeping their positions. This implies that they cannot achieve exploration.

\item $\ALG$ includes $\RULE'_{5} : \TOWERA{}{\W}\TOWERA{\G}{(\W)}\TOWERA{}{\W}::\W,\BOTH$. In this case, robots change their states as follows.
\[
\TOWERA{}{\W}\TOWERA{\W}{\G}\TOWERA{}{\W}
\XRIGHT{\RULE'_{5}}
\TOWERA{}{\W}\TOWERA{}{\G}\TOWERA{\W}{\W}
\]
From Lemma \ref{lem:ssync-unsolvable-2}, the last configuration is unsolvable.

\item $\ALG$ includes $\RULE'_{6} : \TOWERA{}{\W}\TOWERA{\G}{(\W)}\TOWERA{}{\W}::\G,\bot$. In this case, robots change their states as follows.
\[
\TOWERA{}{\W}\TOWERA{\W}{\G}\TOWERA{}{\W}
\XRIGHT{\RULE'_{6}}
\TOWERA{}{\W}\TOWERA{\G}{\G}\TOWERA{}{\W}
\]
From Lemma \ref{lem:ssync-unsolvable-2}, the last configuration is unsolvable.

\item $\ALG$ includes $\RULE'_{7} : \TOWERA{}{\EMP}\TOWERA{}{(\W)}\TOWERA{\W}{\G}::\W,\RIGHT$. In this case, robots change their states as follows.
\[
\TOWERA{}{\W}\TOWERA{\W}{\G}\TOWERA{}{\W}
\XRIGHT{\RULE'_{7}}
\TOWERB{\W}{\W}{\G}\TOWERB{}{}{\W}
\]
From Lemma \ref{lem:ssync-unsolvable-2}, the last configuration is unsolvable.

\item $\ALG$ includes $\RULE'_{8} : \TOWERA{}{\EMP}\TOWERA{}{(\W)}\TOWERA{\W}{\G}::\G,\RIGHT$. In this case, robots change their states as follows.
\[
\TOWERA{}{\W}\TOWERA{\W}{\G}\TOWERA{}{\W}
\XRIGHT{\RULE'_{8}}
\TOWERB{\W}{\G}{\G}\TOWERB{}{}{\W}
\]
From Lemma \ref{lem:ssync-unsolvable-2}, the last configuration is unsolvable.

\item $\ALG$ includes $\RULE'_{9} : \TOWERA{}{\EMP}\TOWERA{}{(\W)}\TOWERA{\W}{\G}::\W,\LEFT$. Recall that, from $C_{t_1}$ to $C_{t_2}$, three robots change their states by rules $\RULE_{2}$, $\RULE_{3}$, and $\RULE_{7}$. Let us consider a configuration such that three robots form $\TOWERA{\G}{\W}\TOWERA{}{\W}$. In the next configuration, three robots can form $\TOWERA{\G}{\W}\TOWERA{}{\EMP}\TOWERA{}{\W}$ by rule $\RULE'_{9}$. This contradicts to Lemma \ref{lem:at-imp-form}.

\item $\ALG$ includes $\RULE'_{10} : \TOWERA{}{\EMP}\TOWERA{}{(\W)}\TOWERA{\W}{\G}::\G,\LEFT$. Let us consider a configuration such that three robots form $\TOWERA{\G}{\W}\TOWERA{}{\W}$ to explore the ring. In the next configuration, three robots can form $\TOWERA{\G}{\W}\TOWERA{}{\EMP}\TOWERA{}{\G}$ by rule $\RULE'_{10}$. This contradicts to Lemma \ref{lem:at-imp-form}.

\item $\ALG$ includes $\RULE'_{11} : \TOWERA{}{\EMP}\TOWERA{}{(\W)}\TOWERA{\W}{\G}::\G,\bot$. Let us consider a configuration such that three robots form $\TOWERA{\G}{\W}\TOWERA{}{\W}$ to explore the ring. After that, the three robots change their states as follows.
\[
\TOWERA{\G}{\W}\TOWERA{}{\W}
\XRIGHT{\RULE'_{11}}
\TOWERA{\G}{\W}\TOWERA{}{\G}
\XRIGHT{\RULE_{3}}
\TOWERA{}{\W}\TOWERA{\W}{\G}
\XRIGHT{\RULE'_{11}}
\TOWERA{}{\G}\TOWERA{\W}{\G}
\XRIGHT{\RULE_{3}}
\TOWERA{\G}{\W}\TOWERA{}{\W}
\]
That is, the robots repeatedly change their states while keeping their positions. This implies that they cannot explore the ring, and thus this contradicts to Lemma \ref{lem:at-imp-perp}.
\end{itemize}
From the above discussion, for any rule included in $\ALG$, some configuration in $\CONF_{sol}$ becomes unsolvable or exploration of three robots becomes impossible. Therefore, if $\ALG$ includes $\{\RULE_2,\RULE_3,\RULE_7\}$, $\ALG$ cannot solve terminating exploration from some configuration in $\CONF_{sol}$.

We can prove the case of $\{\RULE_{10},\RULE_4,\RULE_5\}$ similarly because, in rules $\RULE_{10}$, $\RULE_4$, and $\RULE_5$, the roles of colors are just swapped from rules $\RULE_2$, $\RULE_3$, and $\RULE_7$.
\end{proof}

\begin{lemma}
\label{lem:at-imp-last2}
If $\ALG$ includes $\{\RULE_8,\RULE_4,\RULE_1\}$ or $\{\RULE_6,\RULE_3,\RULE_9\}$, $\ALG$ cannot solve terminating exploration from some configuration in $\CONF_{sol}$
\end{lemma}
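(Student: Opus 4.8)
The plan is to follow the template of the proof of Lemma~\ref{lem:at-imp-last1}, exploiting that the two rule sets in the statement are colour-swaps of one another: $\{\RULE_{6},\RULE_{3},\RULE_{9}\}$ is obtained from $\{\RULE_{8},\RULE_{4},\RULE_{1}\}$ by exchanging $\G$ and $\W$, and $\CONF_{sol}$ is closed under this exchange, so it suffices to treat $\{\RULE_{8},\RULE_{4},\RULE_{1}\}$ and then colour-swap. Observe that $\{\RULE_{8},\RULE_{4},\RULE_{1}\}$ are exactly the rules $0\G\W$, $0T\W$, $0T\G$ that drive the perpetual exploration of Algorithm~\ref{alg:ap3}.

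First I would pick a suitable configuration of $\CONF_{sol}$ and let the base rules run as far as they can. Concretely, from $\G\W\G\G\in\CONF_{sol}$ the execution using only $\RULE_{8},\RULE_{4},\RULE_{1}$ is forced: $\RULE_{8}$ is the unique enabled rule and produces $\TOWERA{\G}{\W}\G\G$, then $\RULE_{1}$ is the unique enabled rule and produces $\W\W\G\G$, at which configuration none of $\RULE_{8},\RULE_{4},\RULE_{1}$ is enabled. Since $n$ is large and only a bounded number of nodes has been visited, $\ALG$ must contain a further rule enabled at $\W\W\G\G$; its guard must be one of the four (mutually distinct, all asymmetric) views of the robots in $\W\W\G\G$. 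I would enumerate the candidate rules (each such guard paired with every admissible pair (new colour, movement) except the do-nothing pair), and for every candidate $\RULE'$ I would continue the execution a few more steps, using $\RULE'$ together with $\RULE_{8},\RULE_{4},\RULE_{1}$ and an adversarial SSYNC scheduler, reaching a contradiction in one of three ways: (a) the execution enters a configuration containing a sub-configuration of $\CONF_{sym}$, or a tower holding two robots of the same colour, which is unsolvable by Lemma~\ref{lem:ssync-unsolvable-1} or Lemma~\ref{lem:ssync-unsolvable-2} (this subsumes every $\RULE'$ that pushes a robot onto an empty node and isolates it, via Lemma~\ref{lem:imp-onenode}); (b) the guard of $\RULE'$ coincides with a view that occurs while three robots run the exploration cycle on $\CONF_{exp}$, so that adding $\RULE'$ either drives the exploring triple out of $\CONF_{exp}$ (contradicting Lemma~\ref{lem:at-imp-form}) or makes it revisit a configuration with zero net displacement (contradicting Lemma~\ref{lem:at-imp-perp}); (c) $\RULE'$ produces a bounded cycle of colour changes and/or back-and-forth moves with no net displacement, so exploration never terminates. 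In each case this contradicts the standing assumption that $\ALG$ solves terminating exploration from every configuration of $\CONF_{sol}$, in particular from $\G\W\G\G$. Colour-swapping then yields the same contradiction for $\{\RULE_{6},\RULE_{3},\RULE_{9}\}$ via the swapped configuration $\W\G\W\W\in\CONF_{sol}$.

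The bulk of the work, and the main obstacle, is the exhaustive case analysis over the extra rules $\RULE'$ enabled at the stuck configuration. Two points make it delicate. First, for several candidates one cannot conclude immediately: the execution must be chased through several further configurations (using $\RULE_{8},\RULE_{4},\RULE_{1}$ to carry the three-robot part forward) before a $\CONF_{sym}$ sub-configuration or a monochromatic tower surfaces, so the positions and colours of all four robots have to be tracked carefully, and the scheduler must be specified so as to force exactly this bad execution. Second, the arguments of type (b) require knowing precisely which views arise during the $\CONF_{exp}$ cycle under $\{\RULE_{8},\RULE_{4},\RULE_{1}\}$, so that the new guard can be shown to genuinely collide with one of them and therefore genuinely disrupt the perpetual exploration that Lemmas~\ref{lem:at-imp-perp}--\ref{lem:at-imp-form} make mandatory. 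Grouping the candidate rules by their qualitative effect (isolating a robot, creating an equal-colour tower, colliding with a $\CONF_{exp}$ view, or looping in place) should keep the number of genuinely different sub-cases manageable, just as the eleven cases $\RULE'_{1},\dots,\RULE'_{11}$ do in the proof of Lemma~\ref{lem:at-imp-last1}.
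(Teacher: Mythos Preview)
Your overall strategy (colour-swap reduction, then an exhaustive case analysis on the extra rule needed once the base rules $\{\RULE_8,\RULE_4,\RULE_1\}$ stall, with contradictions of types (a)--(c)) is the same as the paper's. But there is a concrete gap in how you set up the enumeration.

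You commit to tracing from a \emph{single} starting configuration ($\G\W\G\G$, which immediately collapses to $\W\W\G\G$) and, for each candidate $\RULE'$, continuing that one execution. The paper instead starts its enumeration at $\G\G\W\W$ and, for many of the twenty candidate rules, switches to a \emph{different} member of $\CONF_{sol}$ tailored to that rule: $\G\W\G\G$ for $\RULE''_{2},\RULE''_{3},\RULE''_{4}$; $\G\G\G\W$ for $\RULE''_{5}$; $\G\G\W\G$ for $\RULE''_{6}$--$\RULE''_{10}$; $\W\G\W\W$ for $\RULE''_{15}$. This is not cosmetic. Your type-(b) mechanism only bites when the guard of $\RULE'$ is a view that actually occurs in the three-robot cycle $\W\W\G\to\W\TOWERA{\G}{\W}\to\TOWERA{\W}{\G}\G\to\W\W\G$ driven by $\{\RULE_8,\RULE_4,\RULE_1\}$; that covers the guards $\G(\W)\W$ and $\EMP(\W)\W$, but \emph{not} $\EMP(\G)\G$ or $\G(\G)\W$, since the cycle never passes through $\G\G\W$. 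For those guards your single-trace plan typically stalls again at a new configuration where none of $\{\RULE_8,\RULE_4,\RULE_1,\RULE'\}$ applies (for instance, with $\RULE''_{2}:\EMP(\G)\G::\W,\RIGHT$ you reach $\W\TOWERA{\W}{\G}\G$ and are stuck), forcing a second layer of rule enumeration that you have not planned for and that need not terminate cleanly.

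The fix is exactly what the paper does: for each candidate rule, choose the member of $\CONF_{sol}$ that makes that rule fire in a way that lands directly in $\CONF_{sym}$, creates a monochromatic tower, or collides with a $\CONF_{exp}$ view. The freedom to pick the starting configuration per case is the missing ingredient in your plan.
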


\begin{proof}
First, we consider the case that $\ALG$ includes $\{\RULE_8,\RULE_4,\RULE_1\}$. Let us consider configuration $\G\G\W\W\in\CONF_{sol}$. Assume that robots $r_1$, $r_2$, $r_3$, and $r_4$ form the sub-configuration $\G\G\W\W$ in this order. Since robots cannot move from this configuration by rules $\RULE_8$, $\RULE_4$, and $\RULE_1$, $\ALG$ includes some other rules to move robots from the configuration. We consider all possibilities in the following.

First consider the case that $r_1$ can move. That is, $\ALG$ includes a rule such that the guard is $\EMP(\G)\G$.
\begin{itemize}
\item $\ALG$ includes $\RULE''_{1} : \EMP(\G)\G::\G,\RIGHT$. In this case, robots change configuration $\G\G\W\W\in\CONF_{sol}$ to configuration $\TOWERA{\G}{\G}\TOWERA{}{\W}\TOWERA{}{\W}$, which is unsolvable from Lemma \ref{lem:ssync-unsolvable-2}.
\item $\ALG$ includes $\RULE''_{2} : \EMP(\G)\G::\W,\RIGHT$. In this case, from rules $\RULE''_{2}$ and $\RULE_{8}$, robots change configuration $\G\W\G\G\in\CONF_{sol}$ to configuration $\TOWERA{\G}{\W}\TOWERA{\G}{\W}$, which is unsolvable from Lemma \ref{lem:ssync-unsolvable-1}.
\item $\ALG$ includes $\RULE''_{3} : \EMP(\G)\G::\G,\LEFT$. In this case, from rules $\RULE''_{3}$ and $\RULE_{8}$, robots change their states from configuration $\G\W\G\G\in\CONF_{sol}$ as follows.
\[
\G\W\G\G
\XRIGHT{\RULE''_{3}}
\G\W\G\EMP \G 
\XRIGHT{\RULE_{8}}
\TOWERB{\G}{\G}{\W}\TOWERB{}{}{\EMP}\TOWERB{}{}{\EMP}\TOWERB{}{}{\G}
\]
The last configuration is unsolvable from Lemma \ref{lem:ssync-unsolvable-2}.
\item $\ALG$ includes $\RULE''_{4} : \EMP(G)G::\W,\LEFT$. In this case, from rules $\RULE''_{4}$ and $\RULE_{8}$, robots change their states from configuration $\G\W\G\G\in\CONF_{sol}$ as follows.
\[
\G\W\G\G 
\XRIGHT{\RULE''_{4}}
\G\W\G\EMP \W
\XRIGHT{\RULE_{8}}
\TOWERB{\G}{\G}{\W}\TOWERB{}{}{\EMP}\TOWERB{}{}{\EMP}\TOWERB{}{}{\W}
\]
The last configuration is unsolvable from Lemma \ref{lem:ssync-unsolvable-2}.
\item $\ALG$ includes $\RULE''_{5} : \EMP(\G)\G::\W,\bot$. In this case, robots change configuration $\G\G\G\W\in\CONF_{sol}$ to configuration $\W\G\G\W$, which is unsolvable from Lemma \ref{lem:ssync-unsolvable-1}.
\end{itemize}

Next consider the case that $r_2$ can move. That is, $\ALG$ includes a rule such that the guard is $\G(\G)\W$.
\begin{itemize}
\item $\ALG$ includes $\RULE''_{6} : \G(\G)\W::\G,\RIGHT$. In this case, from rules $\RULE''_{6}$ and $\RULE_8$, robots change configuration $\G\G\W\G\in\CONF_{sol}$ to configuration $\TOWERB{}{}{\G}\TOWERB{}{}{\EMP}\TOWERB{\G}{\G}{\W}$, which is unsolvable from Lemma \ref{lem:ssync-unsolvable-2}.
\item $\ALG$ includes $\RULE''_{7} : \G(\G)\W::\W,\RIGHT$. In this case, robots change configuration $\G\G\W\G\in\CONF_{sol}$ to configuration $\TOWERA{}{\G}\TOWERA{}{\EMP}\TOWERA{\W}{\W}\TOWERA{}{\G}$, which is unsolvable from Lemma \ref{lem:ssync-unsolvable-2}.
\item $\ALG$ includes $\RULE''_{8} : \G(\G)\W::\G,\LEFT$. In this case, robots change configuration $\G\G\W\G\in\CONF_{sol}$ to configuration $\TOWERA{\G}{\G}\TOWERA{}{\EMP}\TOWERA{}{\W}\TOWERA{}{\G}$, which is unsolvable from Lemma \ref{lem:ssync-unsolvable-2}.
\item $\ALG$ includes $\RULE''_{9} : \G(\G)\W::\W,\LEFT$. In this case, from rules $\RULE''_{9}$ and $\RULE_8$, robots change configuration $\G\G\W\G\in\CONF_{sol}$ to configuration $\TOWERA{\W}{\G}\TOWERA{}{\EMP}\TOWERA{\W}{\G}$. Since two robots with the same color have the same view, they make the same behaviors if they are activated at the same time. In addition, the view is symmetric, the direction of the movement is decided by the scheduler. Hence, when some robot moves from the configuration, two robots move to the middle node of the two towers and create a tower. Since they have the same color, the configuration is unsolvable from Lemma \ref{lem:ssync-unsolvable-2}.
\item $\ALG$ includes $\RULE''_{10} : \G(\G)\W::\W,\bot$. In this case, robots change configuration $\G\G\W\G\in\CONF_{sol}$ to configuration $\G\W\W\G$, which is unsolvable from Lemma \ref{lem:ssync-unsolvable-1}.
\end{itemize}

Next consider the case that $r_3$ can move. That is, $\ALG$ includes a rule such that the guard is $\G(\W)\W$. In the first four cases, we consider configurations from $C_{t_1}$ to $C_{t_2}$. During these configurations, three robots change their states by rules $\RULE_8$, $\RULE_4$, and $\RULE_1$. In particular, we consider a sub-configuration $\G\W\W$ in $\CONF_{exp}$.
\begin{itemize}
\item $\ALG$ includes $\RULE''_{11} : \G(\W)\W::\G,\RIGHT$. From a sub-configuration $\G\W\W$, the robots change their states to $\TOWERA{}{\G}\TOWERA{}{\EMP}\TOWERA{\G}{\W}$. This sub-configuration is not in $\CONF_{exp}$, which contradicts to Lemma \ref{lem:at-imp-form}.
\item $\ALG$ includes $\RULE''_{12} : \G(\W)\W::\W,\RIGHT$. From a sub-configuration $\G\W\W$, the robots change their states to $\TOWERA{}{\G}\TOWERA{}{\EMP}\TOWERA{\W}{\W}$. This sub-configuration is not in $\CONF_{exp}$, which contradicts to Lemma \ref{lem:at-imp-form}.
\item $\ALG$ includes $\RULE''_{13} : \G(\W)\W::\G,\LEFT$. From a sub-configuration $\G\W\W$, the robots change their states to $\TOWERA{\G}{\G}\TOWERA{}{\EMP}\TOWERA{}{\W}$. This sub-configuration is not in $\CONF_{exp}$, which contradicts to Lemma \ref{lem:at-imp-form}.
\item $\ALG$ includes $\RULE''_{14} : \G(\W)\W::\W,\LEFT$. From a sub-configuration $\G\W\W$, the robots change their states to $\TOWERA{\W}{\G}\TOWERA{}{\EMP}\TOWERA{}{\W}$. This sub-configuration is not in $\CONF_{exp}$, which contradicts to Lemma \ref{lem:at-imp-form}.
\item $\ALG$ includes $\RULE''_{15} : \G(\W)\W::\G,\bot$. In this case, robots change configuration $\W\G\W\W\in\CONF_{sol}$ to configuration $\W\G\G\W$, which is unsolvable from Lemma \ref{lem:ssync-unsolvable-1}.
\end{itemize}

Lastly consider the case that $r_4$ can move. That is, $\ALG$ includes a rule such that the guard is $\EMP(\W)\W$. Similarly to the previous case, we consider a sub-configuration $\G\W\W$ in $\CONF_{exp}$.
\begin{itemize}
\item $\ALG$ includes $\RULE''_{16} : \EMP(\W)\W::\G,\RIGHT$. From rules $\RULE''_{16}$ and $\RULE_1$, robots change their states from a sub-configuration $\G\W\W$ as follows.
\[
\G\W\W
\XRIGHT{\RULE''_{16}}
\TOWERA{}{\G}\TOWERA{\W}{\G}\TOWERA{}{\EMP}
\XRIGHT{\RULE_1}
\G\W\W
\]
That is, the robots repeatedly change their states while keeping their positions. This implies that they cannot explore the ring, and thus this contradicts to Lemma \ref{lem:at-imp-perp}.
\item $\ALG$ includes $\RULE''_{17} : \EMP(\W)\W::\W,\RIGHT$. From a sub-configuration $\G\W\W$, the robots change their states to $\TOWERA{}{\G}\TOWERA{\W}{\W}$. This sub-configuration is not in $\CONF_{exp}$, which contradicts to Lemma \ref{lem:at-imp-form}.
\item $\ALG$ includes $\RULE''_{18} : \EMP(\W)\W::\G,\LEFT$. From a sub-configuration $\G\W\W$, the robots change their states to $\G\W\EMP \G$. This sub-configuration is not in $\CONF_{exp}$, which contradicts to Lemma \ref{lem:at-imp-form}.
\item $\ALG$ includes $\RULE''_{19} : \EMP(\W)\W::\W,\LEFT$. From a sub-configuration $\G\W\W$, the robots change their states to $\G\W\EMP \W$. This sub-configuration is not in $\CONF_{exp}$, which contradicts to Lemma \ref{lem:at-imp-form}.
\item $\ALG$ includes $\RULE''_{20} : \EMP(\W)\W::\G,\bot$. From a sub-configuration $\G\W\W$, the robots change their states to $\G\W\G$. This sub-configuration is not in $\CONF_{exp}$, which contradicts to Lemma \ref{lem:at-imp-form}.
\end{itemize}

From the above discussion, for any rule included in $\ALG$, some configuration in $\CONF_{sol}$ becomes unsolvable or exploration of three robots becomes impossible. Therefore, if $\ALG$ includes $\{\RULE_8,\RULE_4,\RULE_1\}$, $\ALG$ cannot solve terminating exploration from some configuration in $\CONF_{sol}$

We can prove the case of $\{\RULE_6,\RULE_3,\RULE_9\}$ similarly because, in rules $\RULE_6$, $\RULE_3$, and $\RULE_9$, the roles of colors are just swapped from rules  $\RULE_8$, $\RULE_4$, and $\RULE_1$.
\end{proof}

From Lemmas \ref{lem:at-imp-last1} and \ref{lem:at-imp-last2}, $\ALG$ cannot solve terminating exploration from some configuration in $\CONF_{sol}$. Therefore, we have Lemma \ref{lem:no-universal-lemma}. As described above, this implies
Theorem \ref{thm:no-universal-ssync}.

\end{document}